\documentclass[aps,prd,preprint,nofootinbib,floatfix,10pt]{revtex4-2}
\usepackage[T1]{fontenc}
\usepackage{lmodern}
\usepackage[utf8]{inputenc}
\usepackage{amsmath,amssymb,amsfonts,mathtools}
\usepackage{microtype}
\usepackage{xcolor}
\usepackage{enumitem}
\usepackage{amsthm}
\usepackage{tikz}
\usetikzlibrary{arrows.meta}
\usepackage{hyperref}

\usepackage{mathrsfs}
\hypersetup{
 colorlinks=true,
 linkcolor=blue,
 citecolor=blue,
 urlcolor=blue,
 pdftitle={Nonlinear Degrees of Freedom of Born-Infeld New Massive Gravity},
 pdfauthor={Zeynep Su Koc and Bayram Tekin},
 pdfsubject={Nonlinear Hamiltonian and Dirac constraint analysis of Born-Infeld New Massive Gravity}
}
\usepackage{etoolbox}

\makeatletter
\ifdefined\frontmatter@abstract@produce
\patchcmd{\frontmatter@abstract@produce}
  {\penalty-200\relax}
  {\penalty10000\relax}
  {}
  {\PackageWarning{BINMG}{Could not patch abstract page break}}
\fi
\makeatother
\newcommand{\dd}{\mathrm d}

\newcommand{\weak}{\approx}

\newtheorem{theorem}{Theorem}

\newtheorem{proposition}{Proposition}

\theoremstyle{remark}

\begin{document}

\title{Nonlinear Degrees of Freedom of Born--Infeld New Massive Gravity}

\author{Zeynep Su Koc}
\email{su.koc@bilkent.edu.tr}
\author{Bayram Tekin}
\email{bayram.tekin@bilkent.edu.tr}
\affiliation{Department of Physics, Bilkent University, 06800 Ankara, T\"urkiye}

\date{\today}

\begin{abstract}
Born--Infeld New Massive Gravity (BINMG) is an all-order curvature extension
of New Massive Gravity in three dimensions, whose linearized spectrum around
its unique maximally symmetric vacuum contains the two polarizations of a massive
spin-two field.  Whether an additional scalar degree of freedom appears in
the full nonlinear theory has remained an open question.  We perform a
nonlinear canonical analysis of BINMG using its auxiliary-metric
formulation.  On the regular branch connected to the maximally symmetric
vacuum, the Dirac algorithm yields three first-class diffeomorphism
constraints and a second-class pair of scalar constraints.  The resulting
count gives exactly two local degrees of freedom per spatial point. Thus, nonlinear BINMG propagates no additional Boulware--Deser-type scalar mode on
this branch.
\end{abstract}
\maketitle

\section{Introduction}
\label{sec:introduction}

To gain some understanding of gravity, and of quantum gravity in
particular, one may offer a gambit, as in chess: give up a dimension in
exchange for the initiative.  The sacrifice is real.  In three spacetime
dimensions, the Weyl tensor vanishes identically, the Riemann tensor is
determined algebraically by the Ricci tensor, and pure Einstein gravity therefore has no local propagating graviton; in Hamiltonian language, the
six phase-space variables of the spatial metric per point are removed
entirely by one normal and two tangential first-class diffeomorphism
constraints.  But so is the compensation: the theory still has black holes
\cite{BTZ}, and it becomes soluble in regimes where its four-dimensional
counterpart is intractable \cite{Witten}.  Three dimensions also admit
several consistent ways of giving gravity local massive excitations.  One
consequence of the gambit should be kept in mind: since the massless
graviton has been conceded, any local bulk degree of freedom of a
three-dimensional higher-curvature theory must come from the deformation
itself.

The first nonlinear massive gravity theory in three dimensions
was Topologically Massive Gravity (TMG), which propagates one parity-odd
massive helicity \cite{DeserJackiwTempleton}.  It is a dynamical theory of
gravity, but with a single local degree of freedom, it does not directly
resemble four-dimensional parity-even Einstein gravity, which has two
massless graviton polarizations.  Almost thirty years later, New Massive
Gravity (NMG) was discovered: a parity-even theory propagating the two
polarizations of a massive spin-two field \cite{BHT2009,BHTMore}. The special feature of
NMG is that its curvature-squared term is tuned so that the scalar mode, which would be present in a generic curvature-squared theory, is removed.  The nonlinear question, however, is not answered by the linearized spectrum alone \cite{GulluTek,GSTCanonical,Tekin2016}.  A theory may have the correct perturbative spin-two spectrum around a special background and still possess an additional scalar mode in the full nonlinear phase space.  This is the Hamiltonian form of the familiar Boulware--Deser issue in nonlinear massive gravity \cite{BoulwareDeser}. For related difficulties that arise when a Fierz--Pauli mass term is added
to TMG, see \cite{MandM}.

Born--Infeld New Massive Gravity (BINMG) is an all-order curvature completion
of NMG \cite{GSTBINMG}. Its small-curvature expansion reproduces the NMG
combination at quadratic order, while its cubic and higher-order terms are
compatible with AdS/CFT requirements \cite{Sinha,Gullu:2010st,Paulos0}.
In contrast to NMG, whose maximally symmetric vacuum equation is quadratic
and generically has two branches, BINMG has a unique maximally symmetric
vacuum.  Around this vacuum, the perturbative
spectrum is the expected massive spin-two spectrum (see the Ph.D. thesis
\cite{TahsinTez} and the references therein). We do not repeat that
linearized analysis here.  Our question is the nonlinear one: does the full
Dirac constraint structure \cite{Dirac,Castellani,HenneauxTeitelboim} of
BINMG leave exactly two local degrees of freedom?

For NMG itself, the nonlinear question has been answered affirmatively:
complete constraint analyses in metric and first-order variables
\cite{BlagojevicCvetkovic,SadeghShirzad}, the Hamiltonian form of
three-dimensional massive gravity \cite{HohmRouthTownsendZhang}, and the
Chern--Simons--like framework \cite{CSlike,AfsharBergshoeffMerbis} all
conclude that NMG carries two local degrees of freedom nonlinearly.  Those
methods exploit the fact that NMG admits a first-order reformulation
polynomial in a finite collection of one-form fields; to our knowledge, no
Chern--Simons--like reformulation of BINMG exists because the Born--Infeld
determinant is not polynomial in such variables, so the all-order theory
falls outside that framework. For BINMG, however, the evidence for nonlinear ghost freedom has so far
been indirect: Paulos and Tolley \cite{PaulosTolley} obtained the
theory as a scaling limit of a ghost-free bigravity model, which makes
nonlinear ghost freedom plausible but leaves the intrinsic constraint
structure of BINMG itself unexamined---and limits of ghost-free theories need
not preserve the constraint count automatically (see
Ref.~\cite{deRhamNMG} for how delicate such statements are in
three-dimensional massive gravity).  A direct Dirac analysis of BINMG in
its own variables, with the branch assumptions spelled out, has been
missing.  This paper supplies it.

We answer this question using the auxiliary-metric formulation of BINMG, which was recently provided in \cite{TekinAuxMetric2026}.  This
formulation is algebraically equivalent to the determinant action on the
regular branch where the auxiliary metric is nondegenerate.  The useful
Hamiltonian variable is the densitized inverse auxiliary metric
\begin{equation}
        P^{\mu\nu}
        :=
        \sqrt{-q}\,q^{\mu\nu}.
        \label{eq:Pmunu-intro}
\end{equation}
We decompose \(P^{\mu\nu}\) with respect to the ADM foliation of the physical
metric.  Its spatial block will be denoted by \(S^{ij}\), its mixed
normal-spatial component by \(v^i\), and its normal-normal component by
\(\nu\).  The key step is to write the full ADM form of the auxiliary
action in these variables before the Legendre transform.  The constraint
structure is then visible: \(v^i\) is algebraic on the branch where
\(S^{ij}\) is invertible, while \(\nu\) appears linearly and ends up as the
multiplier of a scalar constraint.

After eliminating the algebraic mixed auxiliary component, the dynamical
canonical variables are
\begin{equation}
        (\gamma_{ij},\rho^{ij}),
        \qquad
        (S^{ij},p_{ij}) .
        \label{eq:canonical-block-intro}
\end{equation}
Since both \(\gamma_{ij}\) and \(S^{ij}\) are symmetric two-dimensional
tensors, this gives twelve canonical phase-space variables per spatial point.
General covariance supplies three first-class diffeomorphism generators on
this reduced dynamical phase space.  They remove six phase-space dimensions,
but this is not enough to reduce the theory to two local degrees of freedom.
Without an additional scalar restriction, the nonlinear theory would have
three.  

The remaining question is therefore the auxiliary scalar sector.  The normal
auxiliary component \(\nu\) imposes a scalar constraint, which we denote by
\(\mathcal C_0\).  Preserving this constraint in time gives a secondary scalar
constraint, denoted by \(\mathcal C_1\).  We show that, on the regular branch
connected to the maximally symmetric vacuum, the pair
\((\mathcal C_0,\mathcal C_1)\) is second class.  Equivalently, preserving
\(\mathcal C_1\) fixes the auxiliary normal variable \(\nu\), not
the physical lapse.  The physical lapse and shift remain arbitrary, as required by
diffeomorphism invariance.

The local degree-of-freedom count on this branch is then
\begin{equation}
        N_{\rm dof}
        =
        \frac12\left(12-2\times3-2\right)
        =
        2.
        \label{eq:intro-dof-count}
\end{equation}
Thus BINMG propagates exactly two local degrees of freedom on the regular
vacuum-connected branch, which is the nonlinear continuation of the massive
spin-two sector.  The result is a branch theorem: degenerate auxiliary
metrics, noninvertible spatial auxiliary blocks, singular ADM charts, and
configurations where the scalar Dirac matrix loses rank are left for a
separate analysis.

The paper is organized as follows.  Section~\ref{sec:BINMG} recalls the
auxiliary formulation.  Section~\ref{sec:nonlinear-hamiltonian-analysis}
contains the Hamiltonian analysis: the ADM decomposition, the Legendre
transform, the diffeomorphism sector, the elimination of \(v^i\), and the
scalar Dirac chain generated by \(\nu\).  Section~\ref{sec:degree-of-freedom-theorem}
states the theorem, and Sec.~\ref{sec:scope-and-discussion} discusses its
domain of validity.

\section{Conventions and notation}
\label{sec:conventions}

We work in three spacetime dimensions with metric signature \((-++)\).
Greek indices \(\mu,\nu,\ldots\) denote spacetime components, while Latin
indices \(i,j,\ldots\) denote components tangent to a spatial slice.  Our
curvature convention is
\begin{equation}
 R^\mu{}_{\nu\rho\sigma}
 =
 \partial_\rho\Gamma^\mu_{\nu\sigma}
 -
 \partial_\sigma\Gamma^\mu_{\nu\rho}
 +
 \Gamma^\mu_{\rho\lambda}\Gamma^\lambda_{\nu\sigma}
 -
 \Gamma^\mu_{\sigma\lambda}\Gamma^\lambda_{\nu\rho},
\end{equation}
with
\begin{equation}
        R_{\nu\sigma}=R^\mu{}_{\nu\mu\sigma},
        \qquad
        R=g^{\nu\sigma}R_{\nu\sigma},
        \qquad
        G_{\mu\nu}=R_{\mu\nu}-\frac12 g_{\mu\nu}R .
\end{equation}
With this convention a round sphere has positive scalar curvature.
Symmetrization has unit weight,
\begin{equation}
        T_{(\mu\nu)}
        =
        \frac12\left(T_{\mu\nu}+T_{\nu\mu}\right).
\end{equation}

The physical metric is decomposed in ADM form
\cite{ADM,Gourgoulhon} as
\begin{equation}
 \dd s_g^2
 =
 -N^2\dd t^2
 +
 \gamma_{ij}
 \left(\dd x^i+N^i\dd t\right)
 \left(\dd x^j+N^j\dd t\right).
 \label{eq:ADMmetric-explained}
\end{equation}
Here \(N\) is the lapse, \(N^i\) is the shift, and \(\gamma_{ij}\) is the
induced metric on the spatial slice \(\Sigma_t\).  We use
\begin{equation}
        N_i:=\gamma_{ij}N^j .
\end{equation}
The future-directed unit normal is chosen as
\begin{equation}
        n_\mu=(-N,0,0),
        \qquad
        n^\mu=\frac1N(1,-N^i),
        \label{eq:normal-explained}
\end{equation}
and the extrinsic curvature is\footnote{Our convention for the sign of the extrinsic curvature is opposite to the one in \cite{Gourgoulhon}.}
\begin{equation}
 K_{ij}
 =
 \frac1{2N}
 \left(
 \dot\gamma_{ij}
 -
 D_iN_j
 -
 D_jN_i
 \right),
 \qquad
 K=\gamma^{ij}K_{ij}.
 \label{eq:K-convention-early}
\end{equation}
With this sign convention, an expanding slicing has \(K>0\).  Spacetime
covariant derivatives are denoted by \(\nabla_\mu\), while \(D_i\) denotes
the spatial covariant derivative compatible with \(\gamma_{ij}\).  The
intrinsic scalar curvature of \(\Sigma_t\) is written as \({}^{(2)}R\).

We use Dirac's weak equality symbol \(\weak\) \cite{Dirac} for equations that
hold only on the constraint surface.  Equations written with an ordinary
equality sign are strong identities in the relevant phase-space region.  We
use \(\doteq\) for equality up to boundary terms.  Unless boundary terms are
displayed explicitly, all integrations by parts in the bulk Hamiltonian
analysis are understood with compactly supported variations or smearings, so
that no surface term contributes.

The auxiliary formulation uses an auxiliary metric \(q_{\mu\nu}\), but the
canonical analysis is simplest in terms of its densitized inverse
\(P^{\mu\nu}=\sqrt{-q}\,q^{\mu\nu}\), whose ADM components are introduced in
Sec.~\ref{sec:ADM-auxiliary-action}.

\section{New Massive Gravity and the nonlinear question}
\label{sec:3d-massive-gravity}

Higher-curvature deformations of three-dimensional Einstein gravity can
propagate local modes because they enlarge the dynamical phase space.
New Massive Gravity (NMG) is the simplest parity-even example.  Its action is
\begin{equation}
        I_{\rm NMG}
        =
        \frac{1}{\kappa^2}
        \int\dd^3x\sqrt{-g}
        \left[
        \sigma R
        -2\lambda_0m^2
        +
        \frac1{m^2}
        \left(
        R_{\mu\nu}R^{\mu\nu}
        -
        \frac38 R^2
        \right)
        \right].
        \label{eq:NMG-action}
\end{equation}
The curvature-squared combination is special.  A generic curvature-squared
theory in three dimensions would propagate, in addition to the massive
spin-two modes, a scalar mode.  The NMG combination removes this scalar at
the linearized level and leaves the two polarizations of a massive spin-two
field \cite{BHT2009,BHTMore,GSTCanonical,Tekin2016}.

This linearized result is not a nonlinear Hamiltonian theorem.
Diffeomorphism invariance is present in both NMG and BINMG, but it does not
fix the number of local degrees of freedom once the dynamical phase space
has been enlarged.  In pure three-dimensional Einstein
gravity the metric phase space has six variables per spatial point, and the
three first-class diffeomorphism generators remove all of them.  In a
massive higher-curvature theory, an equivalent first-order or auxiliary
formulation contains additional canonical variables
\cite{GSTCanonical,DogruThesis}.  The same three gauge symmetries then remove
the gauge redundancy, but not necessarily all of the additional phase-space
directions.

A massive spin-two field in three dimensions carries two local degrees of
freedom.  After the three first-class diffeomorphism generators have been
taken into account, the nonlinear theory must therefore contain a further
constraint that removes the would-be scalar.  In the auxiliary formulation
used below, the question is whether the normal auxiliary component produces
a second-class scalar pair.  It does, on the regular branch connected to the
maximally symmetric vacuum.

\section{Born--Infeld New Massive Gravity and its auxiliary form}
\label{sec:BINMG}

The Born--Infeld extension of New Massive Gravity is defined by \cite{GSTBINMG}
\begin{equation}
        I_{\rm BINMG}
        =
        -\frac{4m^2}{\kappa^2}
        \int \dd^3x
        \left[
        \sqrt{
        -\det\left(
        g_{\mu\nu}
        +
        \frac{\sigma}{m^2}G_{\mu\nu}(g)
        \right)}
        -
        \beta\sqrt{-g}
        \right],
        \label{eq:BINMG-action-background}
\end{equation}
where
\begin{equation}
        \beta:=1-\frac{\lambda_0}{2},
        \qquad
        m^2>0,
        \qquad
        \sigma=\pm1 .
        \label{eq:beta-def}
\end{equation}
Let
\begin{equation}
        A_{\mu\nu}(g)
        :=
        g_{\mu\nu}
        +
        \frac{\sigma}{m^2}G_{\mu\nu}(g).
        \label{eq:A-background-old}
\end{equation}
For the determinant action we restrict to the branch on which
\(A_{\mu\nu}\) is nondegenerate and has Lorentzian signature continuously
connected to that of \(g_{\mu\nu}\).  On this branch the square root in
\eqref{eq:BINMG-action-background} is real.  This nondegeneracy condition
will be one of the regularity assumptions entering the Hamiltonian analysis
below.

The small-curvature expansion of \eqref{eq:BINMG-action-background} begins as
\begin{equation}
 I_{\rm BINMG}
 =
 \frac1{\kappa^2}
 \int\dd^3x\sqrt{-g}
 \left[
 \sigma R
 -
 2\lambda_0m^2
 +
 \frac1{m^2}
 \left(
 R_{\mu\nu}R^{\mu\nu}
 -
 \frac38R^2
 \right)
 +
 O\!\left(\frac{R^3}{m^4}\right)
 \right].
 \label{eq:BINMG-to-NMG-explicit}
\end{equation}
Thus BINMG reproduces NMG at quadratic order in the curvature.  We quote
the expansion only to identify the theory; it plays no role in what
follows.

The determinant form is not convenient for a direct Dirac analysis.  We
therefore use the auxiliary-metric formulation introduced and studied in \cite{TekinAuxMetric2026}.  One introduces an independent symmetric
tensor \(q_{\mu\nu}\) and writes
\begin{equation}
        I_{\rm aux}[g,q]
        =
        -
        \frac{4m^2}{\kappa^2}
        \int \dd^3x
        \left\{
        \frac12\sqrt{-q}
        \left[
        q^{\mu\nu}A_{\mu\nu}(g)-1
        \right]
        -
        \beta\sqrt{-g}
        \right\}.
        \label{eq:q-aux-action-main}
\end{equation}
Here \(q^{\mu\nu}\) is the inverse of \(q_{\mu\nu}\).  Since no derivatives
of \(q_{\mu\nu}\) occur, its equation of motion is algebraic.  Varying
\eqref{eq:q-aux-action-main} with respect to \(q^{\mu\nu}\) gives
\begin{equation}
        A_{\mu\nu}
        -
        \frac12q_{\mu\nu}
        \left(q^{\alpha\beta}A_{\alpha\beta}-1\right)
        =
        0 .
        \label{eq:q-equation-before-trace}
\end{equation}
Taking the trace with \(q^{\mu\nu}\) gives
\begin{equation}
        q^{\mu\nu}A_{\mu\nu}=3,
\end{equation}
and substituting this back into
\eqref{eq:q-equation-before-trace} gives
\begin{equation}
        q_{\mu\nu}=A_{\mu\nu}(g).
        \label{eq:q-equals-A}
\end{equation}
Therefore \(I_{\rm aux}\) is algebraically equivalent to the determinant
action on the nondegenerate branch.  The full list of regularity conditions
is given in Sec.~\ref{sec:vacuum-scalar-rank}.

For the Hamiltonian analysis it is better not to use \(q_{\mu\nu}\) itself,
but its densitized inverse,
\begin{equation}
        P^{\mu\nu}
        :=
        \sqrt{-q}\,q^{\mu\nu}.
        \label{eq:P-def-pedagogical}
\end{equation}
In three spacetime dimensions,
\begin{equation}
        \sqrt{-q}=-\det P,
\end{equation}
with the sign corresponding to the Lorentzian branch.  Up to the overall
constant factor \(-4m^2/\kappa^2\), the auxiliary Lagrangian density can then
be written as
\begin{equation}
        \mathcal L
        =
        \frac12P^{\mu\nu}g_{\mu\nu}
        +
        cP^{\mu\nu}G_{\mu\nu}(g)
        +
        \frac12\det P
        -
        \beta\sqrt{-g},
        \qquad
        c:=\frac{\sigma}{2m^2}.
        \label{eq:aux-P-action-expanded-pedagogical}
\end{equation}
This is the starting point for the nonlinear Dirac analysis.

One feature of \eqref{eq:aux-P-action-expanded-pedagogical} drives the
whole analysis.  The density is linear in \(P^{\mu\nu}\) except for
\(\det P\), and a determinant is multilinear in the rows of its argument.
After the ADM split performed below, the density is therefore linear in the
normal-normal component \(\nu\) of \(P^{\mu\nu}\) and at most quadratic in
the mixed components \(v^i\).  Linearity in \(\nu\) produces the scalar
constraint \(\mathcal C_0\); the quadratic dependence on \(v^i\) makes the
mixed component algebraically removable wherever the spatial auxiliary block
is invertible.  For a generic auxiliary potential the normal component would
not stay linear, and its field equation would determine it instead of
imposing a constraint.  Linearity alone does not settle the count, however:
one still has to show that preserving \(\mathcal C_0\) produces an
independent secondary constraint and that the pair is second class.  Those
are the nonlinear steps of Sec.~\ref{sec:nonlinear-hamiltonian-analysis}.

We need the maximally symmetric vacuum only through a few relations.  On
such a background,
\begin{equation}
        \bar R_{\mu\nu}=2\Lambda \bar g_{\mu\nu},
        \qquad
        \bar G_{\mu\nu}=-\Lambda \bar g_{\mu\nu},
\end{equation}
so that
\begin{equation}
        \bar A_{\mu\nu}
        =
        a\,\bar g_{\mu\nu},
        \qquad
        a:=1-\frac{\sigma\Lambda}{m^2}.
        \label{eq:a-def-main}
\end{equation}
The auxiliary equation gives
\begin{equation}
        \bar q_{\mu\nu}=a\,\bar g_{\mu\nu}.
        \label{eq:qbar-agbar-main}
\end{equation}
On the same-sign Lorentzian branch, the background field equation gives
\begin{equation}
        \beta=\sqrt a>0,
        \qquad
        a=\beta^2,
        \qquad
        \Lambda=\frac{m^2}{\sigma}(1-\beta^2).
        \label{eq:unique-vacuum-relation}
\end{equation}
Equivalently,
\begin{equation}
        2c\Lambda=1-a.
        \label{eq:two-c-Lambda}
\end{equation}
Since \(\beta=1-\lambda_0/2\), the vacuum relation gives
\begin{equation}
        \Lambda
        =
        \frac{m^2}{\sigma}\,
        \lambda_0
        \left(1-\frac{\lambda_0}{4}\right).
        \label{eq:Lambda-lambda0-relation}
\end{equation}
So the maximally symmetric vacuum is unique, one of the distinguishing
features of BINMG \cite{GSTBINMG,TekinAuxMetric2026}.  On the same-sign
Lorentzian branch \(\beta=\sqrt a>0\), hence \(\lambda_0<2\); we assume
this throughout.  Note that \(\beta\) is still the action parameter of
\eqref{eq:beta-def}; \(\beta=\sqrt a\) is a consequence of the vacuum
equation, not a second definition. See \cite{Gullu:2010st} for a detailed discussion of the vacuum of the theory and the constraints on the parameters. 

The linearized analysis around this vacuum, including the separation into a
nondynamical three-dimensional Einstein sector and a Fierz--Pauli massive
spin-two sector, was carried out in \cite{TekinAuxMetric2026}.  We use it
only as a check.

\subsection{Why the auxiliary formulation is needed}
\label{sec:why-auxiliary-needed}

Why not start from the determinant action itself?  It can be rewritten as
\cite{Gullu:2010st}
\begin{equation}
        I_{\rm BINMG}
        =
        -\frac{4m^2}{\kappa^2}
        \int \dd^3x\,\sqrt{-g}\,
        F(R,\mathscr K,\mathscr S),
        \label{eq:BI-F-form}
\end{equation}
where
\begin{equation}
        F(R,\mathscr K,\mathscr S)
        =
        \sqrt{
        1-\frac{\sigma}{2m^2}
        \left(
        R+\frac{\sigma}{m^2}\mathscr K
        -\frac{1}{12m^4}\mathscr S
        \right)}
        -
        \beta ,
        \label{eq:F-definition}
\end{equation}
with
\begin{equation}
        \mathscr K
        =
        R_{\mu\nu}R^{\mu\nu}
        -
        \frac12R^2,
        \qquad
        \mathscr S
        =
        8R_{\mu\nu}R^{\mu\alpha}R_{\alpha}{}^{\nu}
        -
        6RR_{\mu\nu}R^{\mu\nu}
        +
        R^3 .
        \label{eq:scriptK-scriptS-def}
\end{equation}
We have denoted the curvature invariant by \(\mathscr K\) in order not to
confuse it with the ADM extrinsic curvature \(K_{ij}\).

For the ADM metric, the determinant under the square root may be written in a
compact block form.  Let
\begin{equation}
        \alpha:=\frac{\sigma}{m^2}=2c,
        \qquad
        \mathscr G:=G_{\perp\perp},
        \qquad
        \mathcal M_i:=G_{\perp i},
        \qquad
        \mathcal E^i{}_j:=\gamma^{ik}G_{kj}.
        \label{eq:block-form-defs}
\end{equation}
Then
\begin{equation}
        \det
        \left(
        \delta^\mu{}_\nu
        +
        \alpha G^\mu{}_\nu
        \right)
        =
        \det(\delta^i{}_j+\alpha\mathcal E^i{}_j)
        \left[
        1-\alpha\mathscr G
        +
        \alpha^2
        \mathcal M_i
        (\delta+\alpha\mathcal E)^{-1\,i}{}_{j}
        \mathcal M^j
        \right].
        \label{eq:block-determinant-identity}
\end{equation}
Thus the direct ADM form of the determinant action is
\begin{equation}
        I_{\rm BINMG}
        =
        -\frac{4m^2}{\kappa^2}
        \int \dd t\,\dd^2x\,
        N\sqrt{\gamma}
        \left\{
        \sqrt{
        \det(\delta+\alpha\mathcal E)
        \left[
        1-\alpha\mathscr G
        +
        \alpha^2
        \mathcal M_i
        (\delta+\alpha\mathcal E)^{-1\,i}{}_{j}
        \mathcal M^j
        \right]}
        -
        \beta
        \right\}.
        \label{eq:BI-ADM-determinant-form}
\end{equation}
This is exact but not useful as a Hamiltonian starting point.  The ADM
projections are
\begin{equation}
        \mathscr G
        =
        \frac12
        \left(
        {}^{(2)}R+K^2-K_{ij}K^{ij}
        \right),
        \qquad
        \mathcal M_i
        =
        D_jK^j{}_i-D_iK,
        \label{eq:GC-ADM-projections}
\end{equation}
while the spatial projection \(G_{ij}\), and hence \(\mathcal E^i{}_j\),
contains
\((\partial_t-\mathcal L_{\vec N})K_{ij}\).
So the square root contains the time derivative of the extrinsic curvature,
a nonlinear function of second time derivatives of \(\gamma_{ij}\), and the
constraint structure is buried inside it.  In the auxiliary formulation the
same theory is polynomial, \eqref{eq:aux-P-action-expanded-pedagogical}, the
mixed component of \(P^{\mu\nu}\) is algebraic and the normal component is
linear.  That is why we start there.

\section{Nonlinear Hamiltonian analysis}
\label{sec:nonlinear-hamiltonian-analysis}

\subsection{Strategy}
\label{sec:nonlinear-dof-setup}

We first rewrite the auxiliary action in ADM variables, separating the
physical lapse and shift from the normal, mixed, and spatial components of
the auxiliary density, and put it into first-order form.  We then invert the
Legendre map, write down the canonical Hamiltonian with all auxiliary
variables still present, identify the diffeomorphism sector, and eliminate
\(v^i\).  What remains is the scalar Dirac chain generated by \(\nu\); its
rank fixes the count \eqref{eq:intro-dof-count}.

\subsection{ADM form of the auxiliary action}
\label{sec:ADM-auxiliary-action}

We begin with the polynomial density
\eqref{eq:aux-P-action-expanded-pedagogical}, repeated here for reference:
\begin{equation}
        \mathcal L
        =
        \frac12P^{\mu\nu}g_{\mu\nu}
        +
        cP^{\mu\nu}G_{\mu\nu}(g)
        +
        \frac12\det P
        -
        \beta\sqrt{-g},
        \qquad
        c=\frac{\sigma}{2m^2}.
        \label{eq:aux-P-density-for-ADM}
\end{equation}
The physical metric is written in the ADM form
\eqref{eq:ADMmetric-explained}.  In an adapted chart,
\begin{equation}
        n_\mu=(-N,0,0),
        \qquad
        n^\mu=\frac1N(1,-N^i),
        \qquad
        e_i{}^\mu=\delta_i^\mu ,
\end{equation}
with
\begin{equation}
        g_{\mu\nu}n^\mu n^\nu=-1,
        \qquad
        g_{\mu\nu}n^\mu e_i{}^\nu=0,
        \qquad
        g_{\mu\nu}e_i{}^\mu e_j{}^\nu=\gamma_{ij}.
\end{equation}
Equivalently,
\begin{equation}
        g^{\mu\nu}
        =
        -n^\mu n^\nu
        +
        \gamma^{ij}e_i{}^\mu e_j{}^\nu .
        \label{eq:inverse-metric-frame}
\end{equation}

Since \(P^{\mu\nu}\) is a symmetric contravariant tensor density of weight
\(+1\) \eqref{eq:P-def-pedagogical}, it can be decomposed in the same frame as
\begin{equation}
 P^{\mu\nu}
 =
 \nu n^\mu n^\nu
 +
 2\widetilde v^{\,i}n^{(\mu}e_i{}^{\nu)}
 +
 \widetilde S^{ij}e_i{}^\mu e_j{}^\nu .
 \label{eq:Pdecomp-raw}
\end{equation}
Under spatial coordinate transformations, \(\nu\) is a scalar density of
weight \(+1\), \(\widetilde v^{\,i}\) is a contravariant vector density of
weight \(+1\), and \(\widetilde S^{ij}\) is a symmetric contravariant tensor
density of weight \(+1\).

We absorb one power of the lapse into the mixed and spatial components,
\begin{equation}
        v^i:=\frac{\widetilde v^{\,i}}{N},
        \qquad
        S^{ij}:=\frac{\widetilde S^{ij}}{N}.
        \label{eq:lapse-rescaled-auxiliary-vars}
\end{equation}
which is invertible on a regular ADM chart, \(N\neq0\), and does not change
the spatial tensor character or density weight of \(v^i\) and \(S^{ij}\).
In these variables,
\begin{equation}
 P^{\mu\nu}
 =
 \nu n^\mu n^\nu
 +
 2Nv^i n^{(\mu}e_i{}^{\nu)}
 +
 NS^{ij}e_i{}^\mu e_j{}^\nu .
 \label{eq:Pdecomp-explained}
\end{equation}
We leave \(\nu\) unrescaled; the reason will be clear from
\eqref{eq:full-ADM-auxiliary-Lagrangian-separated}.

We shall use the abbreviations
\begin{equation}
        S:=\gamma_{ij}S^{ij},
        \qquad
        s:=\det S^{ij}.
        \label{eq:S-and-s-def}
\end{equation}
The field \(S^{ij}\) is a contravariant spatial tensor density of weight
\(+1\).  In two spatial dimensions its determinant \(s\) is nevertheless a
scalar.  Explicitly, under an orientation-preserving spatial coordinate
transformation with Jacobian \(J^i{}_j\) and determinant \(J\),
\begin{equation}
        S'^{ij}
        =
        J^{-1}J^i{}_kJ^j{}_lS^{kl}
        \quad\Longrightarrow\quad
        \det S'^{ij}=\det S^{ij}.
        \label{eq:s-is-scalar}
\end{equation}

We can now evaluate the three terms in
\eqref{eq:aux-P-density-for-ADM} that involve \(P^{\mu\nu}\).  First,
\begin{equation}
        P^{\mu\nu}g_{\mu\nu}
        =
        -\nu+NS .
        \label{eq:Pg-explained}
\end{equation}
Second, defining the projections of the Einstein tensor as 
\begin{equation}
        G_{\perp\perp}:=G_{\mu\nu}n^\mu n^\nu,
        \qquad
        G_{\perp i}:=G_{\mu\nu}n^\mu e_i{}^\nu,
        \qquad
        G_{ij}:=G_{\mu\nu}e_i{}^\mu e_j{}^\nu ,
\end{equation}
we have
\begin{equation}
        P^{\mu\nu}G_{\mu\nu}
        =
        \nu G_{\perp\perp}
        +
        2Nv^iG_{\perp i}
        +
        NS^{ij}G_{ij}.
        \label{eq:PG-explained}
\end{equation}
Finally, the determinant is
\begin{equation}
        \det P
        =
        s\left(
        \nu
        -
        Nv^i(S^{-1})_{ij}v^j
        \right),
        \label{eq:blockdet-coordinate-explained}
\end{equation}
where \((S^{-1})_{ij}\) denotes the matrix inverse of \(S^{ij}\).  This
requires the spatial auxiliary block to be invertible, \(s\neq0\).

Substituting these expressions into
\eqref{eq:aux-P-density-for-ADM} gives the full ADM auxiliary Lagrangian:
\begin{equation}
\begin{aligned}
 \mathcal L_{\rm ADM}
 ={}&
 \frac12(-\nu+NS)
 +
 c\left(
 \nu G_{\perp\perp}
 +
 2Nv^iG_{\perp i}
 +
 NS^{ij}G_{ij}
 \right)
 \\
 &+
 \frac{s}{2}
 \left(
 \nu
 -
 Nv^i(S^{-1})_{ij}v^j
 \right)
 -
 \beta N\sqrt\gamma .
\end{aligned}
\label{eq:full-ADM-auxiliary-Lagrangian}
\end{equation}
Equivalently, separating the coefficients of the physical lapse \(N\) and
the auxiliary normal component \(\nu\),
\begin{equation}
\boxed{
\begin{aligned}
 \mathcal L_{\rm ADM}
 ={}&
 N\left[
 \frac12S
 +
 2cv^iG_{\perp i}
 +
 cS^{ij}G_{ij}
 -
 \frac{s}{2}v^i(S^{-1})_{ij}v^j
 -
 \beta\sqrt\gamma
 \right]
 \\
 &+
 \nu\left[
 -\frac12
 +
 cG_{\perp\perp}
 +
 \frac{s}{2}
 \right].
\end{aligned}
}
\label{eq:full-ADM-auxiliary-Lagrangian-separated}
\end{equation}
This is the starting point of the Hamiltonian analysis.  The lapse \(N\) and
the auxiliary normal variable \(\nu\) are distinct fields; the Lagrangian is
linear in \(\nu\) and at most quadratic in \(v^i\), as anticipated in
Sec.~\ref{sec:BINMG}.

We next express the Einstein-tensor projections in terms of the ADM
variables.  With the extrinsic-curvature convention
\eqref{eq:K-convention-early}, the normal-normal and normal-spatial projections are \cite{non}
\begin{align}
 G_{\perp\perp}
 &=
 \frac12
 \left(
 {}^{(2)}R+K^2-K_{ij}K^{ij}
 \right),
 \label{eq:Gpp-explained}
 \\
 G_{\perp i}
 &=
 D_jK^j{}_i-D_iK,
 \label{eq:Gpi-explained}
\end{align}
which do not contain a time derivative of \(K_{ij}\).  With the normal-evolution derivative
\begin{equation}
 \mathscr D_t
 :=
 \partial_t-\mathcal L_{\vec N}.
 \label{eq:Dt-def}
\end{equation}
the spatial projection of the Einstein tensor in \(2+1\) dimensions is
\begin{equation}
\begin{aligned}
 G_{ij}
 ={}&
 \frac1N
 \left[
 \mathscr D_tK_{ij}
 -
 \gamma_{ij}\gamma^{kl}\mathscr D_tK_{kl}
 -
 D_iD_jN
 +
 \gamma_{ij}D^2N
 \right]
 \\
 &+
 KK_{ij}
 -
 2K_i{}^kK_{kj}
 -
 \frac12\gamma_{ij}K^2
 +
 \frac32\gamma_{ij}K_{kl}K^{kl}.
\end{aligned}
\label{eq:Gij-velocity-split}
\end{equation}

The coefficient \(3/2\) in the last term follows from the fact that the
trace \(K=\gamma^{ij}K_{ij}\) itself contains the evolving inverse spatial
metric.  Indeed,
\begin{equation}
 \mathscr D_t\gamma_{ij}=2NK_{ij},
 \qquad
 \mathscr D_t\gamma^{ij}=-2NK^{ij},
\end{equation}
and hence
\begin{equation}
 \mathscr D_tK
 =
 \gamma^{ij}\mathscr D_tK_{ij}
 -
 2N K_{ij}K^{ij}.
 \label{eq:DtK-trace-identity}
\end{equation}
Using this identity, \eqref{eq:Gij-velocity-split} may equivalently be
written as
\begin{equation}
\begin{aligned}
 G_{ij}
 ={}&
 \frac1N
 \left[
 \mathscr D_tK_{ij}
 -
 \gamma_{ij}\mathscr D_tK
 -
 D_iD_jN
 +
 \gamma_{ij}D^2N
 \right]
 \\
 &+
 KK_{ij}
 -
 2K_i{}^kK_{kj}
 -
 \frac12\gamma_{ij}
 \left(
 K^2+K_{kl}K^{kl}
 \right).
\end{aligned}
\label{eq:Gij-exact-geometric}
\end{equation}
There is no intrinsic Einstein-tensor term in the spatial projection because, in two dimensions, the Einstein tensor vanishes identically: 
\begin{equation}
 {}^{(2)}G_{ij}
 =
 {}^{(2)}R_{ij}
 -
 \frac12\gamma_{ij}\,{}^{(2)}R
 \equiv0.
\end{equation}
The intrinsic curvature therefore drops out of \(G_{ij}\); it survives in
\eqref{eq:Gpp-explained} and will enter the scalar constraint.
 
We now contract the spatial projection with \(S^{ij}\).  Define the
trace-adjusted spatial auxiliary variable, which contains the same information  as \(S^{ij}\), as  
\begin{equation}
 \mathcal A^{ij}
 :=
 S^{ij}-S\gamma^{ij},
 \qquad
 S:=\gamma_{ij}S^{ij}.
 \label{eq:Aij-def-explained}
\end{equation}
Then direct substitution of \eqref{eq:Gij-velocity-split} gives
\begin{equation}
\begin{aligned}
 NS^{ij}G_{ij}
 ={}&
 \mathcal A^{ij}\mathscr D_tK_{ij}
 -
 \mathcal A^{ij}D_iD_jN
 \\
 &+
 N\Bigg[
 KS^{ij}K_{ij}
 -
 2S^{ij}K_i{}^kK_{kj}
 -
 \frac S2K^2
 +
 \frac{3S}{2}K_{ij}K^{ij}
 \Bigg].
\end{aligned}
\label{eq:SijGij-Aij-split}
\end{equation}
The factor \(N\) multiplying
\(S^{ij}G_{ij}\) in
\eqref{eq:full-ADM-auxiliary-Lagrangian-separated} cancels the \(1/N\)
multiplying the normal time derivative of \(K_{ij}\).

For later use, define
\begin{equation}
 \mathcal Q_K(S,K)
 :=
 KS^{ij}K_{ij}
 -
 2S^{ij}K_i{}^kK_{kj}
 -
 \frac S2K^2
 +
 \frac{3S}{2}K_{ij}K^{ij}.
 \label{eq:QK-def}
\end{equation}
Substituting
\eqref{eq:Gpp-explained},
\eqref{eq:Gpi-explained}, and
\eqref{eq:SijGij-Aij-split}
into \eqref{eq:full-ADM-auxiliary-Lagrangian-separated} gives the complete
ADM Lagrangian
\begin{equation}
\begin{aligned}
 \mathcal L_{\rm ADM}
 ={}&
 c\mathcal A^{ij}\dot K_{ij}
 -
 c\mathcal A^{ij}\mathcal L_{\vec N}K_{ij}
 -
 c\mathcal A^{ij}D_iD_jN
 \\
 &+
 N\Bigg[
 \frac12S
 -
 \beta\sqrt\gamma
 +
 2cv^i\left(D_jK^j{}_i-D_iK\right)
 -
 \frac{s}{2}v^i(S^{-1})_{ij}v^j
 +
 c\mathcal Q_K(S,K)
 \Bigg]
 \\
 &+
 \nu\Bigg[
 \frac{s-1}{2}
 +
 \frac c2\,{}^{(2)}R
 +
 \frac c2
 \left(
 K^2-K_{ij}K^{ij}
 \right)
 \Bigg].
\end{aligned}
\label{eq:full-ADM-Lagrangian-exact}
\end{equation}
We now remove the time derivative of the extrinsic curvature, which
involves \(\ddot\gamma_{ij}\).  The first term of
\eqref{eq:full-ADM-Lagrangian-exact} is its only occurrence.  Integrating by
parts in time,
\begin{equation}
 c\mathcal A^{ij}\dot K_{ij}
 =
 \partial_t\left(c\mathcal A^{ij}K_{ij}\right)
 -
 c\dot{\mathcal A}^{ij}K_{ij}.
\end{equation}
Using the convention \(\doteq\) introduced in
Sec.~\ref{sec:conventions} for equality modulo boundary terms,
\begin{equation}
 c\mathcal A^{ij}\dot K_{ij}
 \doteq
 -
 c\dot{\mathcal A}^{ij}K_{ij}.
 \label{eq:time-IBP-AK}
\end{equation}
We also remove derivatives of the physical shift from the explicit
Lie-derivative term.  Since
\begin{equation}
 (\mathcal L_{\vec N}K)_{ij}
 =
 N^kD_kK_{ij}
 +
 K_{kj}D_iN^k
 +
 K_{ik}D_jN^k,
\end{equation}
symmetry of \(\mathcal A^{ij}\) and \(K_{ij}\) gives
\begin{align}
 -
 \mathcal A^{ij}\mathcal L_{\vec N}K_{ij}
 &=
 -
 N^k\mathcal A^{ij}D_kK_{ij}
 -
 2\mathcal A^{ij}K_{kj}D_iN^k
 \nonumber\\
 &\doteq
 N^k
 \left[
 2D_i\!\left(\mathcal A^{ij}K_{kj}\right)
 -
 \mathcal A^{ij}D_kK_{ij}
 \right].
\end{align}
We therefore define
\begin{equation}
 \mathcal J_k
 :=
 2D_i\!\left(\mathcal A^{ij}K_{kj}\right)
 -
 \mathcal A^{ij}D_kK_{ij}.
 \label{eq:Jk-def}
\end{equation}

Finally, the term containing second spatial derivatives of the lapse may be
integrated twice by parts:
\begin{align}
 -
 \mathcal A^{ij}D_iD_jN
 &\doteq
 (D_i\mathcal A^{ij})D_jN
 \doteq
 -
 N D_jD_i\mathcal A^{ij}.
\end{align}
The resulting first-order bulk Lagrangian is
\begin{equation}
\begin{aligned}
 \mathcal L_{\rm 1st}
 ={}&
 -
 c\dot{\mathcal A}^{ij}K_{ij}
 +
 cN^i\mathcal J_i
 \\
 &+
 N\Bigg[
 \frac12S
 -
 \beta\sqrt\gamma
 +
 2cv^i\left(D_jK^j{}_i-D_iK\right)
 -
 \frac{s}{2}v^i(S^{-1})_{ij}v^j
 \\
 &\hspace{19mm}
 +
 c\mathcal Q_K(S,K)
 -
 cD_iD_j\mathcal A^{ij}
 \Bigg]
 \\
 &+
 \nu\Bigg[
 \frac{s-1}{2}
 +
 \frac c2\,{}^{(2)}R
 +
 \frac c2
 \left(
 K^2-K_{ij}K^{ij}
 \right)
 \Bigg].
\end{aligned}
\label{eq:first-order-Lagrangian-exact}
\end{equation}
``First order'' means that no \(\dot K_{ij}\), hence no second time
derivative of the spatial metric, remains.  The two densities differ by
boundary terms only:
\begin{equation}
 \mathcal L_{\rm ADM}
 =
 \mathcal L_{\rm 1st}
 +
 \partial_t\left(c\mathcal A^{ij}K_{ij}\right)
 +
 D_i\mathcal B^i,
 \label{eq:first-order-boundary-relation}
\end{equation}
where
\begin{equation}
 \mathcal B^i
 =
 c\left[
 N D_j\mathcal A^{ij}
 -
 \mathcal A^{ij}D_jN
 -
 2N^k\mathcal A^{ij}K_{kj}
 \right].
 \label{eq:spatial-boundary-density}
\end{equation}
The velocity structure is now visible.  The six dynamical velocities are
those of \(\gamma_{ij}\) and \(\mathcal A^{ij}\):
\[
 \dot\gamma_{ij},
 \qquad
 \dot{\mathcal A}^{ij}.
\]
None of \(\dot N\), \(\dot N^i\), \(\dot\nu\), \(\dot v^i\) occurs, so the
Legendre map lives entirely in this six-dimensional block.

The mixed kinetic term takes the particularly simple form
\begin{equation}
 -
 c\dot{\mathcal A}^{ij}K_{ij}
 =
 -
 \frac{c}{2N}
 \dot{\mathcal A}^{ij}
 \left(
 \dot\gamma_{ij}-D_iN_j-D_jN_i
 \right).
 \label{eq:mixed-kinetic-exact}
\end{equation}
The momentum conjugate to the temporary configuration variable
\(\mathcal A^{ij}\) is
\begin{equation}
 \Pi_{ij}
 :=
 \frac{\partial\mathcal L_{\rm 1st}}
 {\partial\dot{\mathcal A}^{ij}} .
 \label{eq:Pi-def}
\end{equation}
It follows immediately that
\begin{equation}
 \Pi_{ij}=-cK_{ij}.
 \label{eq:Pi-minus-cK-preview}
\end{equation}
Hence, on a regular ADM chart,
\begin{equation}
 \dot\gamma_{ij}
 =
 -\frac{2N}{c}\Pi_{ij}
 +
 D_iN_j+D_jN_i .
 \label{eq:dotgamma-from-Pi}
\end{equation}
So \(\Pi_{ij}\) determines \(\dot\gamma_{ij}\).  The remaining three
velocities, in \(\dot{\mathcal A}^{ij}\), are determined by the momenta
conjugate to \(\gamma_{ij}\).

\subsection{Velocity Hessian, Legendre map, and canonical variables}
\label{sec:velocity-block-explained}

Before differentiating with respect to the velocities we move all spatial
derivatives off \(K_{ij}\), so that the dependence on \(\dot\gamma_{ij}\)
becomes algebraic.  For the shift term in
\eqref{eq:first-order-Lagrangian-exact}, up to boundary terms,
\begin{equation}
        N^i\mathcal J_i
        \doteq
        K_{ij}(\mathcal L_{\vec N}\mathcal A)^{ij},
        \label{eq:shift-J-to-LieA}
\end{equation}
where, because \(\mathcal A^{ij}\) is a contravariant tensor density of
weight \(+1\),
\begin{equation}
\begin{aligned}
        (\mathcal L_{\vec N}\mathcal A)^{ij}
        ={}&
        N^kD_k\mathcal A^{ij}
        -
        \mathcal A^{kj}D_kN^i
        -
        \mathcal A^{ik}D_kN^j
        +
        \mathcal A^{ij}D_kN^k .
\end{aligned}
        \label{eq:Lie-A-density}
\end{equation}
Similarly, for the \(v^i\) term,
\begin{align}
        2Nv^i
        \left(
        D_jK^j{}_i-D_iK
        \right)
        &\doteq
        2K_{ij}
        \left[
        \gamma^{ij}D_k(Nv^k)
        -
        D^{(i}(Nv^{j)})
        \right].
        \label{eq:vK-integration-by-parts}
\end{align}
The first-order Lagrangian then reads
\begin{equation}
\boxed{
\begin{aligned}
 \mathcal L_{\rm Leg}
 ={}&
 -cK_{ij}
 \left[
 \dot{\mathcal A}^{ij}
 -
 (\mathcal L_{\vec N}\mathcal A)^{ij}
 \right]
 \\
 &+
 2cK_{ij}
 \left[
 \gamma^{ij}D_k(Nv^k)
 -
 D^{(i}(Nv^{j)})
 \right]
 \\
 &+
 Nc\,\mathcal Q_K(S,K)
 +
 \frac{c\nu}{2}
 \left(
 K^2-K_{ij}K^{ij}
 \right)
 \\
 &+
 N\left[
 \frac12S
 -
 \beta\sqrt\gamma
 -
 \frac{s}{2}v^i(S^{-1})_{ij}v^j
 -
 cD_iD_j\mathcal A^{ij}
 \right]
 \\
 &+
 \nu\left[
 \frac{s-1}{2}
 +
 \frac c2\,{}^{(2)}R
 \right].
\end{aligned}}
\label{eq:Legendre-ready-Lagrangian}
\end{equation}
with \(\mathcal Q_K\) from \eqref{eq:QK-def}.

The momentum conjugate to \(\mathcal A^{ij}\) was obtained in
Eq.~\eqref{eq:Pi-minus-cK-preview} and Eq.~\eqref{eq:dotgamma-from-Pi} gives \(\dot\gamma_{ij}\) in terms of
\(\Pi_{ij}\), the lapse, and the shift.  The momentum conjugate to
\(\gamma_{ij}\), at fixed \(\mathcal A^{ij}\), is
\begin{equation}
        \pi^{ij}
        :=
        \left.
        \frac{\partial\mathcal L_{\rm Leg}}
        {\partial\dot\gamma_{ij}}
        \right|_{\mathcal A}.
        \label{eq:pi-temporary-def}
\end{equation}
Since
\begin{equation}
        \frac{\partial K_{kl}}
        {\partial\dot\gamma_{ij}}
        =
        \frac1{2N}
        \delta_{(k}{}^i\delta_{l)}{}^j,
        \label{eq:dK-ddotgamma}
\end{equation}
we may equivalently write
\begin{equation}
        \pi^{ij}
        =
        \frac1{2N}
        \frac{\partial\mathcal L_{\rm Leg}}
        {\partial K_{ij}} .
        \label{eq:pi-from-K-derivative}
\end{equation}
The derivative of the quadratic expression \(\mathcal Q_K\) is
\begin{equation}
\begin{aligned}
 \frac{\partial\mathcal Q_K}{\partial K_{ij}}
 ={}&
 \left(
 S^{ab}K_{ab}-SK
 \right)\gamma^{ij}
 +
 KS^{ij}
 -
 4S^{k(i}K_k{}^{j)}
 +
 3SK^{ij}.
\end{aligned}
\label{eq:dQK-dK}
\end{equation}
Then we have 
\begin{equation}
\begin{aligned}
 \pi^{ij}
 ={}&
 -\frac{c}{2N}
 \left[
 \dot{\mathcal A}^{ij}
 -
 (\mathcal L_{\vec N}\mathcal A)^{ij}
 \right]
 \\
 &+
 \frac{c}{N}
 \left[
 \gamma^{ij}D_k(Nv^k)
 -
 D^{(i}(Nv^{j)})
 \right]
 \\
 &+
 \frac c2
 \left[
 \left(
 S^{ab}K_{ab}-SK
 \right)\gamma^{ij}
 +
 KS^{ij}
 -
 4S^{k(i}K_k{}^{j)}
 +
 3SK^{ij}
 \right]
 \\
 &+
 \frac{c\nu}{2N}
 \left(
 K\gamma^{ij}-K^{ij}
 \right).
\end{aligned}
\label{eq:pi-explained-correct}
\end{equation}
Solving this for the remaining three velocities gives
\begin{equation}
\begin{aligned}
 \dot{\mathcal A}^{ij}
 ={}&
 (\mathcal L_{\vec N}\mathcal A)^{ij}
 +
 2\left[
 \gamma^{ij}D_k(Nv^k)
 -
 D^{(i}(Nv^{j)})
 \right]
 -
 \frac{2N}{c}\pi^{ij}
 \\
 &+
 N\left[
 \left(
 S^{ab}K_{ab}-SK
 \right)\gamma^{ij}
 +
 KS^{ij}
 -
 4S^{k(i}K_k{}^{j)}
 +
 3SK^{ij}
 \right]
 \\
 &+
 \nu
 \left(
 K\gamma^{ij}-K^{ij}
 \right).
\end{aligned}
\label{eq:Adot-solved-exact}
\end{equation}
Using \(K_{ij}=-\Pi_{ij}/c\), this becomes
\begin{equation}
\begin{aligned}
 \dot{\mathcal A}^{ij}
 ={}&
 (\mathcal L_{\vec N}\mathcal A)^{ij}
 +
 2\left[
 \gamma^{ij}D_k(Nv^k)
 -
 D^{(i}(Nv^{j)})
 \right]
 -
 \frac{2N}{c}\pi^{ij}
 \\
 &+
 \frac{N}{c}
 \Big[
 (S\Pi-S^{ab}\Pi_{ab})\gamma^{ij}
 -
 \Pi S^{ij}
 +
 4S^{k(i}\Pi_k{}^{j)}
 -
 3S\Pi^{ij}
 \Big]
 \\
 &-
 \frac{\nu}{c}
 \left(
 \Pi\gamma^{ij}-\Pi^{ij}
 \right),
 \qquad
 \Pi:=\gamma^{ij}\Pi_{ij}.
\end{aligned}
\label{eq:Adot-solved-Pi-exact}
\end{equation}
Equations \eqref{eq:dotgamma-from-Pi} and
\eqref{eq:Adot-solved-Pi-exact} invert the Legendre map for all six
velocities.  The velocity Hessian makes the absence of primary constraints
in this sector clear .  At each spatial point, both
\(\dot\gamma_{ij}\) and \(\dot{\mathcal A}^{ij}\) have three independent
components.  Choose dual bases \(E^A{}_{ij}\) and \(E_A{}^{ij}\),
\(A=1,2,3\), normalized by
\begin{equation}
        E_A{}^{ij}E^B{}_{ij}
        =
        \delta_A{}^B .
        \label{eq:symmetric-tensor-basis}
\end{equation}
Write
\begin{equation}
        \dot\gamma_{ij}
        =
        u^A E^A{}_{ij},
        \qquad
        \dot{\mathcal A}^{ij}
        =
        w^A E_A{}^{ij}.
        \label{eq:u-w-def}
\end{equation}
Then
\begin{equation}
        \dot{\mathcal A}^{ij}\dot\gamma_{ij}
        =
        w^Au^A .
        \label{eq:u-w-contraction}
\end{equation}

Let
\begin{equation}
        z^\alpha
        :=
        (u^1,u^2,u^3,w^1,w^2,w^3),
        \qquad
        \mathbb W_{\alpha\beta}
        :=
        \frac{\partial^2\mathcal L_{\rm Leg}}
        {\partial z^\alpha\partial z^\beta}.
        \label{eq:velocity-Hessian-def}
\end{equation}
The mixed kinetic term contains
\begin{equation}
        -c\dot{\mathcal A}^{ij}K_{ij}
        =
        -\frac{c}{2N}w^Au^A
        +\text{terms linear in }w^A ,
        \label{eq:mixed-u-w}
\end{equation}
and there is no term quadratic in \(w^A\).  Therefore the exact Hessian has
the block form
\begin{equation}
        \mathbb W
        =
        \begin{pmatrix}
        M & -\dfrac{c}{2N}\mathbf 1_3
        \\[2mm]
        -\dfrac{c}{2N}\mathbf 1_3 & 0
        \end{pmatrix},
        \label{eq:velocity-Hessian-block}
\end{equation}
where
\begin{equation}
        M_{AB}
        :=
        \frac{\partial^2\mathcal L_{\rm Leg}}
        {\partial u^A\partial u^B}.
        \label{eq:M-Hessian-def}
\end{equation}
Only terms quadratic in \(K_{ij}\) contribute to this block, and explicitly
\begin{equation}
\begin{aligned}
 M_{AB}
 ={}&
 \frac{c}{4N}
 E^A{}_{ij}
 \frac{\partial^2\mathcal Q_K}
      {\partial K_{ij}\partial K_{kl}}
 E^B{}_{kl}+
 \frac{c\nu}{4N^2}
 E^A{}_{ij}
 \left(
 \gamma^{ij}\gamma^{kl}
 -
 \gamma^{i(k}\gamma^{l)j}
 \right)
 E^B{}_{kl}.
\end{aligned}
\label{eq:M-Hessian-explicit}
\end{equation}
Its entries do not matter, because the off-diagonal block of
\eqref{eq:velocity-Hessian-block} is nonsingular:
\begin{equation}
        \mathbb W^{-1}
        =
        \begin{pmatrix}
        0 & -\dfrac{2N}{c}\mathbf 1_3
        \\[2mm]
        -\dfrac{2N}{c}\mathbf 1_3
        &
        -\dfrac{4N^2}{c^2}M
        \end{pmatrix}.
        \label{eq:velocity-Hessian-inverse}
\end{equation}
Hence, on a regular ADM chart, the Hessian is invertible whenever
\begin{equation}
        c\neq0 .
        \label{eq:c-nonzero-invertibility}
\end{equation}
For BINMG,
\begin{equation}
        c=\frac{\sigma}{2m^2},
\end{equation}
so this holds for finite \(m^2\) and \(\sigma=\pm1\).  There is no primary
constraint in the sector \((\dot\gamma_{ij},\dot{\mathcal A}^{ij})\).  (The
condition \(N\neq0\) only says that the ADM chart is regular; it is not a
branch condition on the phase space.)

We now return from \(\mathcal A^{ij}\) to \(S^{ij}\).  The momenta
transform most directly through the symplectic one-form,
\begin{equation}
        \Theta
        =
        \pi^{ij}\dd\gamma_{ij}
        +
        \Pi_{ij}\dd\mathcal A^{ij}.
        \label{eq:symplectic-temporary}
\end{equation}
Using
\begin{equation}
        \mathcal A^{ij}
        =
        S^{ij}-S\gamma^{ij},
        \qquad
        S=\gamma_{ab}S^{ab},
        \label{eq:A-S-relation-canonical}
\end{equation}
we have
\begin{equation}
        \dd S
        =
        S^{ab}\dd\gamma_{ab}
        +
        \gamma_{ab}\dd S^{ab},
        \label{eq:dS-canonical}
\end{equation}
and
\begin{equation}
        \dd\gamma^{ij}
        =
        -\gamma^{ia}\gamma^{jb}\dd\gamma_{ab}.
        \label{eq:dgammainverse-canonical}
\end{equation}
It follows that
\begin{align}
        \dd\mathcal A^{ij} =
        \dd S^{ij}
        -
        \gamma^{ij}
        \left(
        S^{ab}\dd\gamma_{ab}
        +
        \gamma_{ab}\dd S^{ab}
        \right)+
        S\gamma^{ia}\gamma^{jb}\dd\gamma_{ab}.
        \label{eq:dA-canonical}
\end{align}

Substituting \eqref{eq:dA-canonical} into
\eqref{eq:symplectic-temporary} and collecting separately the coefficients
of \(\dd\gamma_{ij}\) and \(\dd S^{ij}\), we obtain
\begin{equation}
        \Theta
        =
        \rho^{ij}\dd\gamma_{ij}
        +
        p_{ij}\dd S^{ij},
        \label{eq:symplectic-final}
\end{equation}
where
\begin{equation}
        \rho^{ij}
        =
        \pi^{ij}
        -
        \Pi S^{ij}
        +
        S\Pi^{ij},
        \label{eq:rho-def-explained}
\end{equation}
and
\begin{equation}
        p_{ij}
        =
        \Pi_{ij}
        -
        \Pi\gamma_{ij}.
        \label{eq:p-def-explained}
\end{equation}
Here
\begin{equation}
        \Pi:=\gamma^{ij}\Pi_{ij},
        \qquad
        \Pi^{ij}:=\gamma^{ik}\gamma^{jl}\Pi_{kl}.
\end{equation}

The final dynamical canonical pairs are therefore
\begin{equation}
        (\gamma_{ij},\rho^{ij}),
        \qquad
        (S^{ij},p_{ij}).
        \label{eq:canonical-pairs-final}
\end{equation}
Their nonzero elementary Poisson brackets are
\begin{equation}
        \{\gamma_{ij}(x),\rho^{kl}(y)\}
        =
        \delta_{(i}{}^k\delta_{j)}{}^l
        \delta^{(2)}(x-y),
        \label{eq:gamma-rho-bracket}
\end{equation}
and
\begin{equation}
        \{S^{ij}(x),p_{kl}(y)\}
        =
        \delta_{(k}{}^i\delta_{l)}{}^j
        \delta^{(2)}(x-y).
        \label{eq:S-p-bracket}
\end{equation}

The relations between the temporary and final momenta simplify further in
two spatial dimensions.  Taking the trace of
\eqref{eq:p-def-explained} gives
\begin{equation}
        p
        :=
        \gamma^{ij}p_{ij}
        =
        \Pi-2\Pi
        =
        -\Pi .
        \label{eq:p-trace-identity}
\end{equation}
Hence
\begin{equation}
        \Pi=-p,
        \qquad
        \Pi_{ij}
        =
        p_{ij}-p\gamma_{ij},
        \qquad
        p:=\gamma^{ij}p_{ij}.
        \label{eq:Pi-in-terms-of-p-explained}
\end{equation}
Raising the indices gives
\begin{equation}
        \Pi^{ij}
        =
        p^{ij}-p\gamma^{ij},
        \qquad
        p^{ij}:=\gamma^{ik}\gamma^{jl}p_{kl}.
        \label{eq:Pi-upper-in-terms-of-p}
\end{equation}
Using these relations in \eqref{eq:rho-def-explained}, the temporary metric
momentum becomes
\begin{equation}
        \pi^{ij}
        =
        \rho^{ij}
        -
        pS^{ij}
        -
        Sp^{ij}
        +
        Sp\gamma^{ij}.
        \label{eq:pi-in-terms-of-rho}
\end{equation}

The transformation
\[
(\gamma_{ij},\mathcal A^{ij};\pi^{ij},\Pi_{ij})
\longrightarrow
(\gamma_{ij},S^{ij};\rho^{ij},p_{ij})
\]
is canonical by construction.

The variables \(N\), \(N^i\), \(\nu\), and \(v^i\) have no velocities, but
their roles differ: \(N\) and \(N^i\) are the ADM gauge variables, \(v^i\)
will be determined algebraically where the spatial auxiliary block is
invertible, and \(\nu\), which appears linearly, will impose a scalar
constraint.

\subsection{Canonical Hamiltonian before auxiliary reduction}
\label{sec:canonical-Hamiltonian-before-reduction}

Up to the common overall normalization of the auxiliary action, the
canonical Hamiltonian is obtained from
\begin{equation}
        \mathcal H_{\rm can}
        =
        \pi^{ij}\dot\gamma_{ij}
        +
        \Pi_{ij}\dot{\mathcal A}^{ij}
        -
        \mathcal L_{\rm Leg}.
        \label{eq:canonical-Hamiltonian-definition}
\end{equation}
Because
\(\Pi_{ij}=-cK_{ij}\), the term
\(\Pi_{ij}\dot{\mathcal A}^{ij}\) cancels the corresponding
\(-cK_{ij}\dot{\mathcal A}^{ij}\) term in
\(\mathcal L_{\rm Leg}\).  Using
\eqref{eq:dotgamma-from-Pi}, expressing the result in the final canonical
variables, and performing only the spatial integrations by parts already
allowed in the local analysis, one obtains
\begin{equation}
 I_{\rm can}
 =
 \int\dd t\,\dd^2x
 \left[
 \rho^{ij}\dot\gamma_{ij}
 +
 p_{ij}\dot S^{ij}
 -
 N\mathcal H_N
 -
 N^i\mathcal H_i
 -
 \nu\mathcal C_0
 \right].
        \label{eq:canonical-action-before-v-elimination}
\end{equation}
Here \(v^i\) has not yet been eliminated; all of its dependence sits in the
lapse coefficient \(\mathcal H_N\).  Recall
\begin{equation}
        \Pi_{ij}
        =
        p_{ij}-p\gamma_{ij},
        \qquad
        \Pi=-p,
        \label{eq:Pi-recall-full-Hamiltonian}
\end{equation}
and
\begin{equation}
        \pi^{ij}
        =
        \rho^{ij}
        -
        pS^{ij}
        -
        Sp^{ij}
        +
        Sp\gamma^{ij}.
        \label{eq:pi-recall-full-Hamiltonian}
\end{equation}
We shall also use
\begin{equation}
        \mathcal A^{ij}
        =
        S^{ij}-S\gamma^{ij},
        \qquad
        B_i:=D_jp^j{}_i,
        \qquad
        s:=\det S^{ij}.
        \label{eq:B-definition-full-Hamiltonian}
\end{equation}
The quadratic expression inherited from \(\mathcal Q_K(S,K)\) is
\begin{equation}
\begin{aligned}
        \mathcal Q(S,\Pi)
        :={}&
        \frac{3S}{2}\Pi_{ij}\Pi^{ij}
        -
        \frac S2\Pi^2
        +
        \Pi S^{ij}\Pi_{ij}
        -
        2S^{ij}\Pi_i{}^k\Pi_{kj}.
\end{aligned}
        \label{eq:Q-definition-full-Hamiltonian}
\end{equation}
Indeed,
\begin{equation}
        K_{ij}=-\frac1c\Pi_{ij}
        \qquad\Longrightarrow\qquad
        c\,\mathcal Q_K(S,K)
        =
        \frac1c\,\mathcal Q(S,\Pi).
        \label{eq:QK-Q-full-Hamiltonian}
\end{equation}

The coefficient of the lapse is
\begin{equation}
\begin{aligned}
        \mathcal H_N
        ={}&
        \beta\sqrt\gamma
        -
        \frac12S
        -
        \frac2c\pi^{ij}\Pi_{ij}
        -
        \frac1c\mathcal Q(S,\Pi)
        +
        cD_iD_j\mathcal A^{ij}
        \\
        &+
        2v^iB_i
        +
        \frac{s}{2}
        (S^{-1})_{ij}v^iv^j .
\end{aligned}
        \label{eq:HN-before-v-elimination}
\end{equation}
The coefficient of the shift is
\begin{equation}
\begin{aligned}
        \mathcal H_i
        ={}&
        -2\gamma_{ij}D_k\rho^{jk}
        +
        p_{kl}D_iS^{kl}
        +
        2D_k(p_{ij}S^{kj})
        -
        D_i(p_{kl}S^{kl}) .
\end{aligned}
        \label{eq:spatial-generator-explicit}
\end{equation}
The next subsection shows that its smeared form is the canonical generator
of spatial diffeomorphisms.  Finally, the coefficient of \(\nu\) is
\begin{equation}
        \mathcal C_0
        =
        \frac12(1-s)
        -
        \frac c2\,{}^{(2)}R
        -
        \frac1{2c}
        \left(
        p^2-p_{ij}p^{ij}
        \right).
        \label{eq:C0-before-v-elimination}
\end{equation}
Since \(\mathcal C_0\) does not contain \(v^i\), eliminating \(v^i\) will
change \(\mathcal H_N\) but not \(\mathcal C_0\).

\subsection{Diffeomorphism generators}
\label{sec:diffeomorphism-generators}

We now identify the canonical generators associated with spacetime
diffeomorphisms.  There are two logically distinct steps.  First,
spacetime covariance determines the number and structure of the gauge
chains.  Second, we identify explicitly the secondary generators that
appear in those chains.  The tangential generators can be obtained
directly on the canonical phase space, whereas the normal generator
requires a projection away from the auxiliary second-class sector.

\paragraph{Spacetime diffeomorphisms and the gauge chains.}

The auxiliary action is invariant under spacetime diffeomorphisms.  Under
an infinitesimal diffeomorphism generated by a spacetime vector field
\(\xi^\mu\),
\begin{equation}
        \delta_\xi g_{\mu\nu}
        =
        \mathcal L_\xi g_{\mu\nu},
        \qquad
        \delta_\xi P^{\mu\nu}
        =
        \mathcal L_\xi P^{\mu\nu}.
\end{equation}
The second Lie derivative is that of a contravariant tensor density of
weight \(+1\).  Since the auxiliary Lagrangian is itself a scalar density
of weight \(+1\),
\begin{equation}
        \delta_\xi\mathcal L_{\rm aux}
        =
        \partial_\mu
        \left(
        \xi^\mu\mathcal L_{\rm aux}
        \right),
\end{equation}
so the action is invariant modulo the usual boundary term.

Relative to the ADM foliation, decompose the spacetime vector field as
\begin{equation}
        \xi^\mu
        =
        \epsilon^\perp n^\mu
        +
        \epsilon^i e_i{}^\mu ,
        \label{eq:xi-normal-tangential}
\end{equation}
where
\[
        \epsilon^\perp,
        \qquad
        \epsilon^1,
        \qquad
        \epsilon^2
\]
are three independent gauge functions.  In coordinate components,
\begin{equation}
        \xi^0
        =
        \frac{\epsilon^\perp}{N},
        \qquad
        \xi^i
        =
        \epsilon^i
        -
        \frac{N^i}{N}\epsilon^\perp .
        \label{eq:descriptor-coordinate-relation}
\end{equation}
Substituting this decomposition into the Lie derivative of the metric gives
the standard ADM transformations of the lapse and shift,
\begin{align}
 \delta_\epsilon N
 &=
 \dot\epsilon^\perp
 +
 \epsilon^jD_jN
 -
 N^jD_j\epsilon^\perp ,
 \label{eq:lapse-gauge-transformation}
 \\
 \delta_\epsilon N^i
 &=
 \dot\epsilon^i
 +
 \epsilon^jD_jN^i
 -
 N^jD_j\epsilon^i
 +
 \gamma^{ij}
 \left(
 \epsilon^\perp D_jN
 -
 ND_j\epsilon^\perp
 \right).
 \label{eq:shift-gauge-transformation}
\end{align}
The variations of the remaining fields may likewise be obtained from their
Lie derivatives, but their explicit form is not needed at this point. What matters for the structure of the Dirac gauge generator is the
dependence on time derivatives of the three gauge functions.  The
coefficients of
\[
        \dot\epsilon^\perp,
        \qquad
        \dot\epsilon^1,
        \qquad
        \dot\epsilon^2
\]
in
\[
        \delta N,
        \qquad
        \delta N^1,
        \qquad
        \delta N^2
\]
form the \(3\times3\) identity matrix. Thus, all three gauge functions
enter independently through their first time derivatives, while no second
time derivatives occur. Accordingly, the Castellani construction
\cite{Castellani,HenneauxTeitelboim} organizes spacetime
diffeomorphisms into three two-stage gauge chains,
\begin{equation}
        \Phi_\alpha
        \ \longrightarrow\
        \mathcal G_\alpha,
        \qquad
        \alpha=\perp,1,2 ,
        \label{eq:diffeomorphism-two-stage-chains}
\end{equation}
where \(\Phi_\alpha\) denotes an appropriate primary first-class
representative in the lapse--shift sector and
\(\mathcal G_\alpha\) the corresponding secondary generator.

The full infinitesimal gauge transformation is therefore generated
canonically by a functional of the schematic form
\begin{equation}
        G[\epsilon]
        =
        \int_{\Sigma_t}\dd^2x
        \left[
        \dot\epsilon^\alpha\Phi_\alpha
        +
        \epsilon^\alpha
        \left(
        \mathcal G_\alpha
        +
        C_\alpha{}^\beta\Phi_\beta
        \right)
        \right].
        \label{eq:Castellani-generator-schematic}
\end{equation}
For any canonical variable \(F\), its infinitesimal diffeomorphism is
represented by
\begin{equation}
        \delta_\epsilon F
        =
        \{F,G[\epsilon]\}.
        \label{eq:gauge-generator-defining-property}
\end{equation}
The coefficients \(C_\alpha{}^\beta\) depend on the canonical variables
and ensure the correct transformations of the lapse and shift.  Their
explicit form will not be needed below.

Equation~\eqref{eq:Castellani-generator-schematic} fixes the structure of
the complete gauge generator, but the secondary quantities
\(\mathcal G_\alpha\) need to be found. We do this next.
We begin with the two tangential directions, for which the construction is
completely explicit.

\paragraph{Tangential diffeomorphisms.}

Set
\begin{equation}
        \epsilon^\perp=0 .
\end{equation}
The corresponding spacetime diffeomorphism is tangent to the spatial
slice.  On the dynamical canonical variables
\begin{equation}
        (\gamma_{ij},\rho^{ij};
        S^{ij},p_{ij}),
\end{equation}
the primary lapse--shift constraints in
\eqref{eq:Castellani-generator-schematic} act trivially.  They are
conjugate to the lapse and shift and therefore do not generate the
transformation of \(\gamma_{ij}\) or \(S^{ij}\).  Consequently the
tangential action of the full gauge generator on the dynamical phase
space is determined by its secondary part,
\begin{equation}
        G_\parallel[\epsilon]
        =
        \int_{\Sigma_t}\dd^2x\,
        \epsilon^i\mathcal G_i .
        \label{eq:G-tangential-secondary}
\end{equation}
We can determine \(\mathcal G_i\) as follows.  A spatial
diffeomorphism generated by an arbitrary spatial vector field \(\xi^i(x)\)
acts geometrically on the canonical configuration variables as
\begin{equation}
        \delta_\xi\gamma_{ij}
        =
        \mathcal L_\xi\gamma_{ij},
        \qquad
        \delta_\xi S^{ij}
        =
        \mathcal L_\xi S^{ij}.
        \label{eq:spatial-diffeomorphism-configurations}
\end{equation}
We therefore seek a phase-space functional \(H_{\rm sp}[\xi]\) satisfying
\begin{equation}
        \{\gamma_{ij},H_{\rm sp}[\xi]\}
        =
        \mathcal L_\xi\gamma_{ij},
        \qquad
        \{S^{ij},H_{\rm sp}[\xi]\}
        =
        \mathcal L_\xi S^{ij}.
        \label{eq:spatial-generator-defining-property}
\end{equation}
Because \(\rho^{ij}\) and \(p_{ij}\) are canonically conjugate to
\(\gamma_{ij}\) and \(S^{ij}\), respectively, these requirements are
implemented by
\begin{equation}
        H_{\rm sp}[\xi]
        :=
        \int_{\Sigma_t}\dd^2x
        \left(
        \rho^{ij}\mathcal L_\xi\gamma_{ij}
        +
        p_{ij}\mathcal L_\xi S^{ij}
        \right).
        \label{eq:spatial-generator-smeared}
\end{equation}
Indeed, the canonical Poisson brackets give 
\begin{equation}
        \{\gamma_{ij},H_{\rm sp}[\xi]\}
        =
        \mathcal L_\xi\gamma_{ij},
        \qquad
        \{S^{ij},H_{\rm sp}[\xi]\}
        =
        \mathcal L_\xi S^{ij}.
        \label{eq:spatial-generator-check}
\end{equation}
The corresponding momenta transform with their appropriate Lie
derivatives.  Thus \(H_{\rm sp}[\xi]\) is the canonical phase-space
generator of infinitesimal spatial diffeomorphisms.  To make this identification local, we now remove derivatives of the
smearing field \(\xi^i\).  For the spatial metric,
\begin{equation}
        \mathcal L_\xi\gamma_{ij}
        =
        D_i\xi_j+D_j\xi_i .
        \label{eq:Lie-gamma-spatial}
\end{equation}
Since \(S^{ij}\) is a contravariant spatial tensor density of weight
\(+1\),
\begin{equation}
\begin{aligned}
        \mathcal L_\xi S^{ij}
        ={}&
        \xi^kD_kS^{ij}
        -
        S^{kj}D_k\xi^i
        -
        S^{ik}D_k\xi^j
        +
        S^{ij}D_k\xi^k .
\end{aligned}
        \label{eq:Lie-S-spatial}
\end{equation}
Substituting
\eqref{eq:Lie-gamma-spatial} and
\eqref{eq:Lie-S-spatial} into
\eqref{eq:spatial-generator-smeared}, and integrating derivatives of
\(\xi^i\) by parts, gives
\begin{equation}
        H_{\rm sp}[\xi]
        =
        \int_{\Sigma_t}\dd^2x\,
        \xi^i\mathcal H_i ,
        \label{eq:Hsp-H_i}
\end{equation}
where \(\mathcal H_i\) is precisely the shift coefficient obtained from
the Legendre transformation in
Eq.~\eqref{eq:spatial-generator-explicit}.

Comparison of
\eqref{eq:G-tangential-secondary} with
\eqref{eq:Hsp-H_i} therefore identifies the two tangential secondary
generators:
\begin{equation}
        \mathcal G_i
        =
        \mathcal H_i,
        \qquad
        i=1,2 .
        \label{eq:G-i-equals-H-i}
\end{equation}
Thus the coefficient of the physical shift in the canonical Hamiltonian
is the generator density of spatial diffeomorphisms on the dynamical phase
space.

\paragraph{Normal diffeomorphisms.}

It remains to identify the normal secondary generator
\(\mathcal G_\perp\) appearing in
\eqref{eq:Castellani-generator-schematic}.  This direction is more subtle.

Before eliminating the auxiliary variables, the coefficient of the
physical lapse is
\(\mathcal H_N\) in
Eq.~\eqref{eq:HN-before-v-elimination}.  One might therefore expect simply
to identify
\(\mathcal G_\perp=\mathcal H_N\).  In the enlarged phase space, however,
the auxiliary variables carry second-class constraints, and the raw
coefficient \(\mathcal H_N\) need not have weakly vanishing Poisson
brackets with that second-class sector.  It therefore need not itself be
the appropriate first-class representative of the normal diffeomorphism.

Spacetime covariance has already established the normal gauge
direction.  Once the auxiliary second-class constraints are known, one may
modify the raw lapse coefficient by terms proportional to those
constraints so that the resulting functional has weakly vanishing brackets
with the second-class sector.  We denote such a representative by
\begin{equation}
        \mathcal H_\perp^\star .
\end{equation}
Accordingly, the normal secondary generator in the Castellani chain may be
chosen as
\begin{equation}
        \mathcal G_\perp
        =
        \mathcal H_\perp^\star .
        \label{eq:Gperp-Hperpstar}
\end{equation}
On the auxiliary second-class constraint surface this representative agrees
weakly with the raw lapse coefficient,
\begin{equation}
        \mathcal H_\perp^\star
        \weak
        \mathcal H_N .
        \label{eq:Hperpstar-HN}
\end{equation}
After the mixed auxiliary component \(v^i\) is eliminated below,
\(\mathcal H_N\) reduces to the density denoted by
\(\mathcal H_0\), and hence
\begin{equation}
        \mathcal H_\perp^\star
        \weak
        \mathcal H_0 .
        \label{eq:Hperpstar-H0}
\end{equation}
The projection required to construct
\(\mathcal H_\perp^\star\) is reviewed in
Appendix~\ref{app:improved-projection}.  The important point is that this
projection does not create a new gauge symmetry and does not alter the
number of gauge directions.  It merely chooses, within the equivalence
class that differs by auxiliary second-class constraints, a representative
appropriate for the reduced canonical description.

\paragraph{Structure of the complete diffeomorphism generator.}

We have therefore identified the secondary generators appearing in
\eqref{eq:Castellani-generator-schematic} as
\begin{equation}
        \mathcal G_\perp
        =
        \mathcal H_\perp^\star,
        \qquad
        \mathcal G_i
        =
        \mathcal H_i .
        \label{eq:secondary-diffeomorphism-generators-final}
\end{equation}
The full diffeomorphism generator may consequently be written
schematically as
\begin{equation}
\begin{aligned}
        G[\epsilon]
        =
        \int_{\Sigma_t}\dd^2x
        \Big[
        &\dot\epsilon^\alpha\Phi_\alpha
        +
        \epsilon^\perp\mathcal H_\perp^\star
        +
        \epsilon^i\mathcal H_i
        +
        \epsilon^\alpha
        C_\alpha{}^\beta\Phi_\beta
        \Big],
        \qquad
        \alpha=\perp,1,2 .
\end{aligned}
        \label{eq:full-diffeomorphism-generator-summary}
\end{equation}
Thus the three independent spacetime gauge functions give three
two-stage first-class chains.  The two tangential secondary generators are
the explicit momentum constraints \(\mathcal H_i\), while the normal
secondary generator is represented by
\(\mathcal H_\perp^\star\).  This is the canonical realization of
spacetime diffeomorphism invariance that will be used in the subsequent
constraint count.

\subsection{Eliminating the mixed auxiliary component}
\label{sec:auxiliary-shift-elimination}

We next eliminate the mixed auxiliary component \(v^i\).  It is important
not to confuse \(v^i\) with the physical shift \(N^i\).  The latter is a
gauge variable associated with spacetime diffeomorphisms, whereas \(v^i\)
is a component of the auxiliary tensor density \(P^{\mu\nu}\).  Moreover,
\(v^i\) carries no time derivative.  We will now show that, on the
nondegenerate branch specified below, its equation of motion determines
\(v^i\) algebraically in terms of the canonical variables rather than
producing a new constraint among them.

All dependence on \(v^i\) in the lapse coefficient
\(\mathcal H_N\) is contained in
\begin{equation}
        \mathcal H_v
        =
        2v^iB_i
        +
        \frac{s}{2}(S^{-1})_{ij}v^iv^j,
        \label{eq:Hv-B-form}
\end{equation}
of which variation with respect to \(v^i\) gives
\begin{equation}
        2B_i
        +
        s(S^{-1})_{ij}v^j
        =
        0 ,
        \label{eq:v-equation-corrected}
\end{equation}
with the unique solution
\begin{equation}
        v_*^i
        =
        -\frac{2}{s}S^{ij}B_j
        =
        -\frac{2}{s}S^{ij}D_kp^k{}_j .
        \label{eq:v-solution-corrected}
\end{equation}
Thus \(v^i\) is completely determined by the canonical variables on
\(s\neq0\).  Substitution of \(v^i=v_*^i\) simply  gives
\begin{equation}
        \mathcal H_v\big|_{v=v_*}
        =
        -\frac{2}{s}S^{ij}B_iB_j .
        \label{eq:Hv-solved}
\end{equation}
Equivalently,
\begin{equation}
        \mathcal H_v^{\rm red}
        =
        -\frac{2}{s}
        S^{ij}
        \left(D_kp^k{}_i\right)
        \left(D_lp^l{}_j\right).
        \label{eq:Hv-reduced-explicit}
\end{equation}
The important point is that the equation for \(v^i\) does not impose an
additional constraint on
\[
        (\gamma_{ij},\rho^{ij};S^{ij},p_{ij}) .
\]
Instead, it determines the auxiliary variable \(v^i\) algebraically.
Eliminating \(v^i\) therefore replaces the entire \(v^i\)-dependent part
of \(\mathcal H_N\) by the reduced contribution
\eqref{eq:Hv-reduced-explicit}.

For completeness, the same conclusion can be expressed in unreduced
Dirac language.  If \(v^i\) and its conjugate momentum
\(\pi_i^{(v)}\) are retained as canonical variables, the absence of
\(\dot v^i\) produces the two primary constraints
\begin{equation}
        \pi_i^{(v)}
        \weak0 .
\end{equation}
Their preservation gives the two algebraic conditions
\begin{equation}
        2B_i
        +
        s(S^{-1})_{ij}v^j
        \weak0 .
\end{equation}
Their mutual Poisson-bracket block contains the invertible matrix
\begin{equation}
        s(S^{-1})_{ij},
\end{equation}
and is therefore nondegenerate when \(s\neq0\).  The two primary
constraints and the two algebraic conditions consequently form a
four-dimensional second-class block. Solving Eq.~\eqref{eq:v-equation-corrected} and substituting
\(v^i=v_*^i\) is precisely the corresponding second-class reduction.
As shown in Sec.~\ref{sec:reduced-extended-bookkeeping}, this elimination
does not modify the canonical brackets
\eqref{eq:gamma-rho-bracket}--\eqref{eq:S-p-bracket} between functionals
that are independent of \(v^i\).

\subsection{Reduced Hamiltonian and auxiliary normal constraint}
\label{sec:reduced-Hamiltonian-normal-constraint}

Having eliminated the algebraic mixed auxiliary component \(v^i\), we now
arrive at the canonical system on which the remaining Dirac analysis will
be performed.  Up to the overall normalization of the action,
\begin{equation}
 I_{\rm red}
 =
 \int\dd t\,\dd^2x
 \left[
 \rho^{ij}\dot\gamma_{ij}
 +
 p_{ij}\dot S^{ij}
 -
 N\mathcal H_0
 -
 N^i\mathcal H_i
 -
 \nu\mathcal C_0
 \right].
 \label{eq:full-reduced-canonical-action}
\end{equation}
The dynamical canonical pairs are still
\begin{equation}
        (\gamma_{ij},\rho^{ij}),
        \qquad
        (S^{ij},p_{ij}),
\end{equation}
whereas \(N\), \(N^i\), and \(\nu\) occur without time derivatives.
The lapse and shift implement spacetime diffeomorphisms, while the role of
\(\nu\) is different: it imposes the auxiliary scalar constraint
\(\mathcal C_0\weak0\). The reduced lapse coefficient is obtained from the unreduced one by
substituting the algebraic solution \(v^i=v_*^i\):
\begin{equation}
        \mathcal H_0
        :=
        \left.
        \mathcal H_N
        \right|_{v=v_*}.
        \label{eq:H0-from-HN}
\end{equation}
Using the reduced \(v^i\)-contribution
\eqref{eq:Hv-solved}, the lapse coefficient becomes
\begin{equation}
\begin{aligned}
 \mathcal H_0
 ={}&
 \beta\sqrt\gamma
 -
 \frac12S
 -
 \frac2c\pi^{ij}\Pi_{ij}
 -
 \frac1c\mathcal Q(S,\Pi)
 \\
 &-
 \frac2sS^{ij}B_iB_j
 +
 cD_iD_j\mathcal A^{ij}.
\end{aligned}
\label{eq:H0-full-tensor}
\end{equation}

Notice also the density weights.  The quantity \(\mathcal C_0\) is a
spatial scalar, while \(\nu\) is a scalar density of weight \(+1\).
Consequently \(\nu\mathcal C_0\) is a spatial density, as required in the
canonical action.

\paragraph{The derivative term in \(\mathcal H_0\).}

The last term in Eq.~\eqref{eq:H0-full-tensor} is particularly important: For a constant lapse its integral is a boundary term.  For an arbitrary
lapse \(N(x)\), however, it cannot be discarded: it contributes to the
functional derivatives of the smeared lapse generator
\begin{equation}
        H_0[N]
        :=
        \int_{\Sigma_t}\dd^2x\,N\mathcal H_0 .
        \label{eq:H0-smeared-definition}
\end{equation}
This term will be essential when we impose preservation of
\(\mathcal C_0\). 

The reduced lapse Hamiltonian in the canonical variables is 
\begin{align}
 H_0[N]
 =\int_{\Sigma_t}\dd^2x\,N
 \Bigg\{&
 \beta\sqrt\gamma
 -\frac12S
 \nonumber\\
 &-\frac2c
 \left[
 \rho^{ij}p_{ij}
 -p\rho
 -pS^{ij}p_{ij}
 +S\left(p^2-p_{ij}p^{ij}\right)
 \right]
 \nonumber\\
 &-\frac1c\,\mathcal Q_p(S,p)
 -\frac2sS^{ij}B_iB_j
 \nonumber\\
 &+c
 \left(
 D_iD_jS^{ij}-D^2S
 \right)
 \Bigg\},
 \label{eq:H0-final-canonical-variables}
\end{align}
where the remaining quadratic algebraic function is
\begin{equation}
 \mathcal Q_p(S,p)
 :=
 \frac{3S}{2}
 \left(
 p_{ij}p^{ij}-p^2
 \right)
 +
 3pS^{ij}p_{ij}
 -
 2S^{ij}p_i{}^kp_{kj}.
 \label{eq:Qp-final}
\end{equation}
Thus every quantity appearing in
Eq.~\eqref{eq:H0-final-canonical-variables} is a function of
\((\gamma_{ij},\rho^{ij};S^{ij},p_{ij})\).

\paragraph{Vacuum check.}

As a simple consistency check, consider a time-symmetric slice of the
maximally symmetric BINMG vacuum.  There
\begin{equation}
        p_{ij}=0,
        \qquad
        s=a,
        \qquad
        K_{ij}=0 .
\end{equation}
The normal--normal Einstein projection then gives
\begin{equation}
        G_{\perp\perp}
        =
        \frac12\,{}^{(2)}R
        =
        \Lambda .
\end{equation}
Therefore
\begin{equation}
        \bar{\mathcal C}_0
        =
        \frac12(1-a)-c\Lambda .
        \label{eq:scalar-constraint-vac}
\end{equation}
Using the BINMG vacuum relation  \eqref{eq:two-c-Lambda},
we obtain
\begin{equation}
        \bar{\mathcal C}_0=0 .
\end{equation}
Thus the maximally symmetric vacuum lies on the nonlinear scalar
constraint surface, as required.

\subsection{Preservation of the scalar constraint}
\label{sec:scalar-preservation}

The scalar condition
\begin{equation}
        \mathcal C_0\weak0
\end{equation}
by itself does not remove an entire canonical pair.  The next step in the
Dirac algorithm is therefore to impose its preservation,
\begin{equation}
        \dot{\mathcal C}_0\weak0,
\end{equation}
and determine what this condition implies.

There are three logical possibilities.  Preservation of
\(\mathcal C_0\) could determine the auxiliary multiplier \(\nu\); it
could impose an equation on the physical lapse \(N\); or it could generate
an additional scalar constraint.  We will show that the third possibility
is realized:
\begin{equation}
        \{\mathcal C_0(x),\mathcal C_0(y)\}=0
\end{equation}
strongly, so \(\nu\) is not fixed at this stage, while
\begin{equation}
        \{\mathcal C_0(x),H_0[N]\}
        =
        N(x)\mathcal C_1(x)
\end{equation}
contains no spatial derivative of \(N\).  Hence preservation of
\(\mathcal C_0\) produces the secondary scalar constraint
\(\mathcal C_1\weak0\), without determining either \(\nu\) or the
physical lapse.

The reduced Hamiltonian is
\begin{equation}
        H_{\rm red}
        =
        H_0[N]
        +
        H_{\rm sp}[N^i]
        +
        \mathcal C_0[\nu],
        \label{eq:Hred-smeared-scalar-analysis}
\end{equation}
where
\begin{equation}
        H_0[N]
        :=
        \int_{\Sigma_t}\dd^2x\,N\mathcal H_0 ,
\end{equation}
and
\begin{equation}
        \mathcal C_0[\nu]
        :=
        \int_{\Sigma_t}\dd^2x\,\nu\mathcal C_0 .
\end{equation}

\paragraph{ The self-bracket of \(\mathcal C_0\).}

We first determine whether preservation of \(\mathcal C_0\) fixes its
multiplier \(\nu\).  Introduce compactly supported smearing densities
\(f\) and \(g\), both of spatial weight \(+1\), and define
\begin{equation}
        \mathcal C_0[f]
        :=
        \int_{\Sigma_t}\dd^2x\,f\mathcal C_0 .
        \label{eq:C0-smeared}
\end{equation}
It is useful to split the scalar constraint \eqref{eq:C0-before-v-elimination} into
\begin{equation}
        \mathcal C_0
        =
        \mathcal V+W+\mathcal R,
        \label{eq:C0-UWR}
\end{equation}
with
\begin{equation}
        \mathcal V
        =
        \frac12(1-s),
        \qquad
        W
        =
        -\frac1{2c}
        \left(
        p^2-p_{ij}p^{ij}
        \right),
        \qquad
        \mathcal R
        =
        -\frac c2\,{}^{(2)}R .
        \label{eq:UWR-definitions}
\end{equation}
The dependence of these three terms on the canonical variables is
particularly simple:
\[
        \mathcal V=\mathcal V[S],
        \qquad
        W=W[\gamma,p],
        \qquad
        \mathcal R=\mathcal R[\gamma].
\]
None depends on \(\rho^{ij}\).  Hence the canonical pair
\((\gamma_{ij},\rho^{ij})\) gives no contribution to the self-bracket.
Within the pair \((S^{ij},p_{ij})\), the only possible nonzero contribution
is the cross-bracket between \(\mathcal V\) and \(W\).

Using
\begin{equation}
        \delta s
        =
        s(S^{-1})_{ij}\delta S^{ij},
        \label{eq:delta-s-identity}
\end{equation}
we obtain
\begin{equation}
        \frac{\delta\mathcal V[f]}
             {\delta S^{ij}}
        =
        -\frac12fs(S^{-1})_{ij},
        \label{eq:U-derivative}
\end{equation}
while
\begin{equation}
        \frac{\delta W[g]}
             {\delta p_{ij}}
        =
        \frac gc
        \left(
        p^{ij}-p\gamma^{ij}
        \right).
        \label{eq:W-derivative}
\end{equation}
Therefore
\begin{equation}
        \{\mathcal V[f],W[g]\}
        =
        -\int_{\Sigma_t}\dd^2x\,
        \frac{fg\,s}{2c}
        (S^{-1})_{ij}
        \left(
        p^{ij}-p\gamma^{ij}
        \right).
        \label{eq:VW-bracket}
\end{equation}
The right-hand side is symmetric under \(f\leftrightarrow g\).  Since
\begin{equation}
        \{W[f],\mathcal V[g]\}
        =
        -\{\mathcal V[g],W[f]\},
\end{equation}
the two cross-terms cancel.  Hence
\begin{equation}
        \{\mathcal C_0[f],\mathcal C_0[g]\}=0
        \label{eq:C0-self-smeared-zero}
\end{equation}
for arbitrary compactly supported smearings, or equivalently
\begin{equation}
        \{\mathcal C_0(x),\mathcal C_0(y)\}=0 .
        \label{eq:C0-self-local-zero}
\end{equation}

This is a strong identity.  Consequently the term
\(\mathcal C_0[\nu]\) in the Hamiltonian makes no contribution to
\(\dot{\mathcal C}_0\), and the multiplier \(\nu\) is not determined by
preservation of \(\mathcal C_0\).

\paragraph{The bracket with  lapse.}

We next study
\begin{equation}
        \{\mathcal C_0(x),H_0[N]\}.
\end{equation}
The issue is not whether the bracket is nonzero; generically it is.
The important question is whether it contains spatial derivatives of the
lapse.  If terms proportional to \(D_iN\) or \(D_iD_jN\) survived, the
condition \(\dot{\mathcal C}_0\weak0\) could become a differential equation
for \(N\).  We now show explicitly that all such terms cancel.

\begin{proposition}[Arbitrary-lapse factorization]
\label{prop:arbitrary-lapse-factorization}
On the regular region \(s\neq0\), modulo the spatial boundary terms
excluded in the local analysis,
\begin{equation}
        \{\mathcal C_0(x),H_0[N]\}
        =
        N(x)\mathcal C_1(x),
        \label{eq:C1-factorization}
\end{equation}
where
\begin{equation}
        \mathcal C_1(x)
        :=
        \{\mathcal C_0(x),H_0[1]\}.
        \label{eq:C1-definition}
\end{equation}
\end{proposition}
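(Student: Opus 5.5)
The plan is to compute $\{\mathcal C_0[f],H_0[N]\}$ for a compactly supported density $f$ and show that every term involving a spatial derivative of $N$ cancels. Then
\[
\{\mathcal C_0[f],H_0[N]\}=\int\dd^2x\,fN\,\mathcal C_1 ,
\]
which is \eqref{eq:C1-factorization} with $\mathcal C_1=\{\mathcal C_0,H_0[1]\}$. Derivatives of $N$ can only arise in two ways. The first is from terms of $H_0[N]$ in \eqref{eq:H0-final-canonical-variables} that contain spatial derivatives of the canonical variables; integrating these by parts moves derivatives onto $N$. The second is from terms of $\mathcal C_0$ containing derivatives of the variables conjugate to those in $H_0$; their functional derivatives produce derivatives of delta functions. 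Every purely algebraic piece of $H_0$ paired with an algebraic piece of $\mathcal C_0$ contributes only $N(x)\times(\dots)$, so these pieces can be set aside.

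I will use the split \eqref{eq:C0-UWR}. Since $\mathcal C_0$ is independent of $\rho^{ij}$, the bracket is
\[
\int\!\Big(\tfrac{\delta \mathcal C_0[f]}{\delta\gamma}\tfrac{\delta H_0[N]}{\delta\rho}+\tfrac{\delta \mathcal C_0[f]}{\delta S}\tfrac{\delta H_0[N]}{\delta p}-\tfrac{\delta \mathcal C_0[f]}{\delta p}\tfrac{\delta H_0[N]}{\delta S}\Big).
\]
The derivative-carrying sources are then the following.

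(a) The term $\mathcal R=-\tfrac c2{}^{(2)}R$ paired with the $\rho$-dependence of $H_0$. Here $\delta H_0[N]/\delta\rho^{ij}=-\tfrac2cN(p_{ij}-p\gamma_{ij})$, and the standard variation $\delta{}^{(2)}R$ produces
\[
-\tfrac c2\cdot\tfrac2c\,f\,\big(D^iD^j-\gamma^{ij}D^2\big)\!\big[N(p_{ij}-p\gamma_{ij})\big]
\]
up to algebraic pieces. This contains $p^{ij}D_iD_jN$, $D_iN\,D_jp^{ij}$ and $D_iN\,D^ip$.

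(b) The term $W$ paired with the $S$-derivative of $cN(D_iD_jS^{ij}-D^2S)$. This gives $-f(p^{kl}-p\gamma^{kl})(D_kD_lN-\gamma_{kl}D^2N)=-f\,p^{kl}D_kD_lN$ plus terms with derivatives of $p$ only.

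(c) The term $\mathcal V$ paired with the $p$-derivative of $-\tfrac2s N S^{ij}B_iB_j$. Since $B_i=D_kp^k{}_i$, this yields after contraction with $s(S^{-1})_{ij}$ a term proportional to $B^kD_kN$. It also yields pieces with derivatives of $S$, which are not derivatives of $N$ and can be ignored.

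The remaining $\rho$-, $\mathcal Q_p$- and $\sqrt\gamma$-terms are algebraic in all variables and contribute no $DN$. I will also check that $W$ paired with $\partial_S$ of the $B$-term is algebraic in $N$.

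The key steps, in order, are these. First, I will verify that the second-derivative terms $p^{ij}D_iD_jN$ from (a) and (b) cancel exactly; the signs are $+$ and $-$ with unit coefficients, so this should be immediate. Second, I will collect the first-derivative terms $D_iN\,B^i$ and $D_iN\,D^ip$ from (a) and (c) and show that they cancel. The $D^ip$ pieces must cancel inside (a) itself, because $(D^iD^j-\gamma^{ij}D^2)(Np\gamma_{ij})$ gives $D^2(Np)-2D^2(Np)$ together with $D^2(Np)$ from the $p_{ij}$ part. Third, having shown that the $DN$ coefficients vanish, I will set $N=1$ in what remains, which identifies the remainder as $N(x)\mathcal C_1(x)$ with $\mathcal C_1$ given by \eqref{eq:C1-definition}.

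The main obstacle is the first-derivative bookkeeping in the second step: the relative normalisation of the $B$-term coefficient $-2/s$ against the $\mathcal R$ coefficient $-c/2$ and the $\rho$-coefficient $-2/c$. If (a) and (c) do not cancel by themselves, I will look for a compensating $D_iN$ contribution from $W$ paired with the $\gamma$-derivative of the $\rho$ terms via Christoffel symbols in $B_i$, or from the $\gamma$-dependence of $\sqrt\gamma$-weighted contractions. Failing that, I would fall back to checking whether a residual is proportional to the momentum constraint $\mathcal H_i$, so that the identity holds only weakly.
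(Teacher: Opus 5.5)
Your proposal is correct and is essentially the paper's proof: the same three derivative-carrying pairings ($\mathcal R$ with $H_{\rm mix}$, $W$ with $H_D$, $\mathcal V$ with $H_B$) give $2B^iD_iN+p^{ij}D_iD_jN$, $-p^{ij}D_iD_jN$ and $-2B^iD_iN$, which cancel, and locality plus setting $N=1$ then identifies $\mathcal C_1$. Correct the overall sign you wrote in (a): it is $(-\tfrac c2)(-\tfrac 2c)=+1$, and in two dimensions that bracket equals $+f\,D_iD_j(Np^{ij})$ plus terms algebraic in $N$, so the relative normalization you worried about works out and none of your fallbacks is needed.
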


\begin{proof}

To identify all possible spatial derivatives of the lapse, decompose \eqref{eq:H0-final-canonical-variables}
\begin{equation}
\begin{aligned}
        H_0[N]
        ={}&
        H_{\rm pot}[N]
        +
        H_S[N]
        +
        H_{\rm mix}[N]
        +
        H_{\mathcal Q}[N]
        +
        H_B[N]
        +
        H_D[N].
        \label{eq:H0-six-pieces}
\end{aligned}
\end{equation}
The individual pieces are
\begin{align}
        H_{\rm pot}[N]
        &:=
        \int_{\Sigma_t}\dd^2x\,
        N\beta\sqrt\gamma ,
        \\
        H_S[N]
        &:=
        -\frac12
        \int_{\Sigma_t}\dd^2x\,
        NS ,
        \\
        H_{\rm mix}[N]
        &:=
        -\frac2c
        \int_{\Sigma_t}\dd^2x\,N
        \left[
        \rho^{ij}p_{ij}
        -p\rho
        -pS^{ij}p_{ij}
        +S\left(p^2-p_{ij}p^{ij}\right)
        \right],
        \label{eq:Hmix-final-proof}
        \\
        H_{\mathcal Q}[N]
        &:=
        -\frac1c
        \int_{\Sigma_t}\dd^2x\,
        N\mathcal Q_p(S,p),
        \\
        H_B[N]
        &:=
        -2
        \int_{\Sigma_t}\dd^2x\,
        N\frac{S^{ij}}sB_iB_j,
        \label{eq:HB-definition-lapse-proof}
        \\
        H_D[N]
        &:=
        c
        \int_{\Sigma_t}\dd^2x\,
        N
        \left(
        D_iD_jS^{ij}-D^2S
        \right).
        \label{eq:HD-definition-lapse-proof}
\end{align}

We now ask a very specific question: which terms in
\(\{\mathcal C_0,H_0[N]\}\) can produce derivatives of the arbitrary
lapse \(N(x)\)?

There are only three possibilities.

First, the curvature term
\begin{equation}
        \mathcal R
        =
        -\frac c2\,{}^{(2)}R
\end{equation}
contains two spatial derivatives of \(\gamma_{ij}\).
Its bracket with the \(\rho^{ij}\)-dependent part of
\(H_{\rm mix}[N]\) can therefore transfer derivatives onto \(N\).

Second,
\begin{equation}
        B_i=D_jp^j{}_i
\end{equation}
contains one spatial derivative of the canonical momentum.
The \(p_{ij}\)-variation of \(H_B[N]\) can therefore generate
\(D_iN\).  This contribution pairs with the algebraic
\(S^{ij}\)-dependence of \(\mathcal V\).

Third, integrating the last term of the Hamiltonian twice by parts gives
\begin{equation}
\begin{aligned}
        H_D[N]
        &=
        c
        \int_{\Sigma_t}\dd^2x\,
        N
        \left(
        D_iD_jS^{ij}-D^2S
        \right)
        \nonumber\\
        &\doteq
        c
        \int_{\Sigma_t}\dd^2x\,
        \left(
        S^{ij}-S\gamma^{ij}
        \right)
        D_iD_jN .
        \label{eq:HD-integrated-proof}
\end{aligned}
\end{equation}
Its variation with respect to \(S^{ij}\) therefore contains
\(D_iD_jN\), which can pair with the \(p_{ij}\)-dependence of \(W\).

There are no other sources of derivatives of the lapse.
The terms \(H_{\rm pot}\), \(H_S\), and \(H_{\mathcal Q}\) are
algebraic in the canonical variables, while the remaining variations
of \(H_{\rm mix}\) and the \(S^{ij}\)-variation of \(H_B\) contain
\(N\) without spatial derivatives.

We now evaluate the three possible derivative-of-lapse contributions.

\medskip
\noindent
\textit{(i) Curvature--mixed contribution.}

From Eq.~\eqref{eq:Hmix-final-proof},
\begin{equation}
        \frac{\delta H_{\rm mix}[N]}
             {\delta\rho^{ij}}
        =
        -\frac2c\,N
        \left(
        p_{ij}-p\gamma_{ij}
        \right).
        \label{eq:Hmix-rho-final}
\end{equation}
The variation of the two-dimensional Ricci scalar with respect to the
covariant spatial metric is
\begin{equation}
        \delta{}^{(2)}R
        =
        \left(
        D^iD^j
        -
        \gamma^{ij}D^2
        -
        {}^{(2)}R^{ij}
        \right)
        \delta\gamma_{ij}.
        \label{eq:R2-variation-covariant}
\end{equation}
Keeping only terms in which derivatives act on \(N\), one obtains
\begin{align}
 \left.
 \{\mathcal R,H_{\rm mix}[N]\}
 \right|_{\partial N}
 ={}&
 2D_iN
 \left[
 D_j
 \left(
 p^{ij}-p\gamma^{ij}
 \right)
 +D^ip
 \right]
 \nonumber\\
 &+
 \left[
 \left(
 p^{ij}-p\gamma^{ij}
 \right)
 +p\gamma^{ij}
 \right]
 D_iD_jN .
 \label{eq:curvature-mixed-final-intermediate}
\end{align}
Metric compatibility gives
\begin{equation}
        D_j
        \left(
        p^{ij}-p\gamma^{ij}
        \right)
        +D^ip
        =
        D_jp^{ij}
        =
        B^i,
\end{equation}
while the second bracket in
Eq.~\eqref{eq:curvature-mixed-final-intermediate} is simply
\(p^{ij}\).  Hence
\begin{equation}
 \left.
 \{\mathcal R,H_{\rm mix}[N]\}
 \right|_{\partial N}
 =
 2B^iD_iN
 +
 p^{ij}D_iD_jN .
 \label{eq:lapse-derivatives-curvature-result}
\end{equation}

\medskip
\noindent
\textit{(ii) Algebraic--\(B_i\) contribution.}

For
\begin{equation}
        \mathcal V
        =
        \frac12(1-s),
\end{equation}
we have
\begin{equation}
        \frac{\delta\mathcal V}{\delta S^{ab}}
        =
        -\frac12s(S^{-1})_{ab}.
        \label{eq:V-S-derivative-final}
\end{equation}
Variation of \(H_B[N]\) with respect to \(p_{ab}\) gives
\begin{equation}
        \frac{\delta H_B[N]}{\delta p_{ab}}
        =
        2D^a
        \left(
        N\frac{S^{ib}}sB_i
        \right)
        +
        2D^b
        \left(
        N\frac{S^{ia}}sB_i
        \right).
        \label{eq:HB-p-variation}
\end{equation}
Therefore
\begin{equation}
        \{\mathcal V,H_B[N]\}
        =
        -2s(S^{-1})_{ab}
        D^a
        \left(
        N\frac{S^{ib}}sB_i
        \right).
        \label{eq:VHB-bracket}
\end{equation}
The part containing a derivative of the lapse is
\begin{equation}
 \left.
 \{\mathcal V,H_B[N]\}
 \right|_{\partial N}
 =
 -2B^iD_iN .
 \label{eq:lapse-derivatives-shift-result}
\end{equation}

\medskip
\noindent
\textit{(iii) Momentum--derivative contribution.}

For
\begin{equation}
        W
        =
        -\frac1{2c}
        \left(
        p^2-p_{ij}p^{ij}
        \right),
\end{equation}
the momentum derivative is
\begin{equation}
        \frac{\delta W}{\delta p_{ab}}
        =
        \frac1c
        \left(
        p^{ab}-p\gamma^{ab}
        \right).
        \label{eq:W-p-derivative-final}
\end{equation}
From Eq.~\eqref{eq:HD-integrated-proof},
\begin{equation}
        \frac{\delta H_D[N]}{\delta S^{ab}}
        =
        c
        \left(
        D_aD_bN
        -
        \gamma_{ab}D^2N
        \right).
        \label{eq:HD-S-derivative-final}
\end{equation}
Using the \(S^{ij},p_{ij}\) part of the canonical Poisson bracket,
\begin{equation}
        \{F,G\}_{S,p}
        =
        \int\dd^2x
        \left(
        \frac{\delta F}{\delta S^{ij}}
        \frac{\delta G}{\delta p_{ij}}
        -
        \frac{\delta F}{\delta p_{ij}}
        \frac{\delta G}{\delta S^{ij}}
        \right),
\end{equation}
we obtain
\begin{align}
        \{W,H_D[N]\}
        &=
        -
        \left(
        p^{ab}-p\gamma^{ab}
        \right)
        \left(
        D_aD_bN-\gamma_{ab}D^2N
        \right)
        \nonumber\\
        &=
        -p^{ab}D_aD_bN .
        \label{eq:lapse-derivatives-acceleration-nonlinear}
\end{align}
In the last step we used the fact that the spatial dimension is two.

Thus
\begin{equation}
        \{W,H_D[N]\}
        =
        -p^{ab}D_aD_bN .
\end{equation}

The cancellation of every derivative of the lapse is now manifest:
\begin{align}
 &\left.
 \{\mathcal R,H_{\rm mix}[N]\}
 \right|_{\partial N}
 +
 \left.
 \{\mathcal V,H_B[N]\}
 \right|_{\partial N}
 +
 \{W,H_D[N]\}
 \nonumber\\[1mm]
 &\qquad=
 \left(
 2B^iD_iN+p^{ij}D_iD_jN
 \right)
 -
 2B^iD_iN
 -
 p^{ij}D_iD_jN
 \nonumber\\
 &\qquad=0 .
 \label{eq:all-lapse-derivatives-cancel}
\end{align}
Therefore
\begin{equation}
        \left.
        \{\mathcal C_0,H_0[N]\}
        \right|_{\partial N}
        =
        0 .
        \label{eq:no-lapse-derivatives-final}
\end{equation}

The bracket is linear in \(N\), and no spatial derivative of \(N\)
survives.  Since the theory is local, the bracket must therefore take
the form
\begin{equation}
        \{\mathcal C_0(x),H_0[N]\}
        =
        N(x)F(x)
        \label{eq:C0H0-local-F}
\end{equation}
for some local phase-space function
\begin{equation}
        F
        =
        F[\gamma_{ij},\rho^{ij},S^{ij},p_{ij}].
\end{equation}
Setting \(N=1\) in the identity gives
\begin{equation}
        F(x)
        =
        \{\mathcal C_0(x),H_0[1]\}
        =
        \mathcal C_1(x),
\end{equation}
and therefore
\begin{equation}
        \{\mathcal C_0(x),H_0[N]\}
        =
        N(x)\mathcal C_1(x).
\end{equation}
This proves Eq.~\eqref{eq:C1-factorization}.
\end{proof}

\paragraph{ The complete consistency condition.}

We now restore the remaining terms in the reduced Hamiltonian,
\begin{equation}
        H_{\rm red}
        =
        H_0[N]
        +
        H_{\rm sp}[N^i]
        +
        \mathcal C_0[\nu].
\end{equation}
The spatial-diffeomorphism generator is written directly in the final
canonical variables as
\begin{equation}
        H_{\rm sp}[N^i]
        =
        \int_{\Sigma_t}\dd^2x
        \left[
        \rho^{ij}\mathcal L_{\vec N}\gamma_{ij}
        +
        p_{ij}\mathcal L_{\vec N}S^{ij}
        \right].
        \label{eq:Hsp-final-canonical}
\end{equation}
Since \(\mathcal C_0\) is a spatial scalar, this gives
\begin{equation}
        \{\mathcal C_0,H_{\rm sp}[N^i]\}
        =
        \mathcal L_{\vec N}\mathcal C_0
        =
        N^iD_i\mathcal C_0 .
        \label{eq:C0-spatial-diffeomorphism}
\end{equation}
This term therefore vanishes weakly on
\(\mathcal C_0\weak0\).
 Consequently,
\begin{equation}
        0
        \weak
        \dot{\mathcal C}_0
        =
        N\mathcal C_1 .
        \label{eq:C0-complete-consistency}
\end{equation}
On a regular ADM chart,
\begin{equation}
        N\neq0,
\end{equation}
and hence
\begin{equation}
        \mathcal C_1\weak0 .
        \label{eq:C1-secondary}
\end{equation}
The condition \(N\neq0\) is only the regularity condition required for
the ADM foliation. 

We have therefore reached the second member of the scalar Dirac chain,
\begin{equation}
        \mathcal C_0
        \ \longrightarrow\
        \mathcal C_1 .
\end{equation}
Preservation of \(\mathcal C_0\) fixes neither the auxiliary field
\(\nu\) nor the physical lapse \(N\).  Instead, it generates a new
scalar constraint.

The next question is whether \(\mathcal C_1\) is independent and whether
the pair
\((\mathcal C_0,\mathcal C_1)\) is second class.  This is determined by
their mutual Poisson bracket.

\subsection{The scalar Dirac matrix}
\label{sec:scalar-dirac}

The Dirac algorithm has produced two scalar constraints,
\begin{equation}
        \mathcal C_0\weak0,
        \qquad
        \mathcal C_1\weak0 .
\end{equation}
To determine whether they remove one canonical pair, we must determine the
rank of their mutual Poisson-bracket matrix.  If the scalar Dirac matrix is
invertible, then \((\mathcal C_0,\mathcal C_1)\) is a second-class pair.
Preservation of \(\mathcal C_1\) then fixes the auxiliary multiplier
\(\nu\), and the scalar Dirac chain terminates.

The scalar Dirac matrix is
\begin{equation}
        \mathbb D_{AB}(x,y)
        =
        \begin{pmatrix}
        \{\mathcal C_0(x),\mathcal C_0(y)\}
        &
        \{\mathcal C_0(x),\mathcal C_1(y)\}
        \\[1mm]
        \{\mathcal C_1(x),\mathcal C_0(y)\}
        &
        \{\mathcal C_1(x),\mathcal C_1(y)\}
        \end{pmatrix}.
        \label{eq:scalar-Dirac-matrix}
\end{equation}
The first entry vanishes strongly,
\begin{equation}
        \{\mathcal C_0(x),\mathcal C_0(y)\}=0 ,
\end{equation}
so the essential quantity is the off-diagonal kernel
\begin{equation}
        \Delta(x,y)
        :=
        \{\mathcal C_0(x),\mathcal C_1(y)\}.
        \label{eq:Delta-definition}
\end{equation}
If \(\Delta\) is invertible, then the full matrix
\(\mathbb D\) is invertible independently of the value of
\(\{\mathcal C_1,\mathcal C_1\}\).

At this point there is a potentially important complication.
Since \(\mathcal C_0\) and \(\mathcal C_1\) contain spatial derivatives of
the canonical variables, their bracket could in principle have the form
\begin{equation}
        \Delta(x,y)
        =
        d_0(x)\delta^{(2)}(x-y)
        +
        d_1^i(x)D_i\delta^{(2)}(x-y)
        +
        \cdots .
\end{equation}
The rank question would then be a differential-operator problem rather
than a pointwise algebraic one.  The arbitrary-lapse identity established
in Proposition~\ref{prop:arbitrary-lapse-factorization} rules this out
completely.

\begin{proposition}[Ultralocality of the scalar Dirac kernel]
\label{prop:Delta-exact-ultralocal}
Assume \(s\neq0\) and the exact arbitrary-lapse identity
\begin{equation}
        \{\mathcal C_0(x),H_0[N]\}
        =
        N(x)\mathcal C_1(x)
        \label{eq:C0-H0-factorization-recall}
\end{equation}
for every smooth lapse \(N\).  Then
\begin{equation}
        \Delta(x,y)
        =
        \mathfrak d(x)\delta^{(2)}(x-y)
        \label{eq:Delta-exact-multiplication}
\end{equation}
for some local phase-space function \(\mathfrak d(x)\).  In particular,
no derivative of the delta function occurs.
\end{proposition}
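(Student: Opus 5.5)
The plan is to obtain ultralocality from the arbitrary-lapse identity \eqref{eq:C0-H0-factorization-recall}, the strong vanishing \eqref{eq:C0-self-local-zero} of \(\{\mathcal C_0,\mathcal C_0\}\), and the Jacobi identity. No component of \(\mathcal C_1\) needs to be computed. I work with the smeared constraints \(\mathcal C_0[f]\) of \eqref{eq:C0-smeared} and define \(\mathcal C_1[h]:=\int\dd^2x\,h\,\mathcal C_1\), where \(f,g,h\) are compactly supported weight-one densities. Smearing \eqref{eq:C0-H0-factorization-recall} against \(g\) gives
\begin{equation*}
        \{\mathcal C_0[g],H_0[N]\}=\mathcal C_1[gN]
\end{equation*}
for every smooth scalar \(N\).

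\emph{Step 1 (Jacobi).} Apply the Jacobi identity to \(\mathcal C_0[f]\), \(\mathcal C_0[g]\) and \(H_0[N]\). The term \(\{\{\mathcal C_0[f],\mathcal C_0[g]\},H_0[N]\}\) vanishes because the self-bracket is zero strongly. This leaves
\begin{equation*}
        \{\mathcal C_0[f],\mathcal C_1[gN]\}
        =
        \{\mathcal C_0[g],\mathcal C_1[fN]\}.
\end{equation*}
In terms of the kernel \eqref{eq:Delta-definition}, this reads
\begin{equation*}
        \int\!\!\int f(x)\,\Delta(x,y)\,g(y)N(y)
        =
        \int\!\!\int g(x)\,\Delta(x,y)\,f(y)N(y).
\end{equation*}

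\emph{Step 2 (symmetry, then the commutator with \(N\)).} Setting \(N=1\) shows that \(\Delta(x,y)=\Delta(y,x)\) as a distribution. Using this symmetry to rewrite the right-hand side for general \(N\), I obtain
\begin{equation*}
        \int\!\!\int f(x)\,g(y)\,\bigl[N(y)-N(x)\bigr]\,\Delta(x,y)=0
\end{equation*}
for all \(f,g,N\). Hence \((N(y)-N(x))\Delta(x,y)=0\) as a distribution, for every smooth \(N\). Equivalently, \(\Delta\), viewed as an operator, commutes with multiplication by every smooth function.

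\emph{Step 3 (structure of \(\Delta\)).} The constraints \(\mathcal C_0\) and \(\mathcal C_1\) are local functions of the canonical fields and finitely many of their spatial derivatives. Their bracket is therefore a finite sum
\begin{equation*}
        \Delta(x,y)=\sum_{|I|\le k} d^I(x)\,\partial^x_I\delta^{(2)}(x-y),
\end{equation*}
supported on the diagonal, with \(I\) a multi-index of order \(|I|\). The main point to get right is deducing \(d^I=0\) for \(|I|\ge1\) from Step 2. I would argue by downward induction on the top order \(k\). Work in a coordinate patch and choose \(N\) equal to a coordinate function \(x^i\), cut off outside the support of \(f,g\). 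Then \((y^i-x^i)\,\partial_I\delta(x-y)\) is a combination of \(\partial_{I-e_i}\delta\), and its top-order piece has coefficient \(\pm I_i\,d^I\). Vanishing of the result forces \(d^I=0\) for every \(|I|=k\ge1\). Iterating down to order zero leaves \(\Delta=\mathfrak d(x)\,\delta^{(2)}(x-y)\), with \(\mathfrak d\) a local phase-space function.

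Locality and finite order are the only inputs beyond Proposition~\ref{prop:arbitrary-lapse-factorization}. Both are guaranteed on the region \(s\neq0\), where every quantity is a polynomial in the canonical variables, their derivatives, and \(s^{-1}\).
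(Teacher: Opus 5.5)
Your proposal is correct and follows the paper's proof essentially step by step. The Jacobi identity, combined with the strongly vanishing self-bracket of $\mathcal C_0$ and the factorization, gives $N(y)\Delta(x,y)=N(x)\Delta(y,x)$; setting $N=1$ yields symmetry, hence $[N(y)-N(x)]\Delta(x,y)=0$; and locality plus finite order then force a zeroth-order kernel. Your smeared formulation and the explicit downward induction with $N$ a cut-off coordinate function only make precise the paper's remark that a positive-order $\Delta$ would have a commutator with multiplication by $N$ containing derivatives of $N$.
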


\begin{proof}
The proof uses only the two strong identities
\begin{equation}
        \{\mathcal C_0(x),\mathcal C_0(y)\}=0
        \label{eq:C0-self-recall-Delta}
\end{equation}
and
\begin{equation}
        \{\mathcal C_0(x),H_0[N]\}
        =
        N(x)\mathcal C_1(x).
        \label{eq:C0-H0-recall-Delta}
\end{equation}

Apply the Jacobi identity to
\(\mathcal C_0(x)\), \(\mathcal C_0(y)\), and \(H_0[N]\).
Because the self-bracket
\eqref{eq:C0-self-recall-Delta} vanishes strongly, we obtain
\begin{equation}
        \{\mathcal C_0(x),
        \{\mathcal C_0(y),H_0[N]\}\}
        =
        \{\mathcal C_0(y),
        \{\mathcal C_0(x),H_0[N]\}\}.
        \label{eq:Jacobi-Delta-step}
\end{equation}
Using \eqref{eq:C0-H0-recall-Delta} on both sides gives
\begin{equation}
        N(y)\Delta(x,y)
        =
        N(x)\Delta(y,x).
        \label{eq:N-Delta-relation-before-symmetry}
\end{equation}

Setting \(N=1\) shows first that
\begin{equation}
        \Delta(x,y)
        =
        \Delta(y,x).
        \label{eq:Delta-symmetric-kernel}
\end{equation}
Returning to an arbitrary lapse therefore gives
\begin{equation}
        \bigl[N(y)-N(x)\bigr]\Delta(x,y)=0
        \label{eq:N-difference-Delta-zero}
\end{equation}
for every smooth function \(N\).

Now regard \(\Delta\) as an operator acting on compactly supported test
functions:
\begin{equation}
        (\Delta f)(x)
        :=
        \int_{\Sigma_t}\dd^2y\,
        \Delta(x,y)f(y).
        \label{eq:Delta-operator-definition}
\end{equation}
Equation~\eqref{eq:N-difference-Delta-zero} is equivalent to
\begin{equation}
        \Delta(Nf)
        =
        N\,\Delta f
        \label{eq:Delta-commutes-multiplication}
\end{equation}
for every smooth \(N\).  Thus \(\Delta\) commutes with multiplication by
every smooth function.

By locality, \(\Delta\) is a finite-order differential operator on the
test function.  In a local coordinate chart it may therefore be written
as
\begin{equation}
        \Delta
        =
        \sum_{|\alpha|\le r}
        b_\alpha(x)\partial^\alpha .
        \label{eq:Delta-chart-expansion}
\end{equation}
If \(r\ge1\), the commutator
\begin{equation}
        [\Delta,N]f
        =
        \Delta(Nf)-N\Delta f
\end{equation}
contains derivatives of \(N\).  Since
Eq.~\eqref{eq:Delta-commutes-multiplication} requires this commutator to
vanish for every smooth \(N\), the coefficients multiplying all
positive-order derivatives must vanish.  Hence only the zeroth-order term
remains:
\begin{equation}
        \Delta f
        =
        \mathfrak d(x)f(x).
\end{equation}
Equivalently,
\begin{equation}
        \Delta(x,y)
        =
        \mathfrak d(x)\delta^{(2)}(x-y),
\end{equation}
which proves the proposition.
\end{proof}
The scalar-rank problem has therefore become pointwise and algebraic.
Wherever
\begin{equation}
        \mathfrak d(x)\neq 0
        \label{eq:d-nonzero-condition}
\end{equation}
the kernel is invertible, with
\begin{equation}
        \Delta^{-1}(x,y)
        =
        \frac{1}{\mathfrak d(x)}
        \delta^{(2)}(x-y).
        \label{eq:Delta-local-inverse}
\end{equation}

\paragraph{The scalar pair is second class.}

It is not necessary to compute
\(\{\mathcal C_1,\mathcal C_1\}\).
Define
\begin{equation}
        \Gamma(x,y)
        :=
        \{\mathcal C_1(x),\mathcal C_1(y)\}.
\end{equation}
Then
\begin{equation}
        \mathbb D
        =
        \begin{pmatrix}
        0 & \Delta
        \\
        -\Delta^{T} & \Gamma
        \end{pmatrix}.
        \label{eq:scalar-Dirac-block-form}
\end{equation}
Suppose that \((u,v)\) lies in the kernel of \(\mathbb D\).  The first
row gives
\begin{equation}
        \Delta v=0 .
\end{equation}
If \(\mathfrak d\neq0\), then \(\Delta\) is invertible, and hence
\begin{equation}
        v=0 .
\end{equation}
The second row then reduces to
\begin{equation}
        \Delta^T u=0 ,
\end{equation}
which similarly implies \(u=0\).  Thus the full scalar Dirac matrix has no
kernel.  Consequently,
\begin{equation}
        \mathcal C_0,\mathcal C_1
        \quad\hbox{form a second-class pair whenever}\quad
        \mathfrak d\neq0 .
        \label{eq:C0-C1-second-class-condition}
\end{equation}

\paragraph{Preservation of \(\mathcal C_1\).}

It remains to impose the final consistency condition.  Using the reduced
Hamiltonian,
\begin{equation}
\begin{aligned}
        0
        \weak
        \dot{\mathcal C}_1(x)
        =
        \{\mathcal C_1(x),H_0[N]\}
        +
        \{\mathcal C_1(x),H_{\rm sp}[N^i]\}+
        \int_{\Sigma_t}\dd^2y\,
        \nu(y)
        \{\mathcal C_1(x),\mathcal C_0(y)\}.
\end{aligned}
        \label{eq:C1-full-preservation}
\end{equation}

Since \(\mathcal C_1\) is a spatial scalar, the second term is its Lie
derivative along the shift and therefore vanishes weakly on
\(\mathcal C_1\weak0\).  Define
\begin{equation}
        F_N(x)
        :=
        \{\mathcal C_1(x),H_0[N]\}.
        \label{eq:FN-definition}
\end{equation}
Using
\begin{equation}
        \{\mathcal C_1(x),\mathcal C_0(y)\}
        =
        -\Delta(y,x)
        =
        -\mathfrak d(x)\delta^{(2)}(x-y),
\end{equation}
where the symmetry
\eqref{eq:Delta-symmetric-kernel} has been used, the preservation equation
becomes
\begin{equation}
        0
        \weak
        F_N(x)-\mathfrak d(x)\nu(x).
        \label{eq:C1-preservation-d}
\end{equation}
Hence, wherever \(\mathfrak d\neq0\),
\begin{equation}
        \nu(x)
        =
        \frac{F_N(x)}{\mathfrak d(x)} .
        \label{eq:nu-fixed-by-C1}
\end{equation}

This equation determines the auxiliary variable \(\nu\), not the physical
lapse \(N\).  The quantity \(F_N\) may depend on \(N\) and on spatial
derivatives of \(N\), but for every freely chosen lapse the equation above
returns the corresponding value of \(\nu\).  Thus the physical lapse
remains arbitrary, as required by diffeomorphism invariance.

We have therefore established the complete scalar Dirac mechanism:
\begin{equation}
        \mathcal C_0
        \ \longrightarrow\
        \mathcal C_1,
        \qquad
        \{\mathcal C_0(x),\mathcal C_1(y)\}
        =
        \mathfrak d(x)\delta^{(2)}(x-y),
        \qquad
        \mathfrak d\neq0
\end{equation}
implies that the two scalar constraints are second class and that
preservation of \(\mathcal C_1\) fixes \(\nu\).  The scalar chain therefore
terminates without fixing the physical lapse.

It remains only to establish that
\(\mathfrak d\) is nonzero on a physically relevant region of phase space.
We do this by evaluating it at the maximally symmetric BINMG vacuum.

\subsection{Rank of the scalar bracket at the maximally symmetric vacuum}
\label{sec:vacuum-scalar-rank}

The preceding subsection reduced the scalar-rank problem to the local
coefficient
\begin{equation}
        \Delta(x,y)
        =
        \{\mathcal C_0(x),\mathcal C_1(y)\}
        =
        \mathfrak d(x)\delta^{(2)}(x-y).
        \label{eq:Delta-local-form-vacuum-section}
\end{equation}
We therefore do not need the full nonlinear expression for
\(\mathfrak d\).  It is enough to evaluate it at a regular point of phase
space.  If it is nonzero at the maximally symmetric vacuum, continuity
guarantees an open neighborhood of that vacuum on which the scalar Dirac
matrix has the same rank.

Recall that
\begin{equation}
        \mathcal C_1
        =
        \{\mathcal C_0,H_0[1]\}.
        \label{eq:C1-unit-lapse-vacuum}
\end{equation}
The unit lapse in this equation is only a convenient way of defining the
local phase-space function \(\mathcal C_1\).  It is not a gauge choice:
Proposition~\ref{prop:arbitrary-lapse-factorization} established the exact
identity
\[
        \{\mathcal C_0(x),H_0[N]\}
        =
        N(x)\mathcal C_1(x)
\]
for arbitrary lapse \(N(x)\).

We evaluate the bracket on a time-symmetric slice of the maximally
symmetric BINMG vacuum and use spatial Riemann normal coordinates at the
point under consideration.  The auxiliary vacuum relation is
\begin{equation}
        \bar q_{\mu\nu}
        =
        a\,\bar g_{\mu\nu},
        \qquad
        a=\beta^2,
        \qquad
        \beta>0 .
        \label{eq:q-vacuum-a}
\end{equation}
At the chosen point the background canonical data can be written as
\begin{equation}
\begin{gathered}
        \bar\gamma_{ij}=\delta_{ij},
        \qquad
        \bar S^{ij}=\beta\delta^{ij},
        \qquad
        \bar p_{ij}=0,
        \qquad
        \bar\rho^{ij}=0,
        \\
        \bar s=\beta^2=a,
        \qquad
        {}^{(2)}\bar R=2\Lambda .
\end{gathered}
        \label{eq:vacuum-local-canonical-data}
\end{equation}
Covariantly,
\(\bar S^{ij}=\beta\sqrt{\bar\gamma}\,\bar\gamma^{ij}\).
The scalar constraint is satisfied \eqref{eq:scalar-constraint-vac}.

\paragraph{Why the linearized constraints are sufficient.}

To evaluate the full nonlinear Poisson bracket at a fixed background, only
the first variations of the two functions are required.  Indeed, if
\(\chi_A\) and \(\chi_B\) are phase-space functions, then
\begin{equation}
        \left.
        \{\chi_A(x),\chi_B(y)\}
        \right|_{\bar z}
        =
        \{
        \chi_A^{(1)}(x),
        \chi_B^{(1)}(y)
        \}_{\rm lin}.
        \label{eq:background-rank-identity}
\end{equation}
Thus the calculation below is an evaluation of the full nonlinear Dirac
matrix at the vacuum; it is not an inference of the nonlinear
degree-of-freedom count from the linearized spectrum.

Introduce the canonical perturbations
\begin{equation}
        h_{ij}:=\delta\gamma_{ij},
        \qquad
        t^{ij}:=\delta S^{ij},
        \qquad
        r^{ij}:=\delta\rho^{ij},
        \qquad
        \ell_{ij}:=\delta p_{ij},
        \label{eq:vacuum-canonical-perturbations}
\end{equation}
with traces
\begin{equation}
        h:=\bar\gamma^{ij}h_{ij},
        \qquad
        t:=\bar\gamma_{ij}t^{ij},
        \qquad
        r:=\bar\gamma_{ij}r^{ij},
        \qquad
        \ell:=\bar\gamma^{ij}\ell_{ij}.
        \label{eq:vacuum-perturbation-traces}
\end{equation}

\paragraph{Linearized form of \(\mathcal C_0\).}

At the vacuum,
\begin{equation}
        \delta s
        =
        \bar s(\bar S^{-1})_{ij}\delta S^{ij}
        =
        \beta t .
        \label{eq:delta-s-vacuum}
\end{equation}
In two spatial dimensions,
\begin{equation}
        \delta({}^{(2)}R)
        =
        D_iD_jh^{ij}
        -
        D^2h
        -
        \Lambda h .
        \label{eq:R2-linear-vacuum}
\end{equation}
Therefore
\begin{equation}
        \mathcal C_0^{(1)}
        =
        -\frac{\beta}{2}t
        -
        \frac c2
        \left(
        D_iD_jh^{ij}
        -
        D^2h
        -
        \Lambda h
        \right).
        \label{eq:C0-linear-vacuum-full}
\end{equation}

We shall also need the momentum-dependent quadratic part,
\begin{equation}
        \left.
        \mathcal C_0^{(2)}
        \right|_{\rm mom}
        =
        -\frac1{2c}
        \left(
        \ell^2-\ell_{ij}\ell^{ij}
        \right).
        \label{eq:C0-quadratic-momentum}
\end{equation}

\paragraph{Linearized form of \(\mathcal C_1\).}

Since
\begin{equation}
        \mathcal C_1
        =
        \{\mathcal C_0,H_0[1]\},
\end{equation}
its linear part is
\begin{equation}
        \mathcal C_1^{(1)}
        =
        \{\mathcal C_0^{(1)},H_0^{(2)}[1]\}
        +
        \{\mathcal C_0^{(2)},H_0^{(1)}[1]\}.
        \label{eq:C1-linear-unit-lapse-split}
\end{equation}

To first order in the canonical momenta,
\begin{equation}
        \Pi_{ij}
        =
        \ell_{ij}-\ell\bar\gamma_{ij},
        \qquad
        \pi^{ij}
        =
        r^{ij}
        -
        2\beta\ell^{ij}
        +
        \beta\ell\bar\gamma^{ij}.
        \label{eq:vacuum-linear-Pi-pi}
\end{equation}
At the same background,
\begin{equation}
        \mathcal Q(S,\Pi)
        =
        \beta\Pi_{ij}\Pi^{ij}
        =
        \beta\ell_{ij}\ell^{ij}.
        \label{eq:vacuum-linear-Q}
\end{equation}
The complete momentum-dependent quadratic part of the unit-lapse
Hamiltonian density is therefore
\begin{equation}
\begin{aligned}
        \mathcal H_{0,{\rm mom}}^{(2)}
        ={}&
        -\frac2c
        \left(
        r^{ij}\ell_{ij}
        -
        r\ell
        \right)
        +
        \frac{3\beta}{c}\ell_{ij}\ell^{ij}
        -
        \frac{2\beta}{c}\ell^2
        \\
        &-
        \frac2\beta
        \left(D_j\ell^j{}_i\right)
        \left(D_k\ell^{ki}\right).
\end{aligned}
        \label{eq:H0-momentum-quadratic-vacuum}
\end{equation}

For unit lapse, the spatial integral of
\(cD_iD_j\mathcal A^{ij}\) is a boundary term.  The bulk part of the
Hamiltonian linear in the configuration perturbations needed below is
\begin{equation}
        H_0^{(1)}[1]
        =
        -\frac12
        \int_{\Sigma_t}\dd^2x\,t ,
        \label{eq:H0-linear-vacuum}
\end{equation}
up to the boundary terms excluded in the local analysis.

The three contributions to \(\mathcal C_1^{(1)}\) can now be displayed
separately.  The curvature part of \(\mathcal C_0^{(1)}\), bracketed with
the \(r^{ij}\ell_{ij}-r\ell\) term in
Eq.~\eqref{eq:H0-momentum-quadratic-vacuum}, gives
\begin{equation}
        D_iD_j\ell^{ij}
        +
        \Lambda\ell .
        \label{eq:C1-vacuum-contribution-one}
\end{equation}
The term \(-\beta t/2\) in \(\mathcal C_0^{(1)}\), bracketed with the
complete \(\ell_{ij}\)-dependence of
Eq.~\eqref{eq:H0-momentum-quadratic-vacuum}, gives
\begin{equation}
        -2D_iD_j\ell^{ij}
        -
        \frac{\beta}{c}r
        +
        \frac{\beta^2}{c}\ell .
        \label{eq:C1-vacuum-contribution-two}
\end{equation}
Finally,
\(\left.\mathcal C_0^{(2)}\right|_{\rm mom}\), bracketed with
Eq.~\eqref{eq:H0-linear-vacuum}, gives
\begin{equation}
        -\frac1{2c}\ell .
        \label{eq:C1-vacuum-contribution-three}
\end{equation}
Adding the three terms yields
\begin{equation}
        \mathcal C_1^{(1)}
        =
        -D_iD_j\ell^{ij}
        -
        \frac{\beta}{c}r
        +
        \alpha_\Lambda\ell ,
        \label{eq:C1-linear-vacuum-full}
\end{equation}
where
\begin{equation}
        \alpha_\Lambda
        :=
        \Lambda
        +
        \frac{a-\tfrac12}{c}.
        \label{eq:alpha-Lambda-definition}
\end{equation}

\paragraph{The scalar bracket at the vacuum.}

By Eq.~\eqref{eq:background-rank-identity},
\begin{equation}
        \Delta_{\rm vac}(x,y)
        =
        \{
        \mathcal C_0^{(1)}(x),
        \mathcal C_1^{(1)}(y)
        \}_{\rm lin}.
        \label{eq:Delta-vac-linear-bracket}
\end{equation}
The nonzero canonical brackets are
\begin{equation}
        \{h_{ij}(x),r^{kl}(y)\}
        =
        \delta_{(i}{}^k\delta_{j)}{}^l
        \delta^{(2)}(x-y),
        \label{eq:h-r-bracket-vac}
\end{equation}
and
\begin{equation}
        \{t^{ij}(x),\ell_{kl}(y)\}
        =
        \delta_{(k}{}^i\delta_{l)}{}^j
        \delta^{(2)}(x-y).
        \label{eq:t-l-bracket-vac}
\end{equation}
In particular,
\begin{equation}
        \{h(x),r(y)\}
        =
        2\delta^{(2)}(x-y),
        \qquad
        \{t(x),\ell(y)\}
        =
        2\delta^{(2)}(x-y).
        \label{eq:vacuum-trace-brackets}
\end{equation}

There are only two contributions.  The \(h\)--\(r\) bracket gives
\begin{equation}
\begin{aligned}
 &\left\{
 -\frac c2
 \left(
 D_iD_jh^{ij}-D^2h-\Lambda h
 \right)(x),
 -\frac{\beta}{c}r(y)
 \right\}
 =
 -\frac{\beta}{2}D^2\delta^{(2)}(x-y)
 -
 \beta\Lambda\,\delta^{(2)}(x-y).
\end{aligned}
        \label{eq:vacuum-bracket-hr}
\end{equation}
The \(t\)--\(\ell\) bracket gives
\begin{equation}
\begin{aligned}
 &\left\{
 -\frac{\beta}{2}t(x),
 -D_iD_j\ell^{ij}(y)
 +
 \alpha_\Lambda\ell(y)
 \right\}
 =
 +\frac{\beta}{2}D^2\delta^{(2)}(x-y)
 -
 \beta\alpha_\Lambda\,
 \delta^{(2)}(x-y).
\end{aligned}
        \label{eq:vacuum-bracket-tl}
\end{equation}

The derivative-of-delta terms cancel exactly, in agreement with
Proposition~\ref{prop:Delta-exact-ultralocal}.  What remains is
\begin{align}
        \Delta_{\rm vac}(x,y)
        &=
        -\beta
        \left(
        \Lambda+\alpha_\Lambda
        \right)
        \delta^{(2)}(x-y)=
        -\frac{\beta}{c}
        \left(
        2c\Lambda+a-\frac12
        \right)
        \delta^{(2)}(x-y).
        \label{eq:Delta-vacuum-before-vac-relation}
\end{align}
Using the vacuum relation
\(
2c\Lambda=1-a
\),
we find
\begin{equation}
        \Delta_{\rm vac}(x,y)
        =
        -\frac{\beta}{2c}
        \delta^{(2)}(x-y)
        =
        -\frac{\sqrt a}{2c}
        \delta^{(2)}(x-y).
        \label{eq:Delta-vacuum-kernel}
\end{equation}
Equivalently,
\begin{equation}
        \mathfrak d_{\rm vac}
        =
        -\frac{\beta}{2c}
        =
        -\frac{\sqrt a}{2c}.
        \label{eq:Delta-vacuum-direct}
\end{equation}
Since \(a>0\) and \(c\neq0\),
\begin{equation}
        \mathfrak d_{\rm vac}\neq0 .
\end{equation}

\begin{proposition}[Scalar rank at the BINMG vacuum]
\label{prop:vacuum-rank}
At the nondegenerate maximally symmetric BINMG vacuum,
\begin{equation}
        \Delta_{\rm vac}(x,y)
        =
        -\frac{\sqrt a}{2c}
        \delta^{(2)}(x-y),
\end{equation}
and therefore
\(\mathcal C_0\) and \(\mathcal C_1\) form a second-class pair.
\end{proposition}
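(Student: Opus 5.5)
The plan is to assemble the proposition from three results already in hand: Proposition~\ref{prop:Delta-exact-ultralocal}, the block-matrix kernel argument of Sec.~\ref{sec:scalar-dirac}, and the vacuum evaluation just carried out. First, the full nonlinear kernel \(\Delta(x,y)\), evaluated at the vacuum point \(\bar z\) of \eqref{eq:vacuum-local-canonical-data}, is computed by the bilinear identity \eqref{eq:background-rank-identity}. This gives \(\Delta_{\rm vac}=\{\mathcal C_0^{(1)}(x),\mathcal C_1^{(1)}(y)\}_{\rm lin}\). The identity needs \(\mathcal C_0\) and \(\mathcal C_1\) only to first order about \(\bar z\), so I will use \eqref{eq:C0-linear-vacuum-full} and \eqref{eq:C1-linear-vacuum-full}.

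The substantive step is the linear form of \(\mathcal C_1\). Since \(\mathcal C_1=\{\mathcal C_0,H_0[1]\}\), its linear part is given by \eqref{eq:C1-linear-unit-lapse-split}. This splits into three brackets:
\begin{itemize}
\item curvature part of \(\mathcal C_0^{(1)}\) with the \(r\ell\) terms of \eqref{eq:H0-momentum-quadratic-vacuum};
\item \(-\beta t/2\) with all the \(\ell\)-dependence;
\item the quadratic momentum part of \(\mathcal C_0\) with \(H_0^{(1)}[1]=-\tfrac12\int t\).
\end{itemize}
Each contribution is a short contraction using \(\bar S^{ij}=\beta\delta^{ij}\), \(\bar s=\beta^2\), and \(\delta s=\beta t\). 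The step I expect to be the main obstacle is the bookkeeping in this piece: the \(\beta\) factors from \(\pi^{ij}\) in \eqref{eq:vacuum-linear-Pi-pi}, and the sign from \(\{t,\ell\}=2\delta\). I would cross-check it by verifying that the resulting \(\alpha_\Lambda\) makes the derivative-of-delta terms cancel, which Proposition~\ref{prop:Delta-exact-ultralocal} guarantees.

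With \(\mathcal C_1^{(1)}\) fixed, only two brackets survive, \(h\)–\(r\) and \(t\)–\(\ell\). The \(D^2\delta\) terms cancel, and the ultralocal remainder is \(-\beta(\Lambda+\alpha_\Lambda)\delta^{(2)}\). Inserting the vacuum relation \eqref{eq:two-c-Lambda}, \(2c\Lambda=1-a\), reduces this to \(-\beta/(2c)\). Then \(\beta=\sqrt a\) from \eqref{eq:unique-vacuum-relation} gives the stated kernel, \(\mathfrak d_{\rm vac}=-\sqrt a/(2c)\).

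For second-classness, note that \(a=\beta^2>0\) on the same-sign Lorentzian branch and \(c=\sigma/(2m^2)\neq0\), so \(\mathfrak d_{\rm vac}\neq0\). By \eqref{eq:Delta-local-inverse}, \(\Delta\) is then invertible at the vacuum. The kernel argument applied to the block form \eqref{eq:scalar-Dirac-block-form} then shows \(\mathbb D\) has trivial kernel, whatever \(\Gamma\) is. By \eqref{eq:C0-C1-second-class-condition}, \(\mathcal C_0\) and \(\mathcal C_1\) form a second-class pair.

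Finally, \(\mathfrak d\) is a local phase-space function and so continuous on the regular region \(s\neq0\). This extends the nonvanishing, and hence the second-class conclusion, to an open neighborhood of the vacuum.
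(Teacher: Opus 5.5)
Your proposal is correct and is essentially the paper's argument: you evaluate the kernel from the first variations of $\mathcal C_0$ and $\mathcal C_1$ exactly as in the computation preceding the proposition, and the paper's proof then just invokes $\mathfrak d_{\rm vac}=-\sqrt a/(2c)\neq0$ together with the block-kernel argument for $\mathbb D$. One correction to your proposed cross-check: $\alpha_\Lambda$ multiplies $\ell$ with no derivatives, so the cancellation of the $D^2\delta^{(2)}$ terms tests only the coefficients of $r$ and $D_iD_j\ell^{ij}$ in $\mathcal C_1^{(1)}$ and cannot detect errors in the $\beta$-factors you flag, which enter only through $\alpha_\Lambda$; a check that does test them is that $F_N$ at $N=1$ equals $\alpha_\Lambda=a/(2c)$ at the vacuum, so $\nu=F_N/\mathfrak d_{\rm vac}$ must reproduce the vacuum value $\nu=-\beta$ (unit lapse, normal coordinates).
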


\begin{proof}
Since
\begin{equation}
        \mathfrak d_{\rm vac}
        =
        -\frac{\sqrt a}{2c}
        \neq0,
\end{equation}
the off-diagonal kernel is invertible, with
\begin{equation}
        \Delta_{\rm vac}^{-1}(x,y)
        =
        -\frac{2c}{\sqrt a}
        \delta^{(2)}(x-y).
\end{equation}
The block argument of Sec.~\ref{sec:scalar-dirac} then implies that the
full scalar Dirac matrix is invertible.
\end{proof}

\paragraph{The regular vacuum-connected component.}

Because \(\mathfrak d\) is a local function of the canonical fields and a
finite number of their spatial derivatives, the nonzero value
\(\mathfrak d_{\rm vac}\) implies the existence of an open neighborhood
of the vacuum on which \(\mathfrak d\neq0\).

We now define the region on which the nonlinear degree-of-freedom theorem
will be stated.  Let \(\mathcal U_{\rm vac}\) be the connected component
containing the maximally symmetric BINMG vacuum of the open regular region
on which
\begin{equation}
        s\neq0,
        \qquad
        \det q_{\mu\nu}\neq0,
        \qquad
        \mathfrak d(x)\neq0 .
        \label{eq:Uvac-regularity-conditions}
\end{equation}
The three conditions have distinct origins.  The condition \(s\neq0\)
allows the algebraic elimination of \(v^i\);
\(\det q_{\mu\nu}\neq0\) is required for the auxiliary-metric formulation;
and \(\mathfrak d\neq0\) is precisely the scalar-rank condition.

On \(\mathcal U_{\rm vac}\),
\begin{equation}
        \mathcal C_0,
        \qquad
        \mathcal C_1
        \quad\hbox{form a second-class pair}.
\end{equation}
No claim is made here about configurations at which
\begin{equation}
        s=0,
        \qquad
        \det q_{\mu\nu}=0,
        \qquad
        \hbox{or}
        \qquad
        \mathfrak d=0 .
\end{equation}
At such loci the algebraic structure or the rank of the Dirac system can
change, and those cases require a separate analysis.

Finally, the nonvanishing mutual bracket also proves that
\(\mathcal C_1\) is independent of the previously existing constraints.
If \(\mathcal C_1\) were weakly a regular linear combination of those
constraints, then its bracket with \(\mathcal C_0\) would vanish weakly:
the self-bracket of \(\mathcal C_0\) vanishes strongly, its bracket with
the tangential first-class generators is proportional to
\(\mathcal C_0\), and its bracket with the normal first-class
representative is weakly proportional to
\[
        \{\mathcal C_0,H_0[1]\}
        =
        \mathcal C_1
        \weak0 .
\]
This would imply
\begin{equation}
        \{\mathcal C_0,\mathcal C_1\}\weak0,
\end{equation}
contradicting
\begin{equation}
        \{\mathcal C_0(x),\mathcal C_1(y)\}
        =
        \mathfrak d(x)\delta^{(2)}(x-y),
        \qquad
        \mathfrak d\neq0
\end{equation}
on \(\mathcal U_{\rm vac}\).  Thus \(\mathcal C_1\) is a genuinely
independent secondary constraint.

We have therefore completed the nonlinear scalar-rank analysis.  On
\(\mathcal U_{\rm vac}\),
\begin{equation}
        \mathcal C_0
        \ \longrightarrow\
        \mathcal C_1,
        \qquad
        \{\mathcal C_0(x),\mathcal C_1(y)\}
        =
        \mathfrak d(x)\delta^{(2)}(x-y),
        \qquad
        \mathfrak d\neq0 .
\end{equation}
Hence \((\mathcal C_0,\mathcal C_1)\) is a second-class pair.
Preservation of \(\mathcal C_1\) fixes the auxiliary variable \(\nu\),
not the physical lapse, and no further scalar constraint is generated.
The scalar sector therefore removes exactly two phase-space directions.

\subsection{Reduced and extended bookkeeping}
\label{sec:reduced-extended-bookkeeping}

The nonlinear calculation above was carried out after eliminating the
algebraic variable \(v^i\), while \(N\), \(N^i\), and \(\nu\) were treated
as nondynamical variables.  This is the most economical formulation for
the actual constraint analysis.  For the final degree-of-freedom count,
however, it is useful to compare it with the fully extended Dirac phase
space, in which all of these variables and their conjugate momenta are
restored.

No new constraint is derived in this subsection.  We are only restoring
the canonical pairs that were suppressed or eliminated and keeping track
of the corresponding first- and second-class constraints.

\paragraph{Reduced dynamical formulation.}

The dynamical canonical phase space used above is
\begin{equation}
        (\gamma_{ij},\rho^{ij};
        S^{ij},p_{ij}).
\end{equation}
Each symmetric tensor in two spatial dimensions has three independent
components, so this phase space has
\begin{equation}
        N_{\rm phase}^{\rm red}=12
\end{equation}
dimensions per spatial point.

On \(\mathcal U_{\rm vac}\), the scalar sector contains the second-class
pair
\begin{equation}
        \mathcal C_0\weak0,
        \qquad
        \mathcal C_1\weak0,
        \label{eq:reduced-scalar-pair-bookkeeping}
\end{equation}
with
\begin{equation}
        \{\mathcal C_0(x),\mathcal C_1(y)\}
        =
        \mathfrak d(x)\delta^{(2)}(x-y),
        \qquad
        \mathfrak d\neq0 .
        \label{eq:reduced-scalar-bracket-bookkeeping}
\end{equation}
The same reduced phase space carries three first-class generators of
spacetime diffeomorphisms: one normal and two tangential.  Hence
\begin{equation}
        N_{\rm dof}
        =
        \frac12
        \left(
        12
        -
        2\times3
        -
        2
        \right)
        =
        2 .
        \label{eq:reduced-dof-check}
\end{equation}

We now recover the same result without eliminating the auxiliary
canonical pairs.

\paragraph{Restoring the pair \((\nu,\pi_\nu)\).}

First restore the canonical pair
\begin{equation}
        (\nu,\pi_\nu),
\end{equation}
while continuing to regard \(N\) and \(N^i\) as prescribed gauge
functions.  The corresponding total Hamiltonian is
\begin{equation}
        H_T^{(\nu)}
        =
        H_0[N]
        +
        H_{\rm sp}[N^i]
        +
        \mathcal C_0[\nu]
        +
        \pi_\nu[u_\nu].
        \label{eq:HT-normal-chain}
\end{equation}
Since the action contains no \(\dot\nu\),
\begin{equation}
        \pi_\nu\weak0
\end{equation}
is a primary constraint.  Its preservation gives
\begin{equation}
        \dot\pi_\nu(x)
        =
        \{\pi_\nu(x),H_T^{(\nu)}\}
        =
        -\mathcal C_0(x)
        \weak0,
        \label{eq:pi-nu-gives-C0}
\end{equation}
and hence
\begin{equation}
        \mathcal C_0\weak0 .
\end{equation}

The next consistency condition is
\begin{equation}
        \dot{\mathcal C}_0(x)
        \weak
        N(x)\mathcal C_1(x),
        \label{eq:C0-gives-C1-extended}
\end{equation}
where the tangential term is proportional to the already imposed
constraint.  On a regular ADM chart \(N\neq0\), this produces
\begin{equation}
        \mathcal C_1\weak0 .
\end{equation}

Preservation of \(\mathcal C_1\) gives
\begin{equation}
\begin{aligned}
        0
        \weak
        \dot{\mathcal C}_1(x)
        ={}&
        F_N(x)
        +
        \int_{\Sigma_t}\dd^2y\,
        \nu(y)
        \{\mathcal C_1(x),\mathcal C_0(y)\},
\end{aligned}
        \label{eq:C1-preservation-bookkeeping}
\end{equation}
where
\begin{equation}
        F_N(x)
        :=
        \{\mathcal C_1(x),H_0[N]\}.
        \label{eq:FN-definition-normal-chain}
\end{equation}
The spatial-diffeomorphism contribution has again been omitted because it
is proportional to \(\mathcal C_1\).

Using
\begin{equation}
        \{\mathcal C_1(x),\mathcal C_0(y)\}
        =
        -\mathfrak d(x)\delta^{(2)}(x-y),
\end{equation}
we obtain
\begin{equation}
        0
        \weak
        F_N(x)-\mathfrak d(x)\nu(x).
        \label{eq:C1-fixes-nu-bookkeeping}
\end{equation}
Since \(\mathfrak d\neq0\) on \(\mathcal U_{\rm vac}\), this determines
\(\nu\):
\begin{equation}
        \chi_\nu(x)
        :=
        \nu(x)-\nu_*(x)
        \weak0,
        \qquad
        \nu_*(x)
        :=
        \frac{F_N(x)}{\mathfrak d(x)} .
        \label{eq:chi-nu-pointwise}
\end{equation}

Thus the normal auxiliary chain is
\begin{equation}
        \pi_\nu
        \ \longrightarrow\
        \mathcal C_0
        \ \longrightarrow\
        \mathcal C_1
        \ \longrightarrow\
        \chi_\nu .
        \label{eq:nu-Dirac-chain-four}
\end{equation}
The last condition fixes the auxiliary coordinate \(\nu\); it does not
restrict the physical lapse \(N\).  Preservation of \(\chi_\nu\) then
fixes the multiplier \(u_\nu\), so the chain terminates.

The four constraints
\begin{equation}
        \pi_\nu,
        \qquad
        \chi_\nu,
        \qquad
        \mathcal C_0,
        \qquad
        \mathcal C_1
        \label{eq:normal-four-constraints}
\end{equation}
form a second-class block.  To see this directly, order them as
\begin{equation}
        \Phi_A
        =
        (\pi_\nu,\chi_\nu,\mathcal C_0,\mathcal C_1).
\end{equation}
Their weak Poisson matrix has the schematic form
\begin{equation}
        \Omega^{(\nu)}_{AB}
        \weak
        \begin{pmatrix}
        0 & -I & 0 & 0 \\
        I & 0 & A & B \\
        0 & -A^T & 0 & \Delta \\
        0 & -B^T & -\Delta^T & \Gamma
        \end{pmatrix},
        \label{eq:normal-four-by-four-Dirac-matrix}
\end{equation}
where
\begin{equation}
        \Delta(x,y)
        =
        \mathfrak d(x)\delta^{(2)}(x-y)
\end{equation}
is invertible on \(\mathcal U_{\rm vac}\).

Indeed, if
\begin{equation}
        \Omega^{(\nu)}
        \begin{pmatrix}
        a\\ b\\ c\\ d
        \end{pmatrix}
        =0,
\end{equation}
the first row gives \(b=0\); the third then gives
\(\Delta d=0\), hence \(d=0\); the fourth gives
\(\Delta^Tc=0\), hence \(c=0\); and finally the second gives
\(a=0\).  Thus the normal auxiliary sector contributes exactly four
second-class constraints.

\paragraph{Restoring the mixed auxiliary pair \((v^i,\pi_i^{(v)})\).}

Before the algebraic elimination carried out in
Sec.~\ref{sec:auxiliary-shift-elimination}, the mixed auxiliary component
belongs to the canonical pair
\begin{equation}
        (v^i,\pi_i^{(v)}).
\end{equation}
Since no \(\dot v^i\) occurs, there are two primary constraints
\begin{equation}
        \pi_i^{(v)}\weak0 .
\end{equation}
Their preservation gives the two algebraic conditions
\begin{equation}
        \chi_{(v)}^i
        :=
        v^i-v_*^i
        \weak0,
        \qquad
        v_*^i
        =
        -\frac2sS^{ij}D_kp^k{}_j .
        \label{eq:v-vector-second-class-block}
\end{equation}
Because
\begin{equation}
        \{\pi_i^{(v)}(x),\chi_{(v)}^j(y)\}
        =
        -\delta_i{}^j\delta^{(2)}(x-y),
        \label{eq:v-block-identity}
\end{equation}
these four conditions form another invertible second-class block.

It is also useful to verify that eliminating this block before performing
the scalar calculation does not alter the Poisson brackets used there.
In the ordering
\begin{equation}
        \bigl(\pi_i^{(v)},\chi_{(v)}^i\bigr),
\end{equation}
the vector Dirac matrix has the form
\begin{equation}
        \Omega^{(v)}
        =
        \begin{pmatrix}
        0 & -I \\
        I & C
        \end{pmatrix},
\end{equation}
with inverse
\begin{equation}
        \bigl(\Omega^{(v)}\bigr)^{-1}
        =
        \begin{pmatrix}
        C & I \\
        -I & 0
        \end{pmatrix}.
\end{equation}
For functionals \(F\) and \(G\) independent of
\(v^i\) and \(\pi_i^{(v)}\),
\begin{equation}
        \{F,\pi_i^{(v)}\}
        =
        \{G,\pi_i^{(v)}\}
        =
        0 .
\end{equation}
Because the lower-right block of
\(\bigl(\Omega^{(v)}\bigr)^{-1}\) vanishes, the Dirac-bracket correction
also vanishes:
\begin{equation}
        \{F,G\}_{D(v)}
        =
        \{F,G\}.
\end{equation}
Thus eliminating \(v^i\) first and then using ordinary Poisson brackets
for the remaining canonical variables is legitimate.

The two auxiliary sectors together therefore contribute
\begin{equation}
        4+4=8
\end{equation}
second-class constraints,
\begin{equation}
        \pi_i^{(v)},
        \quad
        \chi_{(v)}^i,
        \quad
        \pi_\nu,
        \quad
        \chi_\nu,
        \quad
        \mathcal C_0,
        \quad
        \mathcal C_1 ,
        \label{eq:eight-auxiliary-second-class}
\end{equation}
where each vector quantity counts twice.

\paragraph{Fully extended phase space.}

Finally restore the lapse and shift as canonical coordinates,
\begin{equation}
        (N,\pi_N),
        \qquad
        (N^i,\pi_i).
\end{equation}
The complete configuration space is then
\begin{equation}
        \gamma_{ij},
        \qquad
        S^{ij},
        \qquad
        N,
        \qquad
        N^i,
        \qquad
        \nu,
        \qquad
        v^i .
\end{equation}
Counting components in two spatial dimensions gives
\begin{equation}
        3+3+1+2+1+2=12
\end{equation}
configuration variables, or
\begin{equation}
        N_{\rm phase}^{\rm ext}=24
\end{equation}
phase-space dimensions per spatial point.

The variables \(N,N^i,\nu,v^i\) carry no independent velocities and give
the six primary constraints
\begin{equation}
        \pi_N,
        \qquad
        \pi_i,
        \qquad
        \pi_\nu,
        \qquad
        \pi_i^{(v)} .
        \label{eq:six-primary-constraints-full}
\end{equation}
The invertible six-velocity Hessian of
Sec.~\ref{sec:velocity-block-explained} shows that there are no additional
primary constraints in the dynamical
\((\gamma_{ij},S^{ij})\) sector.

The auxiliary sector contains the eight second-class constraints displayed
in Eq.~\eqref{eq:eight-auxiliary-second-class}.  The diffeomorphism sector
contains three gauge functions, each entering with one time derivative,
and therefore three two-stage first-class chains,
\begin{equation}
        \Phi_\alpha^\star
        \ \longrightarrow\
        \mathcal H_\alpha^\star,
        \qquad
        \alpha=\perp,1,2 .
        \label{eq:projected-diffeomorphism-chains}
\end{equation}
The stars indicate that the normal primary and secondary representatives
may have to be projected away from the auxiliary second-class sector, as
explained in Sec.~\ref{sec:diffeomorphism-generators}.  This changes the
representatives but not the number of gauge directions.

The fully extended theory therefore contains
\begin{equation}
        3\ {\rm primary}
        +
        3\ {\rm secondary}
        =
        6\ {\rm first\!-\!class\ constraints}
        \label{eq:six-first-class-final}
\end{equation}
and
\begin{equation}
        8\ {\rm second\!-\!class\ constraints}.
\end{equation}

The fully extended Dirac count is consequently
\begin{equation}
        N_{\rm dof}
        =
        \frac12
        \left(
        24
        -
        2\times6
        -
        8
        \right)
        =
        2 .
        \label{eq:extended-dof-check}
\end{equation}
This is exactly the same physical count as in the reduced formulation,
Eq.~\eqref{eq:reduced-dof-check}.

\subsection{The full degree-of-freedom count at a glance}
\label{sec:dof-count-at-a-glance}

We collect the count in one place.  The configuration variables are
\begin{equation}
        \gamma_{ij},
        \qquad
        S^{ij},
        \qquad
        N,
        \qquad
        N^i,
        \qquad
        \nu,
        \qquad
        v^i .
\end{equation}
Since a symmetric tensor in two spatial dimensions has three independent
components, the corresponding phase-space dimensions are
\begin{equation}
\begin{aligned}
24
={}
\underbrace{6}_{(\gamma_{ij},\rho^{ij})}
+
\underbrace{6}_{(S^{ij},p_{ij})}
+
\underbrace{2}_{(N,\pi_N)}
+
\underbrace{4}_{(N^i,\pi_i)}
+
\underbrace{2}_{(\nu,\pi_\nu)}
+
\underbrace{4}_{(v^i,\pi_i^{(v)})}.
\end{aligned}
\label{eq:full-phase-space-summary}
\end{equation}

Three constraint sectors act on these.  First, diffeomorphism invariance
gives three two-stage first-class chains,
\begin{equation}
        \Phi_\alpha^\star
        \ \longrightarrow\
        \mathcal H_\alpha^\star,
        \qquad
        \alpha=\perp,1,2 ,
\end{equation}
and hence
\begin{equation}
        3+3=6
        \qquad\text{first-class constraints}.
\end{equation}
They remove
\begin{equation}
        2\times6=12
\end{equation}
phase-space directions.

Second, the mixed auxiliary component gives the four second-class
constraints
\begin{equation}
        \pi_i^{(v)}\weak0,
        \qquad
        \chi_{(v)}^i
        :=
        v^i-v_*^i
        \weak0,
        \qquad
        i=1,2 ,
\end{equation}
which remove the complete four-dimensional canonical sector associated
with \(v^i\).

Third, the auxiliary normal sector gives the chain
\begin{equation}
        \pi_\nu
        \ \longrightarrow\
        \mathcal C_0
        \ \longrightarrow\
        \mathcal C_1
        \ \longrightarrow\
        \chi_\nu ,
\label{eq:normal-chain-summary}
\end{equation}
where
\begin{equation}
        \chi_\nu
        :=
        \nu-\nu_* .
\end{equation}
The nonlinear input is
\begin{equation}
        \{\mathcal C_0(x),\mathcal C_1(y)\}
        =
        \mathfrak d(x)\delta^{(2)}(x-y),
        \qquad
        \mathfrak d\neq0
        \quad\text{on }\mathcal U_{\rm vac}.
\label{eq:scalar-rank-summary}
\end{equation}
Consequently
\begin{equation}
        \pi_\nu,\qquad
        \chi_\nu,\qquad
        \mathcal C_0,\qquad
        \mathcal C_1
\end{equation}
form another four-dimensional second-class block.

Thus the complete constraint content is
\begin{equation}
        N_{\rm phase}=24,
        \qquad
        N_{\rm FC}=6,
        \qquad
        N_{\rm SC}=4+4=8 .
\label{eq:full-count-summary}
\end{equation}
Therefore
\begin{equation}
\begin{aligned}
N_{\rm phase}^{\rm phys}
&=
24
-
2\times6
-
8
\\
&=
4,
\end{aligned}
\qquad
N_{\rm dof}
=
\frac{4}{2}
=
2 .
\label{eq:dof-summary-final}
\end{equation}

For bookkeeping purposes the same reduction may be pictured as
\begin{equation}
24
\ \xrightarrow{\;4\ {\rm SC}\;(v^i)\;}
20
\ \xrightarrow{\;4\ {\rm SC}\;(\nu)\;}
16
\ \xrightarrow{\;6\ {\rm FC}\;}
4
\ \xrightarrow{\;/2\;}
2\ {\rm local\ degrees\ of\ freedom}.
\label{eq:dof-count-flow}
\end{equation}
The order of the arrows is for display only; it is not the order of the
Dirac algorithm.

\section{Degree-of-freedom theorem}
\label{sec:degree-of-freedom-theorem}

We can now state the main result.

Recall that \(\mathcal U_{\rm vac}\) denotes the connected component
containing the maximally symmetric BINMG vacuum of the regular region
defined by
\begin{equation}
        s=\det S^{ij}\neq0,
        \qquad
        \det q_{\mu\nu}\neq0,
        \qquad
        \mathfrak d\neq0 .
        \label{eq:theorem-regularity-recall}
\end{equation}
The first condition allows the algebraic elimination of \(v^i\), the
second guarantees the nondegeneracy required by the auxiliary-metric
formulation, and the third states that the scalar Dirac matrix has
nonzero rank.  In addition,
\begin{equation}
        c=\frac{\sigma}{2m^2}\neq0
\end{equation}
for finite \(m^2\) and \(\sigma=\pm1\), so the Legendre map of the
dynamical sector is nonsingular.

\begin{theorem}[Nonlinear degrees of freedom of BINMG]
\label{thm:BINMG-two-dof}
On the regular vacuum-connected component
\(\mathcal U_{\rm vac}\), Born--Infeld New Massive Gravity has exactly two
local physical configuration-space degrees of freedom per spatial point.
\end{theorem}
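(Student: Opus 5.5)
The proof assembles results already established in Sec.~\ref{sec:nonlinear-hamiltonian-analysis} and closes with the Dirac counting formula $N_{\rm dof}=\tfrac12(N_{\rm phase}-2N_{\rm FC}-N_{\rm SC})$. I would work on the fully extended phase space, with configuration variables $(\gamma_{ij},S^{ij},N,N^i,\nu,v^i)$. In two spatial dimensions this gives $24$ phase-space dimensions per point, as in \eqref{eq:full-phase-space-summary}. The auxiliary formulation \eqref{eq:q-aux-action-main} is algebraically equivalent to the determinant action \eqref{eq:BINMG-action-background} on $\det q_{\mu\nu}\neq0$, by \eqref{eq:q-equals-A}. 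Any count obtained there therefore applies to BINMG itself on $\mathcal U_{\rm vac}$.

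\textbf{Step 1 (primary constraints).} Since $c\neq0$, the velocity Hessian \eqref{eq:velocity-Hessian-block} is invertible by \eqref{eq:velocity-Hessian-inverse}. Hence the only primary constraints are the six momenta $\pi_N,\pi_i,\pi_\nu,\pi^{(v)}_i$ of the velocity-free variables.

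\textbf{Step 2 (auxiliary sectors).} On $s\neq0$ the mixed sector closes into the four second-class constraints $(\pi^{(v)}_i,\chi^i_{(v)})$, by \eqref{eq:v-block-identity}. Eliminating them leaves the brackets of $v$-independent functionals unchanged, so the reduced computations are legitimate.

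\textbf{Step 3 (scalar chain).} Primary $\pi_\nu$ generates $\mathcal C_0$. The strong identity $\{\mathcal C_0,\mathcal C_0\}=0$ together with Proposition~\ref{prop:arbitrary-lapse-factorization} shows that preserving $\mathcal C_0$ yields $\mathcal C_1$ and constrains neither $N$ nor $\nu$. Proposition~\ref{prop:Delta-exact-ultralocal} shows the bracket $\{\mathcal C_0,\mathcal C_1\}$ is ultralocal. Proposition~\ref{prop:vacuum-rank}, combined with the definition of $\mathcal U_{\rm vac}$ (where $\mathfrak d\neq0$), then shows that preserving $\mathcal C_1$ fixes $\nu$ through $\chi_\nu$. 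The chain terminates, and the block \eqref{eq:normal-four-by-four-Dirac-matrix} has trivial kernel, so it contributes four second-class constraints.

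\textbf{Step 4 (diffeomorphisms).} The three gauge functions each enter $\delta N,\delta N^i$ only through first time derivatives. The Castellani construction therefore gives three two-stage first-class chains $\Phi^\star_\alpha\to\mathcal H^\star_\alpha$, with $\mathcal H_i$ identified in \eqref{eq:G-i-equals-H-i} and $\mathcal H^\star_\perp$ obtained by the projection of Appendix~\ref{app:improved-projection}. This gives six first-class constraints in total.

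\textbf{Step 5 (count).} Combining the steps, $N_{\rm dof}=\tfrac12(24-12-8)=2$. As a cross-check, the reduced $12$-dimensional phase space gives the same result via \eqref{eq:reduced-dof-check}.

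The delicate point is Step 4: showing the six first-class constraints are independent of the second-class ones and from each other. Concretely, the normal generator $\mathcal H^\star_\perp$ must be genuinely first class after projection, and it must not be weakly a combination of $\mathcal C_0,\mathcal C_1$. I would argue this from independence of the gauge parameters. The map $(\epsilon^\perp,\epsilon^i)\mapsto(\delta N,\delta N^i)$ has identity coefficient matrix on $(\dot\epsilon^\perp,\dot\epsilon^i)$, so the three chains are independent. The projection modifies $\mathcal H_N$ only by multiples of second-class constraints, with coefficients determined by the invertible block $\Delta$; it therefore neither removes nor adds gauge directions. Independence of $\mathcal C_1$ from the first-class set is then exactly the argument given above: a weak relation would force $\{\mathcal C_0,\mathcal C_1\}\weak0$, contradicting $\mathfrak d\neq0$.
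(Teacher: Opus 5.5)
Your proposal is correct and follows the paper's own proof essentially step for step: both work on the 24-dimensional extended phase space, use the invertible velocity Hessian ($c\neq0$) to exclude further primary constraints, take the $4+4$ second-class constraints from the $v^i$ block and from the chain $\pi_\nu\to\mathcal C_0\to\mathcal C_1\to\chi_\nu$ (using $\mathfrak d\neq0$ on $\mathcal U_{\rm vac}$), and take six first-class constraints from the three Castellani chains, giving $\tfrac12(24-2\times6-8)=2$. Your closing remarks on the projected normal generator and on the independence of $\mathcal C_1$ repeat the paper's own arguments in Appendix~\ref{app:improved-projection} and at the end of Sec.~\ref{sec:vacuum-scalar-rank}, so nothing is missing relative to the paper.
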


\begin{proof}
The Hamiltonian analysis established the following three ingredients.

First, the velocity Hessian of the dynamical variables
\((\gamma_{ij},S^{ij})\) is invertible for \(c\neq0\).  Hence there are no
additional primary constraints in this sector.

Second, spacetime diffeomorphism invariance gives three independent gauge
functions.  In the fully extended phase space they generate three
two-stage first-class chains and therefore six first-class constraints.

Third, the auxiliary sector contains eight second-class constraints.
Four arise from the two components of \(v^i\) and their primary momenta.
The remaining four arise from the normal auxiliary chain
\begin{equation}
        \pi_\nu
        \ \longrightarrow\
        \mathcal C_0
        \ \longrightarrow\
        \mathcal C_1
        \ \longrightarrow\
        \chi_\nu .
\end{equation}
The nontrivial step is the scalar pair
\((\mathcal C_0,\mathcal C_1)\), whose mutual bracket is
\begin{equation}
        \{\mathcal C_0(x),\mathcal C_1(y)\}
        =
        \mathfrak d(x)\delta^{(2)}(x-y),
\end{equation}
with \(\mathfrak d\neq0\) throughout \(\mathcal U_{\rm vac}\).

The fully extended phase space has twenty-four dimensions per spatial
point.  The physical phase-space dimension is therefore
\begin{equation}
        N_{\rm phase}^{\rm phys}
        =
        24
        -
        2\times6
        -
        8
        =
        4 .
\end{equation}
Hence
\begin{equation}
        N_{\rm dof}
        =
        \frac12 N_{\rm phase}^{\rm phys}
        =
        2 .
\end{equation}
\end{proof}

The nonlinear count therefore agrees with the two local polarizations of a
massive spin-two field in three dimensions.  The essential nonlinear
mechanism is the second-class scalar pair
\((\mathcal C_0,\mathcal C_1)\), which removes one canonical pair that
would otherwise represent an additional scalar degree of freedom.

\subsection{Hamiltonian interpretation}
\label{sec:hamiltonian-of-theory}

It is useful to clarify one point that can otherwise look paradoxical.
BINMG possesses local physical degrees of freedom, yet after the auxiliary
second-class sector has been eliminated its bulk Hamiltonian is a linear
combination of diffeomorphism constraints.

There is no contradiction.  These statements concern different aspects of
the theory.  The number of local degrees of freedom is determined by the
dimension of the reduced physical phase space, whereas the form of the
Hamiltonian reflects spacetime diffeomorphism invariance.

After eliminating the second-class constraints, either explicitly or by
using the corresponding Dirac bracket, the bulk Hamiltonian may be written
as
\begin{equation}
        H_{\rm bulk}[N,N^i]
        =
        \int_{\Sigma_t}\dd^2x
        \left(
        N\mathcal H_\perp^\star
        +
        N^i\mathcal H_i
        \right),
        \label{eq:bulk-Hamiltonian-final}
\end{equation}
where \(\mathcal H_i\) is the tangential diffeomorphism generator found in
Sec.~\ref{sec:diffeomorphism-generators}, while
\(\mathcal H_\perp^\star\) is an appropriate first-class representative of
the normal generator.  On the auxiliary second-class surface,
\begin{equation}
        \mathcal H_\perp^\star
        \weak
        \mathcal H_0 .
\end{equation}

If the lapse and shift are themselves retained as canonical variables, the
total Hamiltonian also contains the primary first-class constraints,
\begin{equation}
        H_T
        =
        H_{\rm bulk}
        +
        \int_{\Sigma_t}\dd^2x\,
        u^\alpha\Phi_\alpha^\star,
        \qquad
        \alpha=\perp,1,2 .
        \label{eq:extended-total-Hamiltonian-final}
\end{equation}

For a compact spatial slice without boundary,
\begin{equation}
        H_{\rm bulk}[N,N^i]\weak0 .
        \label{eq:bulk-H-weak-zero}
\end{equation}
This is the familiar Hamiltonian manifestation of general covariance.  It
does not imply the absence of local physical degrees of freedom: the latter
is determined by the constraint count established above.

If the spatial slice has a boundary or an asymptotic end, the bulk
generator generally fails to be functionally differentiable by itself and
must be supplemented by an appropriate boundary functional,
\begin{equation}
        H[N,N^i]
        =
        H_{\rm bulk}[N,N^i]
        +
        H_{\partial\Sigma}[N,N^i].
        \label{eq:Hamiltonian-with-boundary-final}
\end{equation}
On the full bulk constraint surface,
\begin{equation}
        H[N,N^i]
        \weak
        H_{\partial\Sigma}[N,N^i].
\end{equation}
For boundary conditions admitting asymptotic time translations and
rotations, the corresponding boundary terms give the conserved energy and
angular momentum, respectively \cite{ADM,DeserTekin}.  Their detailed form
depends on the boundary conditions and plays no role in the local
degree-of-freedom count.

\section{Domain of validity and relation to previous results}
\label{sec:scope-and-discussion}

Theorem~\ref{thm:BINMG-two-dof} is a local statement in phase space.  Its
domain is the vacuum-connected regular component
\(\mathcal U_{\rm vac}\), characterized by
\begin{equation}
        s\neq0,
        \qquad
        \det q_{\mu\nu}\neq0,
        \qquad
        \mathfrak d\neq0 .
        \label{eq:scope-Uvac}
\end{equation}
These conditions have distinct origins.  The first is required to solve
algebraically for \(v^i\); the second is required for the auxiliary metric
\(q_{\mu\nu}\) to be nondegenerate; and the third is the nonzero-rank
condition for the scalar Dirac matrix.  The spatial metric is assumed
nondegenerate throughout, as required by the ADM decomposition.

The condition
\begin{equation}
        N\neq0
\end{equation}
has a different status.  It is only the regularity condition for the ADM
foliation and is not part of the definition of
\(\mathcal U_{\rm vac}\).

The nonlinear calculation gave the exact ultralocal bracket
\begin{equation}
        \{\mathcal C_0(x),\mathcal C_1(y)\}
        =
        \mathfrak d(x)\delta^{(2)}(x-y),
        \label{eq:scope-Delta-local}
\end{equation}
and at the maximally symmetric vacuum
\begin{equation}
        \mathfrak d_{\rm vac}
        =
        -\frac{\sqrt a}{2c}
        \neq0 .
        \label{eq:scope-d-vac}
\end{equation}
Continuity therefore guarantees an open neighborhood of the vacuum on
which the scalar pair remains second class.  The component
\(\mathcal U_{\rm vac}\) is the connected component of this regular
nonzero-rank region that contains the maximally symmetric vacuum.

The theorem does not assert that the rank remains unchanged everywhere in
phase space.  In particular, configurations satisfying
\begin{equation}
        \mathfrak d=0,
        \qquad
        s=0,
        \qquad\text{or}\qquad
        \det q_{\mu\nu}=0
        \label{eq:rank-changing-loci}
\end{equation}
must be analyzed separately.  At such loci the constraint structure may
change rank.

Nor have we proved that a Hamiltonian trajectory beginning in
\(\mathcal U_{\rm vac}\) can never reach its boundary.  Whether regular
initial data can evolve toward a rank-changing locus is a distinct global
dynamical question.  The result established here is that on every open
portion of a solution that remains inside \(\mathcal U_{\rm vac}\), the
local count is exactly two.

Because the scalar-rank calculation is local, any locally maximally
symmetric solution has the same nonzero vacuum value of
\(\mathfrak d\) at its regular points.  This includes locally
AdS\(_3\) quotients such as BTZ black holes \cite{BTZ}.  This local
statement does not, however, imply that all globally inequivalent
constant-curvature solutions belong to the same connected component of
the full phase space.

The linearized spectrum of Ref.~\cite{TekinAuxMetric2026} was not used in
the proof.  Agreement with its two massive spin-two polarizations is an
independent check.  Likewise, the present constraint count is not a
statement about stability, positivity of energy, tachyonic instabilities,
or boundary unitarity.  Those are separate questions that depend on the
chosen vacuum and boundary conditions.

The constraint structure is closely analogous to that found in
Hamiltonian analyses of New Massive Gravity
\cite{BlagojevicCvetkovic,SadeghShirzad,HohmRouthTownsendZhang,CSlike}:
a scalar constraint produces an independent secondary constraint, and
their nonvanishing mutual bracket removes the additional scalar canonical
pair.  We have not attempted to construct an explicit canonical
transformation between the first-order NMG variables used in those works
and the auxiliary-metric variables employed here.

What is special to the Born--Infeld formulation is the particularly useful
ADM structure of the auxiliary variables.  The mixed component \(v^i\)
enters algebraically and quadratically, while the normal component \(\nu\)
enters linearly.  This gives the starting point
\begin{equation}
        \mathcal C_0\weak0 .
\end{equation}
The crucial nonlinear information is then contained in the two identities
\begin{equation}
        \{\mathcal C_0,\mathcal C_0\}=0,
        \qquad
        \{\mathcal C_0(x),H_0[N]\}
        =
        N(x)\mathcal C_1(x),
\end{equation}
followed by the nonzero scalar rank
\begin{equation}
        \{\mathcal C_0(x),\mathcal C_1(y)\}
        =
        \mathfrak d(x)\delta^{(2)}(x-y),
        \qquad
        \mathfrak d\neq0 .
\end{equation}
The proof is therefore intrinsic to BINMG and does not rely on embedding
the theory into a parent bigravity model, in contrast with the motivation
of Ref.~\cite{PaulosTolley}.

Finally, the degree-of-freedom count itself is independent of the choice
\(\sigma=\pm1\).  Which sign yields a physically acceptable massive mode
about a particular vacuum is a spectral question studied at the
linearized level in Refs.~\cite{GSTBINMG,Gullu:2010st,TekinAuxMetric2026}.
The Dirac analysis determines how many local modes propagate, not whether
their energies have the desired sign.

\section{Conclusions}
\label{sec:conclusions}

We have carried out a fully nonlinear Hamiltonian analysis of
Born--Infeld New Massive Gravity in its auxiliary-metric formulation.  On
the vacuum-connected component \(\mathcal U_{\rm vac}\) of the regular
nonzero-rank region of phase space, BINMG propagates exactly two local
degrees of freedom per spatial point. The mechanism can be summarized succinctly.  The dynamical Legendre map
for \((\gamma_{ij},S^{ij})\) is nonsingular.  The mixed auxiliary component
\(v^i\) is algebraic and can be eliminated when \(\det S^{ij}\neq0\).  The normal auxiliary component \(\nu\) enters
linearly and imposes the scalar constraint \(\mathcal C_0\weak0\). The essential nonlinear step is the preservation of this constraint while
keeping the physical lapse arbitrary:
\begin{equation}
        \{\mathcal C_0(x),H_0[N]\}
        =
        N(x)\mathcal C_1(x).
\end{equation}
Thus consistency generates an independent secondary scalar constraint
rather than fixing the lapse.  Their mutual bracket is exactly
ultralocal,
\begin{equation}
        \{\mathcal C_0(x),\mathcal C_1(y)\}
        =
        \mathfrak d(x)\delta^{(2)}(x-y),
\end{equation}
and at the maximally symmetric BINMG vacuum
\begin{equation}
        \mathfrak d_{\rm vac}
        =
        -\frac{\sqrt a}{2c}
        \neq0 .
\end{equation}
Consequently \((\mathcal C_0,\mathcal C_1)\) is a second-class pair on
\(\mathcal U_{\rm vac}\), and preservation of
\(\mathcal C_1\) fixes the auxiliary variable \(\nu\), not the physical
lapse.

In the fully extended phase space there are twenty-four phase-space
dimensions, six first-class diffeomorphism constraints, and eight
second-class auxiliary constraints.  Hence
\begin{equation}
        24
        -
        2\times6
        -
        8
        =
        4
\end{equation}
physical phase-space dimensions, or
\begin{equation}
        N_{\rm dof}=2 .
\end{equation}

This nonlinear result agrees with the two massive spin-two polarizations
of the linearized theory
\cite{GSTBINMG,TekinAuxMetric2026}, with the known Hamiltonian structure of
NMG
\cite{BlagojevicCvetkovic,SadeghShirzad,HohmRouthTownsendZhang,CSlike},
and with the motivation from the bigravity construction of
Ref.~\cite{PaulosTolley}.  None of these results was used as an input to
the nonlinear constraint analysis.

The significance of this result is that BINMG is not a finite-order
higher-curvature theory.  The Born--Infeld determinant defines a
nonpolynomial theory which, when expanded in powers of the curvature,
contains terms of arbitrarily high order.  An analysis performed at any
finite order in that expansion would therefore not determine the
constraint structure of the full theory and, in particular, could not
exclude the appearance of additional nonlinear degrees of freedom at
higher orders.  The Hamiltonian analysis given here is instead performed
directly on the complete Born--Infeld action, without truncating the
curvature expansion.  It therefore determines the local nonlinear
degree-of-freedom content of the full theory to all orders: on
\(\mathcal U_{\rm vac}\), the infinite curvature series resums into a
theory with exactly two propagating local degrees of freedom and no
additional scalar mode.  This is highly nontrivial, since generic
higher-curvature corrections can alter the canonical structure and
introduce additional propagating modes.  The persistence of only two
local degrees of freedom is therefore a property of the complete
Born--Infeld structure, not something that could have been inferred from
any finite truncation.  In this sense, the result provides an exact
nonperturbative determination of the local mode content of an all-orders
higher-curvature gravity theory.

The theorem is local in phase space.  We have not classified the
rank-changing loci
\begin{equation}
        \det S^{ij}=0,
        \qquad
        \det q_{\mu\nu}=0,
        \qquad
        \mathfrak d=0,
\end{equation}
nor have we shown that trajectories beginning in
\(\mathcal U_{\rm vac}\) remain there globally.  The constraint count also
does not address stability, positivity of energy, tachyons, or boundary
unitarity. What has been established is more precise: on the regular
vacuum-connected component \(\mathcal U_{\rm vac}\), the full nonlinear
BINMG theory has exactly the two local degrees of freedom of a massive
spin-two field in three dimensions, with no additional scalar degree of
freedom.

\begin{acknowledgments}
Z.S.K. acknowledges financial support from the TUBITAK-BIDEB 2211
program.  B.T. would like to thank his former students
T.~\c{C}.~\c{S}i\c{s}man and \.I.~G\"ull\"u, with whom he introduced
BINMG and investigated many aspects of the theory, and M.~Do\u{g}ru,
whose master's thesis \cite{DogruThesis} pursued the constraint analysis
of the same theory in the purely metric formulation.  Although that
formulation did not lead to the complete nonlinear result obtained here,
the thesis was an important earlier exploration of the problem.
\end{acknowledgments}
\appendix

\section{Projection with respect to the auxiliary second-class constraints}
\label{app:improved-projection}

Before the auxiliary second-class constraints are eliminated, a convenient
representative of a diffeomorphism generator need not have weakly vanishing
Poisson brackets with them.  This does not signal a failure of the gauge
symmetry.  One may always replace the generator by an equivalent
representative whose Hamiltonian flow is tangent to the second-class
constraint surface.

Let
\begin{equation}
        \chi_A\weak0
        \label{eq:appendix-second-class-constraints}
\end{equation}
denote a set of second-class constraints.  The index \(A\) may include both
discrete labels and spatial position.  To distinguish their full Poisson
matrix from the scalar kernel
\(\Delta=\{\mathcal C_0,\mathcal C_1\}\) used in the main text, define
\begin{equation}
        \Omega_{AB}(x,y)
        :=
        \{\chi_A(x),\chi_B(y)\}.
        \label{eq:appendix-second-class-matrix}
\end{equation}
Second-classness means that, in a neighborhood of the constraint surface,
this kernel possesses an inverse \(\Omega^{AB}(x,y)\) satisfying
\begin{equation}
        \int\dd^2z\,
        \Omega^{AB}(x,z)\Omega_{BC}(z,y)
        =
        \delta^A{}_C\delta^{(2)}(x-y),
        \label{eq:appendix-Delta-inverse}
\end{equation}
and similarly with the order reversed.

We now use condensed notation, with integrations over repeated continuous
labels understood.  Given any phase-space functional \(F\), define its
projected representative by
\begin{equation}
        F^\star
        :=
        F
        -
        \{F,\chi_A\}
        \Omega^{AB}\chi_B .
        \label{eq:appendix-improved-F}
\end{equation}
Since the correction is proportional to the second-class constraints,
\begin{equation}
        F^\star\weak F .
        \label{eq:Fstar-weak-F}
\end{equation}
Thus \(F\) and \(F^\star\) define the same function on the
second-class surface.

More importantly, the flow generated by \(F^\star\) preserves that
surface.  Indeed,
\begin{align}
        \{F^\star,\chi_C\}
        =
        \{F,\chi_C\}
        -
        \{F,\chi_A\}
        \Omega^{AB}
        \{\chi_B,\chi_C\}-
        \bigl\{
        \{F,\chi_A\}\Omega^{AB},
        \chi_C
        \bigr\}\chi_B .
        \label{eq:appendix-improved-bracket-full}
\end{align}
The last term is explicitly proportional to \(\chi_B\).  Hence, on the
second-class surface,
\begin{align}
        \{F^\star,\chi_C\}
        &\weak
        \{F,\chi_C\}
        -
        \{F,\chi_A\}
        \Omega^{AB}\Omega_{BC}
        =
        \{F,\chi_C\}
        -
        \{F,\chi_C\}
        =
        0 .
        \label{eq:appendix-improved-proof}
\end{align}
Therefore
\begin{equation}
        \{F^\star,\chi_A\}\weak0 .
        \label{eq:Fstar-preserves-surface}
\end{equation}

This construction should not be confused with the origin of the gauge
symmetry itself.  The projection guarantees only that the chosen
representative has a flow tangent to the second-class surface.  The fact
that the normal transformation belongs to a diffeomorphism gauge chain
follows from spacetime covariance, independently of this construction.

Applied to the normal generator, this is the reason for writing
\begin{equation}
        \mathcal H_\perp^\star
        \weak
        \mathcal H_0 .
\end{equation}
The two functions agree on the auxiliary second-class surface, but
\(\mathcal H_\perp^\star\) is chosen so that its Poisson brackets with the
auxiliary second-class constraints vanish weakly.  The same construction
applies to the primary normal representative in the fully extended phase
space.

The equivalent description uses the Dirac bracket.  For the same
second-class set,
\begin{equation}
        \{F,G\}_{\rm D}
        =
        \{F,G\}
        -
        \{F,\chi_A\}
        \Omega^{AB}
        \{\chi_B,G\}.
        \label{eq:appendix-Dirac-bracket}
\end{equation}
By construction,
\begin{equation}
        \{\chi_A,F\}_{\rm D}=0
        \label{eq:Dirac-chi-strong-zero}
\end{equation}
strongly for every \(F\).  The second-class constraints may therefore be
imposed strongly after passing to the Dirac bracket:
\begin{equation}
        \chi_A=0 .
\end{equation}

Thus there are two equivalent descriptions.  One may retain the ordinary
Poisson bracket and use projected representatives such as \(F^\star\), or
one may eliminate the second-class variables and work with the Dirac
bracket.  Both descriptions induce the same dynamics on the physical
phase space and give the same degree-of-freedom count.

\section{Independent check in the diagonal ultralocal sector}
\label{app:exact-ultralocal}

We give here an independent check of the canonical formulas used in the
main text.  The calculation is deliberately restricted to a spatially
homogeneous diagonal sector and performs the Legendre transformation
directly from the spacetime metric components, without using the ADM
Legendre formulas derived earlier.

This appendix is only a check.  Since the lapse is fixed to unity and all
spatial gradients are suppressed, it cannot establish either the
arbitrary-lapse identity
\begin{equation}
        \{\mathcal C_0(x),H_0[N]\}
        =
        N(x)\mathcal C_1(x)
\end{equation}
of Proposition~\ref{prop:arbitrary-lapse-factorization}, or the exact
absence of derivatives of the delta function in
\(\{\mathcal C_0,\mathcal C_1\}\), proved in
Proposition~\ref{prop:Delta-exact-ultralocal}.

We choose
\begin{equation}
        N=1,
        \qquad
        N^i=0,
\end{equation}
and assume that all spatial fields depend only on time.  Let
\begin{equation}
        \gamma_{ij}
        =
        \operatorname{diag}(g_1,g_2),
        \qquad
        S^{ij}
        =
        \operatorname{diag}(s_1,s_2).
        \label{eq:appendix-diagonal-fields}
\end{equation}
Spatial homogeneity gives
\begin{equation}
        B_i=D_jp^j{}_i=0,
\end{equation}
so the general auxiliary solution
\begin{equation}
        v_*^i
        =
        -\frac2sS^{ij}B_j
\end{equation}
reduces consistently to
\begin{equation}
        v^i=0 .
\end{equation}

The spacetime metric and auxiliary density are therefore
\begin{equation}
        g_{\mu\nu}
        =
        \operatorname{diag}(-1,g_1,g_2),
        \qquad
        P^{\mu\nu}
        =
        \operatorname{diag}(\nu,s_1,s_2).
        \label{eq:appendix-diagonal-spacetime}
\end{equation}

A direct calculation gives
\begin{equation}
        G_{00}
        =
        \frac{\dot g_1\dot g_2}{4g_1g_2},
        \label{eq:G00-diagonal}
\end{equation}
and
\begin{equation}
        G_{11}
        =
        \frac{g_1}{4g_2^2}
        \left(
        \dot g_2^{\,2}
        -
        2g_2\ddot g_2
        \right),
        \label{eq:G11-diagonal}
\end{equation}
\begin{equation}
        G_{22}
        =
        \frac{g_2}{4g_1^2}
        \left(
        \dot g_1^{\,2}
        -
        2g_1\ddot g_1
        \right).
        \label{eq:G22-diagonal}
\end{equation}
Thus \(G_{11}\) contains \(\ddot g_2\), but not \(\ddot g_1\), whereas
\(G_{22}\) contains \(\ddot g_1\), but not \(\ddot g_2\).  This is the
direct component manifestation of the trace subtraction in the spatial
Einstein projection.  Since
\begin{equation}
        K_{ij}
        =
        \frac12
        \operatorname{diag}(\dot g_1,\dot g_2),
\end{equation}
the ADM expression for \(G_{ij}\) used in the main text reproduces
Eqs.~\eqref{eq:G11-diagonal} and \eqref{eq:G22-diagonal}, including all
signs.

Substitution into the polynomial auxiliary density
\eqref{eq:aux-P-action-expanded-pedagogical} gives
\begin{align}
        \mathcal L
        ={}&
        \frac12
        \left(
        -\nu+g_1s_1+g_2s_2
        \right)
        +
        c
        \left(
        \nu G_{00}
        +
        s_1G_{11}
        +
        s_2G_{22}
        \right)
        \nonumber\\
        &+
        \frac12\nu s_1s_2
        -
        \beta\sqrt{g_1g_2}.
        \label{eq:L-diagonal-second-order}
\end{align}
Integrating the terms containing \(\ddot g_a\) by parts and discarding the
resulting total time derivative gives
\begin{align}
        \mathcal L_{\rm ul}
        ={}&
        \frac12(-\nu+g_1s_1+g_2s_2)
        +
        \frac12\nu s_1s_2
        -
        \beta\sqrt{g_1g_2}
        +
        \frac{c\nu}{4g_1g_2}
        \dot g_1\dot g_2
        \nonumber\\
        &+
        \frac c2
        \frac{\dd}{\dd t}
        \left(
        \frac{s_1g_1}{g_2}
        \right)
        \dot g_2
        +
        \frac{cs_1g_1}{4g_2^2}
        \dot g_2^{\,2}
        \nonumber\\
        &+
        \frac c2
        \frac{\dd}{\dd t}
        \left(
        \frac{s_2g_2}{g_1}
        \right)
        \dot g_1
        +
        \frac{cs_2g_2}{4g_1^2}
        \dot g_1^{\,2}.
        \label{eq:L-ultralocal-first-order}
\end{align}

Let \(r_a\) denote the momentum conjugate to \(g_a\), and \(p_a\) the
momentum conjugate to \(s_a\).  The latter satisfy
\begin{equation}
        p_1
        =
        \frac{cg_1}{2g_2}\dot g_2,
        \qquad
        p_2
        =
        \frac{cg_2}{2g_1}\dot g_1.
        \label{eq:p-diagonal-velocities}
\end{equation}
Hence
\begin{equation}
        \dot g_1
        =
        \frac{2g_1p_2}{cg_2},
        \qquad
        \dot g_2
        =
        \frac{2g_2p_1}{cg_1}.
        \label{eq:gdot-diagonal-solved}
\end{equation}
This directly confirms the invertibility of the mixed velocity block for
\(c\neq0\).

The remaining momentum equations,
\begin{equation}
        r_a
        =
        \frac{\partial\mathcal L_{\rm ul}}
             {\partial\dot g_a},
\end{equation}
are linear in \(\dot s_1,\dot s_2\).  Solving them gives
\begin{align}
        \dot s_1
        =
        \frac{1}{cg_1g_2}
        \bigl(
        &-2g_1p_2s_1
        +
        2g_2^2r_2
        +
        2g_2p_1s_1
        -
        2g_2p_2s_2
        -
        \nu p_2
        \bigr),
        \label{eq:s1dot-solved}
\end{align}
and
\begin{align}
        \dot s_2
        =
        \frac{1}{cg_1g_2}
        \bigl(
        &2g_1^2r_1
        -
        2g_1p_1s_1
        +
        2g_1p_2s_2
        -
        2g_2p_1s_2
        -
        \nu p_1
        \bigr).
        \label{eq:s2dot-solved}
\end{align}

The Legendre transform is
\begin{equation}
        \mathcal H_{\rm ul}
        =
        r_1\dot g_1+r_2\dot g_2
        +
        p_1\dot s_1+p_2\dot s_2
        -
        \mathcal L_{\rm ul}.
\end{equation}
After substituting the velocity solutions, the dependence on \(\nu\)
remains linear:
\begin{equation}
        \mathcal H_{\rm ul}
        =
        \mathcal H_0^{\rm ul}
        +
        \nu\mathcal C_0^{\rm ul}.
        \label{eq:H-ultralocal-separation}
\end{equation}
The reduced unit-lapse Hamiltonian density is
\begin{align}
        \mathcal H_0^{\rm ul}
        ={}&
        \beta\sqrt{g_1g_2}
        -
        \frac12(g_1s_1+g_2s_2)
        \nonumber\\
        &+
        \frac{2}{cg_1g_2}
        \left[
        g_1^2p_2r_1
        +
        g_2^2p_1r_2
        -
        g_1p_1p_2s_1
        -
        g_2p_1p_2s_2
        \right.
        \nonumber\\
        &\hspace{3.7cm}\left.
        +
        \frac12g_1p_2^2s_2
        +
        \frac12g_2p_1^2s_1
        \right],
        \label{eq:H0-ultralocal-exact}
\end{align}
while
\begin{equation}
        \mathcal C_0^{\rm ul}
        =
        \frac12(1-s_1s_2)
        -
        \frac{p_1p_2}{cg_1g_2}.
        \label{eq:C0-ultralocal-exact}
\end{equation}

In this homogeneous sector,
\begin{equation}
        B_i=0,
        \qquad
        D_iD_j\mathcal A^{ij}=0,
        \qquad
        {}^{(2)}R=0.
\end{equation}
Direct substitution of the diagonal canonical variables into
Eqs.~\eqref{eq:C0-before-v-elimination} and  \eqref{eq:H0-full-tensor} reproduces
Eqs.~\eqref{eq:H0-ultralocal-exact} and
\eqref{eq:C0-ultralocal-exact}.  This is a nontrivial check because the
calculation above began directly from the spacetime metric rather than
from the ADM Legendre transformation.

The reduced diagonal Poisson bracket is
\begin{equation}
        \{F,G\}_{\rm diag}
        =
        \sum_{a=1}^2
        \left(
        \frac{\partial F}{\partial g_a}
        \frac{\partial G}{\partial r_a}
        -
        \frac{\partial F}{\partial r_a}
        \frac{\partial G}{\partial g_a}
        +
        \frac{\partial F}{\partial s_a}
        \frac{\partial G}{\partial p_a}
        -
        \frac{\partial F}{\partial p_a}
        \frac{\partial G}{\partial s_a}
        \right).
        \label{eq:diagonal-Poisson-bracket}
\end{equation}
At unit lapse define
\begin{equation}
        \mathcal C_1^{\rm ul}
        :=
        \{\mathcal C_0^{\rm ul},
        \mathcal H_0^{\rm ul}\}_{\rm diag}.
\end{equation}
A direct calculation gives
\begin{align}
        \mathcal C_1^{\rm ul}
        ={}&
        \frac{1}{2c^2g_1^2g_2^2}
        \Bigl[
        -2cg_1^3g_2r_1s_1
        +
        2cg_1^2g_2p_1s_1^2
        -
        cg_1^2g_2p_2
        \nonumber\\
        &\hspace{2.2cm}
        -
        2cg_1g_2^3r_2s_2
        -
        cg_1g_2^2p_1
        +
        2cg_1g_2^2p_2s_2^2
        \nonumber\\
        &\hspace{2.2cm}
        +
        2g_1p_1p_2^2
        +
        2g_2p_1^2p_2
        \Bigr].
        \label{eq:C1-ultralocal-exact}
\end{align}
Taking one further bracket,
\begin{equation}
        \Delta_{\rm ul}
        :=
        \{\mathcal C_0^{\rm ul},
        \mathcal C_1^{\rm ul}\}_{\rm diag},
\end{equation}
gives
\begin{align}
        \Delta_{\rm ul}
        ={}&
        -\frac{1}{4c^2g_1^2g_2^2}
        \Bigl[
        2cg_1^2g_2s_1^2s_2
        -
        cg_1^2g_2s_1
        \nonumber\\
        &\hspace{2.2cm}
        +
        2cg_1g_2^2s_1s_2^2
        -
        cg_1g_2^2s_2
        \nonumber\\
        &\hspace{2.2cm}
        +
        4g_1^2p_2r_1
        +
        2g_1p_2^2s_2
        +
        4g_2^2p_1r_2
        +
        2g_2p_1^2s_1
        \Bigr].
        \label{eq:Delta-ultralocal-exact}
\end{align}
Thus the scalar cross-bracket is not identically zero even in this
restricted sector.

There is one qualification.  At a diagonal configuration, reflection of
either spatial basis vector changes the sign of off-diagonal tensor
components while leaving scalar expressions invariant.  Therefore the
first derivative of any scalar invariant with respect to an off-diagonal
component vanishes at the diagonal configuration.  The omitted
off-diagonal canonical pairs consequently do not contribute to the purely
algebraic scalar bracket evaluated here.

This argument does not apply to spatial-derivative terms.  Their absence
from the diagonal homogeneous model is precisely why the exact
ultralocality result of
Proposition~\ref{prop:Delta-exact-ultralocal} was required in the full
theory.

As a simple witness point, take
\begin{equation}
        g_1=g_2=1,
        \qquad
        s_1=s_2=1,
        \qquad
        p_1=p_2=r_1=r_2=0 .
        \label{eq:diagonal-witness-point}
\end{equation}
Then
\begin{equation}
        \mathcal C_0^{\rm ul}=0,
        \qquad
        \mathcal C_1^{\rm ul}=0,
        \qquad
        \Delta_{\rm ul}
        =
        -\frac1{2c}.
        \label{eq:diagonal-witness-bracket}
\end{equation}
For \(\beta=1\),
\begin{equation}
        a=\beta^2=1,
        \qquad
        2c\Lambda=1-a
\end{equation}
implies \(\Lambda=0\), so
Eq.~\eqref{eq:diagonal-witness-point} is the spatial part of the flat
BINMG vacuum.  The result agrees exactly with the general vacuum formula
\begin{equation}
        \mathfrak d_{\rm vac}
        =
        -\frac{\sqrt a}{2c}
        =
        -\frac1{2c}.
\end{equation}
For \(\beta\neq1\), the time-symmetric maximally symmetric vacuum slice is
curved and therefore does not belong to the spatially homogeneous flat
sector considered here.  Its scalar rank is instead given by
Sec.~\ref{sec:vacuum-scalar-rank}.

Thus the direct diagonal calculation independently checks the Legendre
map, the reduced Hamiltonian, the scalar constraint, and the nonvanishing
of the scalar Dirac bracket in the flat-vacuum limit.  The arbitrary-lapse
factorization and exact ultralocality of the full theory remain the
stronger results proved in the main text.

\section{Degenerate and rank-changing cases}
\label{app:excluded-branches}

The degree-of-freedom theorem applies only where the Hamiltonian system has
the regularity and constant-rank properties assumed in its derivation.
It is useful to distinguish genuine rank-changing loci from singular
parameter limits and singular choices of ADM variables.

\begin{enumerate}[label=\textup{(\roman*)},leftmargin=2em]

\item \textit{\(c=0\).}

The mixed block of the dynamical velocity Hessian is proportional to
\begin{equation}
        c=\frac{\sigma}{2m^2}.
\end{equation}
For BINMG with finite \(m^2\) and \(\sigma=\pm1\), one has \(c\neq0\).
The formal limit \(c\to0\), corresponding to \(m^2\to\infty\), is singular
in the canonical variables used in this paper: the block responsible for
the invertibility of the Legendre map disappears.  This limiting theory
must therefore be analyzed separately in variables adapted to that limit.

\item \textit{Singular ADM charts.}

The ADM decomposition assumes a nondegenerate spatial metric,
\begin{equation}
        \det\gamma_{ij}\neq0,
\end{equation}
and a coordinate patch in which
\begin{equation}
        N\neq0 .
\end{equation}
If either condition fails, the normal--tangential decomposition itself is
singular.  In particular, \(N=0\) is not a physical branch condition and
does not signal a change in the number of degrees of freedom.  It simply
lies outside the regular ADM chart used in the analysis and should not be
confused with a dynamical rank-changing condition such as
\(\mathfrak d=0\).

\item \textit{Degenerate auxiliary metric.}

The auxiliary-metric formulation requires
\begin{equation}
        \det q_{\mu\nu}\neq0 .
\end{equation}
If \(q_{\mu\nu}\) degenerates, the inverse \(q^{\mu\nu}\) and hence the
definition
\begin{equation}
        P^{\mu\nu}
        =
        \sqrt{-q}\,q^{\mu\nu}
\end{equation}
are no longer available.  The equivalence between the auxiliary-metric and
determinant formulations used in this paper was established only on the
nondegenerate sector.  We make no claim about configurations with
\(\det q_{\mu\nu}=0\).

\item \textit{Degenerate spatial auxiliary block.}

The equation determining the mixed auxiliary component is most usefully
written in a form that remains meaningful even when \(S^{ij}\) becomes
singular.  Define
\begin{equation}
        M_{ij}
        :=
        \operatorname{adj}(S)_{ij}.
        \label{eq:degenerate-v-Hessian}
\end{equation}
On the regular region \(s=\det S^{ij}\neq0\),
\begin{equation}
        M_{ij}
        =
        s(S^{-1})_{ij},
\end{equation}
and the \(v^i\) equation is
\begin{equation}
        2B_i+M_{ij}v^j=0 .
\end{equation}
In two spatial dimensions,
\begin{equation}
        \det M=s .
        \label{eq:det-v-Hessian}
\end{equation}
Therefore the Hessian in the \(v^i\) sector is invertible precisely when
\begin{equation}
        s=\det S^{ij}\neq0 .
\end{equation}
On this regular region,
\begin{equation}
        v_*^i
        =
        -\frac2sS^{ij}B_j .
\end{equation}

The advantage of writing \(M_{ij}\) as the adjugate is that
Eq.~\eqref{eq:degenerate-v-Hessian} itself remains well defined at
\(s=0\).  There \(M_{ij}\) is singular, however, and \(v^i\) can no longer
be solved for uniquely.  The vector second-class block therefore changes
rank, and the unreduced Dirac system must be analyzed directly.

\item \textit{Vanishing scalar Dirac coefficient.}

On the regular branch,
\begin{equation}
        \{\mathcal C_0(x),\mathcal C_1(y)\}
        =
        \mathfrak d(x)\delta^{(2)}(x-y).
\end{equation}
The theorem assumes
\begin{equation}
        \mathfrak d(x)\neq0 .
\end{equation}
At a point where \(\mathfrak d=0\), the scalar Dirac kernel is no longer
invertible.  Preservation of \(\mathcal C_1\) then need not determine
\(\nu\).  Depending on the remaining brackets, the Dirac algorithm could
produce an additional constraint, acquire an additional gauge degeneracy,
or enter a distinct constraint branch.  The regular analysis does not
decide which possibility is realized.

\end{enumerate}

These cases delimit the theorem proved in the main text.  The statement
\begin{equation}
        N_{\rm dof}=2
\end{equation}
has been established on \(\mathcal U_{\rm vac}\), where the auxiliary
formulation and the ADM decomposition are regular, the dynamical Legendre
map is invertible, and the relevant Dirac matrices have constant rank.
No claim is made about the singular or rank-changing loci listed above.


\begin{thebibliography}{99}



\bibitem{BTZ}
M.~Ba\~nados, C.~Teitelboim, and J.~Zanelli,
``The black hole in three-dimensional space-time,''
Phys. Rev. Lett. \textbf{69}, 1849 (1992).

\bibitem{Witten}
E.~Witten,
``(2+1)-dimensional gravity as an exactly soluble system,''
Nucl. Phys. B \textbf{311}, 46 (1988).

\bibitem{DeserJackiwTempleton}
S.~Deser, R.~Jackiw, and S.~Templeton,
``Topologically massive gauge theories,''
Ann. Phys. (N.Y.) \textbf{140}, 372 (1982).

\bibitem{BHT2009}
E.~A.~Bergshoeff, O.~Hohm, and P.~K.~Townsend,
``Massive gravity in three dimensions,''
Phys. Rev. Lett. \textbf{102}, 201301 (2009).

\bibitem{BHTMore}
E.~A.~Bergshoeff, O.~Hohm, and P.~K.~Townsend,
``More on massive 3D gravity,''
Phys. Rev. D \textbf{79}, 124042 (2009).

\bibitem{GulluTek}
\.I.~G\"ull\"u and B.~Tekin,
``Massive higher derivative gravity in $D$-dimensional anti-de Sitter
spacetimes,''
Phys. Rev. D \textbf{80}, 064033 (2009).

\bibitem{GSTCanonical}
\.I.~G\"ull\"u, T.~\c{C}.~\c{S}i\c{s}man, and B.~Tekin,
``Canonical structure of higher derivative gravity in 3D,''
Phys. Rev. D \textbf{81}, 104017 (2010).

\bibitem{Tekin2016}
B.~Tekin,
``Particle content of quadratic and
$f(R_{\mu\nu\sigma\rho})$ theories in $(A)dS$,''
Phys. Rev. D \textbf{93}, 101502(R) (2016).

\bibitem{BoulwareDeser}
D.~G.~Boulware and S.~Deser,
``Can gravitation have a finite range?''
Phys. Rev. D \textbf{6}, 3368 (1972).

\bibitem{MandM}
S.~Deser and B.~Tekin,
``Massive, topologically massive, models,''
Class. Quantum Grav. \textbf{19}, L97 (2002).

\bibitem{GSTBINMG}
\.I.~G\"ull\"u, T.~\c{C}.~\c{S}i\c{s}man, and B.~Tekin,
``Born--Infeld extension of new massive gravity,''
Class. Quantum Grav. \textbf{27}, 162001 (2010).

\bibitem{Sinha}
A.~Sinha,
``On the new massive gravity and AdS/CFT,''
JHEP \textbf{06}, 061 (2010).

\bibitem{Gullu:2010st}
\.I.~G\"ull\"u, T.~\c{C}.~\c{S}i\c{s}man, and B.~Tekin,
``$c$-functions in the Born--Infeld extended new massive gravity,''
Phys. Rev. D \textbf{82}, 024032 (2010).

\bibitem{Paulos0}
M.~F.~Paulos,
``New massive gravity extended with an arbitrary number of curvature
corrections,''
Phys. Rev. D \textbf{82}, 084042 (2010).

\bibitem{TahsinTez}
T.~\c{C}.~\c{S}i\c{s}man,
\textit{Born-Infeld Gravity Theories in $D$ Dimensions},
Ph.D. thesis, Middle East Technical University, Ankara (2012).

\bibitem{Dirac}
P.~A.~M.~Dirac,
\textit{Lectures on Quantum Mechanics}
(Yeshiva University, New York, 1964).

\bibitem{Castellani}
L.~Castellani,
``Symmetries in constrained Hamiltonian systems,''
Ann. Phys. (N.Y.) \textbf{143}, 357 (1982).

\bibitem{HenneauxTeitelboim}
M.~Henneaux and C.~Teitelboim,
\textit{Quantization of Gauge Systems}
(Princeton University Press, Princeton, 1992).

\bibitem{BlagojevicCvetkovic}
M.~Blagojevi\'c and B.~Cvetkovi\'c,
``Hamiltonian analysis of BHT massive gravity,''
JHEP \textbf{01}, 082 (2011).

\bibitem{SadeghShirzad}
M.~Sadegh and A.~Shirzad,
``Constraint structure of the three-dimensional massive gravity,''
Phys. Rev. D \textbf{83}, 084040 (2011).

\bibitem{HohmRouthTownsendZhang}
O.~Hohm, A.~Routh, P.~K.~Townsend, and B.~Zhang,
``On the Hamiltonian form of 3D massive gravity,''
Phys. Rev. D \textbf{86}, 084035 (2012).

\bibitem{CSlike}
E.~A.~Bergshoeff, O.~Hohm, W.~Merbis, A.~J.~Routh, and P.~K.~Townsend,
``Chern--Simons-like gravity theories,''
Lect. Notes Phys. \textbf{892}, 181 (2015).

\bibitem{AfsharBergshoeffMerbis}
H.~R.~Afshar, E.~A.~Bergshoeff, and W.~Merbis,
``Extended massive gravity in three dimensions,''
JHEP \textbf{08}, 115 (2014).

\bibitem{PaulosTolley}
M.~F.~Paulos and A.~J.~Tolley,
``Massive gravity theories and limits of ghost-free bigravity models,''
JHEP \textbf{09}, 002 (2012).

\bibitem{deRhamNMG}
C.~de Rham, G.~Gabadadze, D.~Pirtskhalava, A.~J.~Tolley, and I.~Yavin,
``Nonlinear dynamics of 3D massive gravity,''
JHEP \textbf{06}, 028 (2011).

\bibitem{TekinAuxMetric2026}
B.~Tekin,
``Auxiliary-metric formulation of Born-Infeld new massive gravity,''
Phys. Rev. D \textbf{114}, 044025 (2026).

\bibitem{ADM}
R.~Arnowitt, S.~Deser, and C.~W.~Misner,
``The dynamics of general relativity,''
in \textit{Gravitation: An Introduction to Current Research},
edited by L.~Witten
(Wiley, New York, 1962), Chap.~7;
reprinted in Gen. Relativ. Gravit. \textbf{40}, 1997 (2008).

\bibitem{Gourgoulhon}
E.~Gourgoulhon,
``3+1 formalism and bases of numerical relativity,''
arXiv:gr-qc/0703035.

\bibitem{DogruThesis}
M.~Do\u{g}ru,
\textit{ADM Formulation of Generic Massless Spin-2 Gravity},
M.Sc. thesis, Middle East Technical University, Ankara (2018).

\bibitem{non}
E.~Altas and B.~Tekin,
``Nonstationary energy in general relativity,''
Phys. Rev. D \textbf{101}, 024035 (2020).

\bibitem{DeserTekin}
S.~Deser and B.~Tekin,
``Energy in generic higher curvature gravity theories,''
Phys. Rev. D \textbf{67}, 084009 (2003).

\end{thebibliography}
\end{document}